\newtheorem{thm}{Theorem}
\newtheorem{lem}{Lemma}
\newtheorem{conj}{Conjecture}
\newtheorem{cor}{Corollary}
\newtheorem{prop}{Proposition}
\newtheorem{defn}{Definition}
\newtheorem{rem}{Remark}
\DeclareMathOperator{\D}{\mathcal J}
\DeclareMathOperator{\jac}{jac}
\DeclareMathOperator{\HS}{\sf HS}
\DeclareMathOperator{\wHS}{\sf wHS}
\DeclareMathOperator{\mon}{Mon}
\DeclareMathOperator{\dreg}{\mathbb D_{reg}}
\DeclareMathOperator{\wdeg}{wdeg}
\DeclareMathOperator{\NF}{\sf NF}
\DeclareMathOperator{\LC}{\sf LC}
\DeclareMathOperator{\LM}{\sf LM}
\DeclareMathOperator{\Spol}{\sf Spol}
\DeclareMathOperator{\DEG}{\sf DEG}
\DeclareMathOperator{\Compl}{\sf Compl}
\begin{document}

\title{On the Complexity of the Generalized MinRank Problem}
 \author[$\dag$]{Jean-Charles Faug\`ere}
 \author[$\dag$]{Mohab Safey El Din}
 \author[*,$\dag$]{Pierre-Jean Spaenlehauer}

\affil[$\dag$]{Universit\'e Paris 6, 
INRIA Paris-Rocquencourt, PolSys Project, 
CNRS, UMR 7606
UFR Ing\'enierie 919, LIP6.\\
Case 169. 4, Place Jussieu, F-75252 Paris, France.}
\affil[*]{Computer Science Department, University of Western Ontario, London, ON, Canada.}

\date{}

\maketitle

\begin{abstract}
  We study the complexity of solving the \emph{generalized MinRank
    problem}, i.e. computing the set of points where the evaluation of
  a polynomial matrix has rank at most $r$. A natural algebraic
  representation of this problem gives rise to a \emph{determinantal
    ideal}: the ideal generated by all minors of size $r+1$ of the
  matrix. We give new complexity bounds for solving this problem using
  Gr\"obner bases algorithms under genericity assumptions on the input
  matrix. In particular, these complexity bounds allow us to identify
  families of generalized MinRank problems for which the arithmetic
  complexity of the solving process is polynomial in the number of
  solutions. We also provide an algorithm to compute a rational
  parametrization of the variety of a $0$-dimensional and radical
  system of bi-degree $(D,1)$. We show that its complexity can be
  bounded by using the complexity bounds for the generalized MinRank
  problem.
\end{abstract}

{\bf Keywords: }MinRank, Gr\"obner basis, determinantal, bi-homogeneous, structured algebraic systems.

\section{Introduction}

We focus in this paper on the following problem:

\smallskip

\noindent \textbf{Generalized MinRank Problem}: given a field $\mathbb
K$, a $n\times m$ matrix $\mathcal M$ whose entries are polynomials of
degree $D$ in $\mathbb K[x_1,\ldots, x_k]$, and $r<\min(n,m)$ an
integer, compute the set of points at which the evaluation of
$\mathcal M$ has rank at most $r$.

\smallskip

This problem arises in many applications and this is what motivates
our study. In cryptology, the security of several multivariate
cryptosystems relies on the difficulty of solving the classical
MinRank problem (i.e. when the entries of the matrix are linear
\cite{KipSha99,FauLevPer08, BFP12}). In coding theory, rank-metric
codes can be decoded by computing the set of points where a polynomial
matrix has rank less than a given value
\cite{OurJoh02,FauLevPer08}. In non-linear computational geometry,
many incidence problems from enumerative geometry can be expressed by
constraints on the rank of a matrix whose entries are polynomials of
degree frequently larger than $1$ (see
e.g. \cite{macdonald2001common, sottile2003enumerative,
  sottile2002enumerative}). Also, in real geometry, optimization and
quantifier elimination \cite{safey2003polar, BanGiuHeiSafSch10,
  greuet2011global, HoSa12} the critical points of a map are defined
by the rank defect of its Jacobian matrix (whose entries have degrees
larger than $1$ most of the time in applications). Moreover, this
problem is also underlying other problems from symbolic computation
(for instance solving multi-homogeneous systems, see
e.g. \cite{FauSafSpa11}).

The ubiquity of this problem makes the development of algorithms
solving it and complexity estimates of first importance.  When
$\mathbb K$ is finite, the generalized MinRank problem is known to be
NP-complete \cite{BusFraSha99}; thus one can consider this problem as
a hard problem.

\smallskip

To study the Generalized MinRank problem, we consider the algebraic
system of all the $(r+1)$-minors of the input matrix. Indeed, these
minors simultaneously vanish on the locus of rank defect and hence
give rise to a section of a \emph{determinantal ideal}.

\smallskip

Several solving tools can be used to solve this algebraic system by
taking profit of the underlying structure. For instance, the \emph{geometric
resolution} in \cite{GiuLecSal01} can use the fact that these systems can
be evaluated efficiently. Also, recent works on homotopy methods \cite{Ver99} show
that numerical algorithms can solve determinantal
problems.

\smallskip

In this paper, we focus on Gr\"obner bases algorithms. 
A representation of the locus of rank defect is obtained by computing
a lexicographical Gr\"obner basis by using the algorithms $F_5$
\cite{Fau02} and FGLM \cite{FauGiaLazMor93}. 
Indeed, experiments suggest that these algorithms take profit of the
determinantal structure. The aim of this work is to give an explanation
of this behavior from the viewpoint of asymptotic complexity analysis.

\subsubsection*{Related works}
An important related theoretical issue is to understand the algebraic
structure of the ideal $\D_r\subset \mathbb K[U]$ (where $U$
is the set of variables $\{u_{1,1},\ldots, u_{n,m}\}$) generated by
the $(r+1)$-minors of the matrix:
$$\mathcal U=\begin{pmatrix}
u_{1,1}&\dots&u_{1,m}\\
\vdots&\ddots&\vdots\\
u_{n,1}&\dots&u_{n,m}\\
\end{pmatrix}.$$ The ideal $\D_r$ has been extensively studied
during last decades. In particular, explicit formulas for its
degree and for its Hilbert series are known (see e.g. \cite[Example
14.4.14]{Ful97} and \cite{ConHer94}), as well as structural properties
such as Cohen-Macaulayness and primality \cite{HocEag70,HocEag71}.

In cryptology, \cite{KipSha99} have proposed a multi-homogeneous
algebraic modeling which can be seen as a generalization of the
Lagrange multipliers  and is designed as follows: a
polynomial $n\times m$ matrix $\mathcal M\in \mathbb K[X]^{n\times m}$
(where $X$ denotes the set of variables $\{x_1,\ldots, x_k\}$) has
rank at most $r$ if and only if the dimension of its right kernel
is greater than $m-r-1$. Consequently, by introducing $r(m-r)$ fresh
variables $y_{1,1},\ldots, y_{r,m-r}$, we can consider the system of
bi-degree $(D,1)$ in $\mathbb K[x_1,\ldots, x_k,y_{1,1},\ldots,
y_{r,m-r}]$ defined by

$$\mathcal M\cdot \begin{pmatrix}
1&0&\dots&0\\
0&1&\dots&0\\
\vdots&\ddots&\ddots&\vdots\\
0&0&\dots&1\\
y_{1,1}&y_{1,2}&\dots&y_{1,m-r}\\
\vdots&\vdots&\ddots&\vdots\\
y_{r,1}&y_{r,2}&\dots&y_{r,m-r}
\end{pmatrix} = 0.$$

If $(x_1,\ldots,x_k,y_{1,1},\ldots,y_{r, m-r})$ is a solution of that
system, then the evaluation of the matrix $\mathcal M$ at the point
$(x_1,\ldots, x_k)$ has rank at most $r$.

In \cite{FauSafSpa10a}, the case of square linear matrices is studied
by performing a complexity analysis of the Gr\"obner bases
computations. In particular, this
investigation showed that the overall complexity is polynomial in the
size of the matrix when the rank defect $n-r$ is constant. This
theoretical analysis is supported by experimental results.  The proofs
were complete when the system has positive dimension, but depended on
a variant of a conjecture by Fr\"oberg in the $0$-dimensional case.

\subsubsection*{Main results}
We generalize in several ways the results from \cite{FauSafSpa10a}
where only the case of square linear matrices was investigated: our
contributions are the following.
\begin{itemize}
\item We deal with non-square matrices whose entries are polynomials
  of degree $D$ with generic coefficients; this is achieved by using
  more general tools than those considered in \cite{FauSafSpa10a}
  (weighted Hilbert series).  This generalization is important for
  applications in geometry and optimization for instance.
\item When $n=(p-r) (q-r)$, the solution set of the generalized MinRank problem
   has dimension $0$. In that case, our proofs in this paper do not rely on
   Fr\"oberg's conjecture; this has been achieved by modifying our
   proof techniques and using more sophisticated and structural   properties of determinantal ideals. This is important for
   applications in cryptology (see e.g. the sets of parameters A, B and C in the MinRank authentication scheme \cite{Cou01}). 
\end{itemize}

Our results are complexity bounds for Gr\"obner bases algorithms when the
input system is the set of $(r+1)$-minors of a $n\times m$ matrix
$\mathcal M$, whose entries are polynomials of degree $D$ with generic
coefficients.

By generic, we mean that there exists a non-identically
null multivariate polynomial $h$ such that the complexity results hold
when this polynomial does not vanish on the coefficients of the
polynomials in the matrix. Therefore, from a practical viewpoint, the
complexity bounds can be used for applications where the base field
$\mathbb K$ is large enough: in that case, the probability that the
coefficients of $\mathcal M$ do not belong to the zero set of $h$ is
close to $1$.

We start by studying the homogeneous generalized MinRank problem
(i.e. when the entries of $\mathcal M$ are homogeneous polynomials)
and by proving an explicit formula for the Hilbert series of the ideal
$\mathcal I_r$ generated by the $(r+1)$-minors of the matrix $\mathcal
M$.  The general framework of the proofs is the following: we consider
the ideal $\D_r\subset \mathbb K[U]$ generated by the
$(r+1)$-minors of a matrix $\mathcal U=(u_{i,j})$ whose entries are
variables. Then we consider the ideal $\widetilde{\D_r}=\D_r+\langle g_1,\ldots,g_{nm}\rangle\subset\mathbb
K[U,X]$, where the polynomials $g_i$ are quasi-homogeneous forms that
are the sum of a linear form in $\mathbb K[U]$ and of a homogeneous
polynomial of degree $D$ in $\mathbb K[X]$. If some conditions on the
$g_i$ are verified, by performing a linear combination of the
generators there exists $f_{1,1},\ldots, f_{n,m}\in \mathbb K[X]$ such
that $$\widetilde{\D_r} =\D_r+\langle
u_{1,1}-f_{1,1},\ldots, u_{n,m}-f_{n,m}\rangle.$$ Then we use the fact
that $\left(\D_r+\langle u_{1,1}-f_{1,1},\ldots,
  u_{n,m}-f_{n,m}\rangle\right)\cap \mathbb K[X] = \mathcal I_r$ to
prove that properties of generic quasi-homogeneous sections of
$\D_r$ transfer to $\mathcal I_r$ when the entries of the
matrix $\mathcal M$ are generic. This allows us to use results known
about the ideal $\D_r$ to study the algebraic structure of
$\mathcal I_r$.

\smallskip

We study separately three different cases:
\begin{itemize}
\item $k>(n-r)(m-r)$. Under genericity assumptions on the input, the
  solutions of the generalized MinRank problem are an algebraic
  variety of positive dimension. Recall that the complexity results
  were only proven for $D=1$ and $n=m$ in \cite{FauSafSpa10a}. We
  generalize here for any $D\in \mathbb N$.
\item $k=(n-r)(m-r)$. This is the $0-dimensional$ case, where the
  problem has finitely-many solutions under genericity
  assumptions. Recall that the results in \cite{FauSafSpa10a} were
  only stated for $D=1$ and $n=m$, and they depended on a variant of
  Fr\"oberg's conjecture. In this paper, we give complete proofs for
  $D\in \mathbb N$ which do not rely on any conjecture.
\item $k<(n-r)(m-r)$. In the over-determined case, we still need to
  assume a variant of Fr\"oberg's conjecture to generalize the results
  in \cite{FauSafSpa10a}.
\end{itemize}

In particular, we prove that, for $k\geq (n-r)(m-r)$, the Hilbert
series of $\mathcal I_r$ is the power series
expansion of the rational function
$$\HS_{\mathcal I_r}(t)=\frac{\det A_r(t^D) (1-t^D)^{(n-r)(m-r)}}{t^{D\binom{r}{2}}
  (1-t)^{k}},$$ where $A_r(t)$ is the $r\times
r$ matrix whose $(i,j)$-entry is $\sum_k \binom{m-i}{k} \binom{n-j}{k}
t^k$.  Assuming w.l.o.g. that $m\leq n$, we also prove that the degree
of $\mathcal I_r$ is equal to
$$\DEG(\mathcal I_r)=D^{(n-r)(m-r)} \prod_{i=0}^{m-r-1}\frac{i! (n+i)!}{(m-1-i)! (n-r+i)!}.$$

These explicit formulas permit to derive complexity bounds on the
complexity of the problem. Indeed, one way to get a representation of
the solutions of the problem in the $0$-dimensional case is to compute
a \emph{lexicographical} Gr\"obner basis of the ideal generated by the
polynomials. This can be achieved by using first the $F_5$ algorithm
\cite{Fau02} to compute a Gr\"obner basis for the so-called
\emph{grevlex} ordering and then use the FGLM algorithm
\cite{FauGiaLazMor93} to convert it into a \emph{lexicographical}
Gr\"obner basis. The complexities of these algorithms are governed by
the degree of regularity and by the degree of the ideal.

Therefore the theoretical results on the structure of $\mathcal I_r$
yield bounds on the complexity of solving the generalized MinRank
problem with Gr\"obner bases algorithms. More specifically, when
$k=(n-r)(m-r)$ and under genericity assumptions on the input
polynomial matrix, we prove that the arithmetic complexity for
computing a lexicographical Gr\"obner basis of $\mathcal I_r$ is upper bounded by
$$O\left(\binom{n}{r+1}\binom{m}{r+1}\binom{\dreg+k}{k}^\omega + k\left(\DEG\left(\mathcal I_r\right)\right)^3\right),$$
where $2\leq\omega\leq 3$ is a feasible exponent for the matrix multiplication, and
$$\dreg=D r (m-r) + (D-1) k +1.$$

This complexity bound permits to identify families of Generalized
MinRank problems for which the number of arithmetic operations during
the Gr\"obner basis computations is polynomial in the number of
solutions.

In the over-determined case (i.e. $k<(n-r)(m-r)$), we obtain
similar complexity results, by assuming a variant of Fr\"oberg's
conjecture which is supported by experiments.

Finally, we show that complexity bounds for solving systems of
bi-degree $(D,1)$ can be obtained from these results on the
generalized MinRank problem. We give an algorithm whose arithmetic complexity is upper
bounded by
$$O\left(\binom{n_x+n_y}{n_y+1}\binom{D(n_x+n_y)+1}{n_x}^\omega + n_x\left(D^{n_x}\binom{n_x+n_y}{n_x}\right)^3\right),$$
for solving systems of $n_x+n_y$ equations of bi-degree $(D,1)$ in
$\mathbb K[x_1,\ldots, x_{n_x},y_1,\ldots, y_{n_y}]$ which are radical and $0$-dimensional.

\subsubsection*{Organization of the paper}
Section \ref{sec:notations} provides notations used throughout this
paper and preliminary results. In Section \ref{sec:transfer}, we show
how properties of the ideal $\D_r$ generated by the $(r+1)$-minors of
$\mathcal U$ transfer to the ideal $\mathcal I_r$. Then, the case when
the homogeneous Generalized MinRank Problem has non-trivial solutions
(under genericity assumptions) is studied in Section
\ref{sec:welldef}. Section \ref{sec:overdef} is devoted to the study
of the over-determined MinRank Problem (i.e. when
$k<(n-r)(m-r)$). Then, the complexity analysis is performed in
Section~\ref{sec:compl}. Some consequences of this complexity analysis
are drawn in Section \ref{sec:cases}. Experimental results are given
in Section \ref{sec:expe} and applications to the complexity of
solving bi-homogeneous systems of bi-degree $(D,1)$ are investigated
in Section \ref{sec:bihomogeneous}.

\section{Notations and preliminaries}
\label{sec:notations}
Let $\mathbb K$ be a field and $\overline{\mathbb K}$ be its algebraic closure.
In the sequel, $n$, $m$, $r$ and $k$ and $D$ are positive integers
with $r<m\leq n$. For $d\in \mathbb N$, $\mon(d,k)$ denotes the set of
monomials of degree $d$ in the polynomial ring $\mathbb K[x_1,\ldots,
x_k]$. Its cardinality is $\#\mon(d,k)=\binom{d-1+k}{d}.$ 

We denote by $\mathfrak a$ the set of parameters $\{\mathfrak
a_{t}^{(i,j)} : 1\leq i\leq n, 1\leq j\leq m, t\in \mon(D,
k)\}$. The set of variables $\{u_{i,j} : 1\leq i\leq n, 1\leq j\leq
m\}$ (resp. $\{x_1,\ldots, x_k\}$) is denoted by $U$ (resp. $X$).

For $1\leq i\leq n, 1\leq j\leq m$, we denote by $f_{i,j}\in \mathbb
K(\mathfrak a)[X]$ a generic form of degree $D$
$$f_{i,j} = \sum_{t\in \mon(D,k)} \mathfrak a_{t}^{(i,j)} t.$$

Let $\mathcal I_r\subset\mathbb K(\mathfrak a)[X]$ be the ideal
generated by the $(r+1)$-minors of the $n\times m$ matrix
$$\mathcal M=\begin{pmatrix}
  f_{1,1}&\dots&f_{1,m}\\ \vdots&\ddots&\vdots\\ f_{n,1}&\dots&f_{n,m}
\end{pmatrix},$$
and $\D_r\subset\mathbb K(\mathfrak a)[U,X]$ be the
determinantal ideal generated by the $(r+1)$-minors of the matrix
$$\mathcal U=\begin{pmatrix}
u_{1,1}&\dots&u_{1,m}\\
\vdots&\ddots&\vdots\\
u_{n,1}&\dots&u_{n,m}\\
\end{pmatrix}.$$

We define $\widetilde{\mathcal I_r}$ as the ideal $\D_r +
\langle u_{i,j}-f_{i,j}\rangle_{1\leq i\leq n,1\leq j\leq m} \subset
\mathbb K(\mathfrak a)[U,X]$. Notice that $\widetilde{\mathcal I_r}=\mathcal I_r + \langle u_{i,j}-f_{i,j}\rangle_{1\leq
  i\leq n,1\leq j\leq m} \subset \mathbb K(\mathfrak
a)[U,X]$. Therefore, $\mathcal I_r=\widetilde{\mathcal
  I_r}\cap\mathbb K(\mathfrak a)[X]$.

\smallskip

By slight abuse of notation, if $I$ is a proper homogeneous ideal of a polynomial ring $\mathbb K[X]$, we call \emph{Hilbert series} of $I$ and we note $\HS_I\in\mathbb Z[[t]]$ the Hilbert series of its quotient algebra $\mathbb K[X]/I$ with the grading defined by $\deg(x_i)=1$ for all~$i$:
$$\HS_I(t)=\sum_{d\geq 0} \dim_{\mathbb K}\left(\mathbb K[X]_d/I_d\right)t^d,$$
where $\mathbb K[X]_d$ denotes the vector space of homogeneous polynomials of degree $d$ and $I_d=I\cap \mathbb K[X]_d$. 

We call \emph{dimension} of $I$ the Krull dimension of the quotient ring $\mathbb K[X]/I$.

\subsubsection*{Quasi-homogeneous polynomials.}
We need to balance the degrees of the entries of the matrix $\mathcal
U$ with the degrees of the entries of $\mathcal M$. This can be
achieved by putting a \emph{weight} on the variables $u_{i,j}$, giving
rise to \emph{quasi-homogeneous} polynomials. A polynomial $f\in
\mathbb K[U,X]$ is called \emph{quasi-homogeneous} (of type $(D,1)$)
if the following condition holds (see e.g. \cite[Definition 2.11, page
120]{GreLosShu07}):
$$f(\lambda^D u_{1,1},\ldots,\lambda^D u_{n,m},\lambda x_1,\ldots, \lambda x_k)=\lambda^d f(u_{1,1},\ldots,u_{n,m},x_1,\ldots, x_k).$$
The integer $d$ is called the weight degree of $f$ and denoted by
$\wdeg(f)$.

An ideal $I\subset\mathbb K[U,X]$ is called quasi-homogeneous (of type
$(D,1)$) if there exists a set of quasi-homogeneous generators. In
this case, we denote by $\mathbb K[U,X]_d$ the $\mathbb K$-vector
space of quasi-homogeneous polynomials of weight degree $d$, and $I_d$
denote the set $\mathbb K[U,X]_d\cap I$. 

\begin{prop}
Let $I\subset\mathbb K[U,X]$ be an ideal.  Then the following statements are equivalent:
\begin{enumerate}
\item there exists a set of quasi-homogeneous generators of $I$;
\item the sets $I_d$ are subspaces of $\mathbb K[U,X]_d$, and $I=\bigoplus_{d\in \mathbb N} I_d$.
\end{enumerate}
\end{prop}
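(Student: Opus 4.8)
The plan is to mirror the classical argument that an ideal is homogeneous precisely when it admits a homogeneous generating set, transported to the weighted grading of type $(D,1)$. The structural fact underpinning everything is that assigning to a monomial $u^\alpha x^\beta$ the weight degree $D\lvert\alpha\rvert+\lvert\beta\rvert\in\mathbb N$ induces a decomposition $\mathbb K[U,X]=\bigoplus_{d\geq 0}\mathbb K[U,X]_d$ of $\mathbb K$-vector spaces, where $\mathbb K[U,X]_d$ is the span of the monomials of weight degree $d$; one checks that this agrees with the earlier definition, i.e.\ $f\in\mathbb K[U,X]_d$ iff $f(\lambda^D u,\lambda x)=\lambda^d f(u,x)$. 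Consequently every $f$ has a unique expansion $f=\sum_d f_{[d]}$ into weight-homogeneous components $f_{[d]}\in\mathbb K[U,X]_d$, only finitely many nonzero. The one point that actually uses the type $(D,1)$ beyond notation, and the only step deserving care, is that this grading is multiplicative: $\mathbb K[U,X]_d\cdot\mathbb K[U,X]_e\subseteq\mathbb K[U,X]_{d+e}$, which follows at once from additivity of the weight degree under multiplication of monomials.

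For the implication $(2)\Rightarrow(1)$, suppose $I=\bigoplus_d I_d$ with each $I_d\subseteq\mathbb K[U,X]_d$. Then $\bigcup_d I_d$ is a set of quasi-homogeneous elements which generates $I$, since any $f\in I$ equals the finite sum of its components $f_{[d]}$, each lying in $I_d$ by hypothesis. If a finite generating set is wanted, one invokes Noetherianity together with the (soon to be proved) direction $(1)\Rightarrow(2)$ to extract the weight-homogeneous components of finitely many generators.

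For $(1)\Rightarrow(2)$, let $h_1,\dots,h_s$ be quasi-homogeneous generators with $\wdeg(h_i)=d_i$. Given $f\in I$, write $f=\sum_{i=1}^s g_i h_i$ and expand each $g_i=\sum_e g_{i,[e]}$ into weight-homogeneous components. By multiplicativity, $g_{i,[e]}h_i\in\mathbb K[U,X]_{e+d_i}$, so grouping the double sum $f=\sum_i\sum_e g_{i,[e]}h_i$ by weight degree and using uniqueness of the weight-homogeneous decomposition yields $f_{[d]}=\sum_{i:\,d\geq d_i} g_{i,[d-d_i]}h_i$, which is manifestly in $I$. Hence $f_{[d]}\in I\cap\mathbb K[U,X]_d=I_d$ for every $d$, and $f=\sum_d f_{[d]}\in\sum_d I_d$. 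Since each $I_d\subseteq I$ and the sum is direct inside $\mathbb K[U,X]=\bigoplus_d\mathbb K[U,X]_d$, we conclude $I=\bigoplus_d I_d$; that each $I_d$ is a $\mathbb K$-subspace of $\mathbb K[U,X]_d$ is immediate, being the intersection of the ideal $I$ with the subspace $\mathbb K[U,X]_d$. The argument is routine throughout, the only genuine verification being the additivity of the weight degree on products that makes the weighted decomposition a ring grading.
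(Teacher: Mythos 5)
Your proof is correct. The paper does not give an argument of its own here (it simply cites \cite[Chapter 8]{MilStu05}), and what you have written is precisely the standard argument one finds there: the weight function $u^\alpha x^\beta\mapsto D\lvert\alpha\rvert+\lvert\beta\rvert$ makes $\mathbb K[U,X]$ a graded ring, and the classical equivalence ``graded generators $\Leftrightarrow$ the ideal contains the graded components of each of its elements'' goes through verbatim, with the multiplicativity of the grading being, as you note, the only point to check.
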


\begin{proof} See e.g. \cite[Chapter 8]{MilStu05}.
\end{proof}

If $I$ is a quasi-homogeneous ideal, then its \emph{weighted Hilbert
series} $\wHS_I(t)\in \mathbb Z[[t]]$ is defined as follows:
$$\wHS_I(t)=\sum_{d\in \mathbb N} \dim(\mathbb K[U,X]_d/I_d) t^d.$$

\section{Transferring properties from $\D_r$ to $\mathcal I_r$}
\label{sec:transfer}
In this section, we prove that generic structural properties (such as
the dimension, the structure of the leading monomial ideal,\ldots) of
the ideal $\widetilde{\mathcal I_r}$ are the same as properties of the
ideal $\D_r$ where several generic forms have been
added. Hence several classical properties of the determinantal ideal
$\D_r$ transfer to the ideal $\widetilde{\mathcal I_r}$. For
instance, this technique permits to obtain explicit forms of the
Hilbert series of the ideal $\widetilde{\mathcal I_r}$.

In the following, we denote by $\mathfrak b$ and $\mathfrak c$ the following sets of parameters:
$$\begin{array}{rcl}
\mathfrak b&=&\{\mathfrak b^{(\ell)}_{t} \mid t\in \mon(D,k), 1\leq\ell\leq nm\};\\
\mathfrak c&=&\{\mathfrak c^{(\ell)}_{i,j} \mid 1\leq i\leq n, 1\leq j\leq m, 1\leq\ell\leq nm\}.
\end{array}$$

Also, $g_1,\ldots, g_{nm}\in\mathbb
K(\mathfrak b,\mathfrak c)[U,X]$ are generic quasi-homogeneous forms of type $(D,1)$ and of
weight degree $D$:
$$g_\ell=\sum_{t\in\mon(D,k)}\mathfrak b_{t}^{(\ell)} t + \sum_{\substack{1\leq i\leq n\\1\leq j\leq m}}\mathfrak c_{i,j}^{(\ell)} u_{i,j}.$$

We let $\widetilde{\D_r}$ denote the ideal $\D_r +
\langle g_1,\ldots, g_{nm}\rangle \subset \mathbb K(\mathfrak
b,\mathfrak c)[U,X]$.  Here and subsequently, for $\mathbf
a=(a_{i,j})\in \overline{\mathbb K}^{n m\binom{D-1+k}{D}}$, we denote by
$\varphi_{\mathbf a}$ the following evaluation morphism:
$$\begin{array}{cccc}
\varphi_{\bf a} :&\mathbb K[\mathfrak a]&\longrightarrow&\overline{\mathbb K}\\
&f(\mathfrak a_{1,1},\ldots, \mathfrak a_{n,m})&\longmapsto&f( a_{1,1},\ldots, a_{n,m})
\end{array}$$
Also, for $(\mathbf b, \mathbf c)\in \overline{\mathbb K}^{nm\left(\binom{D-1+k}{D}+nm\right)}$, we denote by $\psi_{\mathbf b,\mathbf c}$ the evaluation morphism:
$$\begin{array}{cccc}
\psi_{\mathbf b,\mathbf c} :&\mathbb K[\mathfrak b,\mathfrak c]&\longrightarrow&\overline{\mathbb K}\\
&f(\mathfrak b,\mathfrak c)&\longmapsto&f(\mathbf b,\mathbf c)
\end{array}$$

By abuse of notation, we let $\varphi_{\mathbf
  a}(\widetilde{\mathcal I_r})$ (resp. $\psi_{\mathbf b, \mathbf
  c}(\widetilde{\D_r})$) denote the ideal $\D_r+\langle u_{i,j}-\varphi_{\bf a}(f_{i,j}) \rangle\subset \overline{\mathbb
K}[U,X]$ (resp. $\D_r+\left\langle \psi_{\mathbf b, \mathbf
  c}(g_1),\ldots, \psi_{\mathbf b, \mathbf c}(g_{nm})\right\rangle\subset\overline{\mathbb K}[U,X]$).

We call \emph{property} a map from the set of ideals of $\overline{\mathbb
K}[U,X]$ to $\{{\tt true}, {\tt
  false}\}$: $$\begin{array}{rrcl}\mathcal P:&\mathsf{Ideals}(\overline{\mathbb
  K}[U,X])&\rightarrow&\{{\tt true},{\tt false}\}\end{array}.$$

\begin{defn}
Let  $\mathcal P$ be a property. We say that $\mathcal P$ is 
\begin{itemize}
\item $\widetilde{\mathcal I_r}$-generic if there exists a non-empty Zariski open subset $O\subset \overline{\mathbb K}^{n m\binom{D-1+k}{D}}$ such that
  $$\mathbf a\in O\Rightarrow \mathcal P\left(\varphi_{\mathbf a}\left(\widetilde{\mathcal
      I_r}\right)\right)={\tt true};$$
\item $\widetilde{\D_r}$-generic if there exists a non-empty Zariski open subset $O\subset\overline{\mathbb
    K}^{nm\left(\binom{D-1+k}{D}+nm\right)}$ such that 
$$(\mathbf b,\mathbf c)\in O\Rightarrow \mathcal P\left(\psi_{\mathbf b,\mathbf
      c}\left(\widetilde{\D_r}\right)\right)={\tt true}.$$
\end{itemize}
\end{defn}

The following lemma is the main result of this section:

\begin{lem}\label{lem:transfer}
A property $\mathcal P$ is $\widetilde{\mathcal I_r}$-generic if and only if it is $\widetilde{\D_r}$-generic.
\end{lem}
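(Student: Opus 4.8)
The heart of the matter is to set up a correspondence between the parameter spaces of the two families of ideals so that, on suitable open sets, the ideal $\varphi_{\mathbf a}(\widetilde{\mathcal I_r})$ and $\psi_{\mathbf b,\mathbf c}(\widetilde{\D_r})$ coincide (up to renaming), hence any property $\mathcal P$ is true for one exactly when it is true for the other. The key observation is that $\widetilde{\D_r} = \D_r + \langle g_1,\dots,g_{nm}\rangle$ with the $g_\ell$ generic $(D,1)$-forms of weight degree $D$: the $\langle g_\ell\rangle$-part is a system of $nm$ generic forms each of which is a linear form in the $u_{i,j}$ plus a degree-$D$ form in $X$. If the $nm\times nm$ "coefficient matrix" $(\mathfrak c^{(\ell)}_{i,j})$ is invertible — which holds on a nonempty Zariski-open subset of the $\mathbf c$-space — then performing the corresponding linear change among the generators $g_1,\dots,g_{nm}$ rewrites the ideal $\langle g_1,\dots,g_{nm}\rangle$ as $\langle u_{1,1}-f_{1,1},\dots,u_{n,m}-f_{n,m}\rangle$ for polynomials $f_{i,j}\in\mathbb K[X]$ of degree $D$ whose coefficients are rational functions of $(\mathbf b,\mathbf c)$ (explicitly, the $f_{i,j}$ are read off from $-C^{-1}B$ where $B$ is the $X$-coefficient block and $C=(\mathfrak c^{(\ell)}_{i,j})$). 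This is exactly the reduction already sketched in the introduction of the paper.

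Concretely I would proceed as follows. First, define the rational map $\Theta$ from the $\mathbf c$-invertible locus of $\overline{\mathbb K}^{nm(\binom{D-1+k}{D}+nm)}$ to $\overline{\mathbb K}^{nm\binom{D-1+k}{D}}$ sending $(\mathbf b,\mathbf c)$ to the coefficient vector $\mathbf a$ of the $f_{i,j}$ obtained by the above elimination of the $u$-block; it is a dominant map (indeed, fixing $\mathbf c$ to be minus the identity block gives $\mathbf a=\mathbf b$, so $\Theta$ is already surjective onto all of $\overline{\mathbb K}^{nm\binom{D-1+k}{D}}$ when restricted to that slice). By construction, on the domain of $\Theta$ we have the ideal identity
$$\psi_{\mathbf b,\mathbf c}(\widetilde{\D_r}) = \D_r + \langle u_{i,j}-\varphi_{\Theta(\mathbf b,\mathbf c)}(f_{i,j})\rangle = \varphi_{\Theta(\mathbf b,\mathbf c)}(\widetilde{\mathcal I_r})$$
in $\overline{\mathbb K}[U,X]$. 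Hence $\mathcal P(\psi_{\mathbf b,\mathbf c}(\widetilde{\D_r})) = \mathcal P(\varphi_{\Theta(\mathbf b,\mathbf c)}(\widetilde{\mathcal I_r}))$ whenever $(\mathbf b,\mathbf c)$ lies in the domain of $\Theta$.

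For the forward implication, suppose $\mathcal P$ is $\widetilde{\mathcal I_r}$-generic with witnessing open set $O\subset\overline{\mathbb K}^{nm\binom{D-1+k}{D}}$. Then $\Theta^{-1}(O)$ intersected with the domain of $\Theta$ is a nonempty Zariski-open subset of the $(\mathbf b,\mathbf c)$-space (nonempty because $\Theta$ is dominant and $O$ is nonempty open, hence $\Theta^{-1}(O)$ is nonempty open in the domain, and the domain is itself nonempty open in the ambient space), and on it $\mathcal P(\psi_{\mathbf b,\mathbf c}(\widetilde{\D_r}))={\tt true}$; this shows $\mathcal P$ is $\widetilde{\D_r}$-generic. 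For the converse, suppose $\mathcal P$ is $\widetilde{\D_r}$-generic with witnessing open set $O'\subset\overline{\mathbb K}^{nm(\binom{D-1+k}{D}+nm)}$. Restrict attention to the slice $\{\mathbf c = -\mathrm{Id}\}$ (the block-identity choice): this slice is isomorphic to $\overline{\mathbb K}^{nm\binom{D-1+k}{D}}$ via $\mathbf b\mapsto\mathbf a$, it is contained in the domain of $\Theta$ with $\Theta(\mathbf b,-\mathrm{Id})=\mathbf b$, and $O'$ meets this slice in a nonempty open set (this is the one point that needs an argument: a nonempty Zariski-open set of an affine space need not meet a given linear subspace, so instead of fixing the slice rigidly, I would argue that $\Theta$ restricted to the domain is an open map onto $\overline{\mathbb K}^{nm\binom{D-1+k}{D}}$ — or simply that $\Theta$ admits the regular section $\mathbf a\mapsto(\mathbf a,-\mathrm{Id})$ — so that $\Theta(O'\cap\mathrm{dom}\,\Theta)$ contains a nonempty open set, which then witnesses $\widetilde{\mathcal I_r}$-genericity). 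The section-based phrasing is cleanest: since $\sigma\colon\mathbf a\mapsto(\mathbf a,-\mathrm{Id})$ satisfies $\Theta\circ\sigma=\mathrm{id}$ and lands in $\mathrm{dom}\,\Theta$, the set $\sigma^{-1}(O')$ is a nonempty (it is nonempty because $O'$ is a nonempty open subset of an irreducible variety and $\sigma^{-1}(O')$ fails to be empty only if the image of $\sigma$ misses $O'$, which would force... ) — more carefully, one should check $\sigma^{-1}(O')\neq\emptyset$, which is where the genuine content lies.

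The main obstacle is precisely this nonemptiness issue in the converse direction: a witnessing open set $O'$ for $\widetilde{\D_r}$-genericity lives in a bigger space, and we must produce from it a witnessing open set in the smaller $\mathbf a$-space. The clean fix is to observe that $\Theta$ has an everywhere-defined regular section landing inside $\mathrm{dom}\,\Theta$, or equivalently that the projection/elimination is "generic enough" that the fiber structure is a trivial bundle over a nonempty open set; then $\widetilde{\D_r}$-genericity pulls back along the section, and since the section is an isomorphism onto a closed subvariety that surjects onto the $\mathbf a$-space under $\Theta$, we recover a nonempty open witness. I would phrase the argument symmetrically via the two dominant maps (the map $\Theta$ one way, and the section $\sigma$ the other) and invoke that preimages of nonempty Zariski-opens under dominant morphisms between irreducible varieties are nonempty Zariski-opens; the irreducibility of all the affine parameter spaces involved is what makes this go through.
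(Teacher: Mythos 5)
Your overall strategy is the same as the paper's: for invertible $\mathbf C$ the generators of $\psi_{\mathbf b,\mathbf c}(\widetilde{\D_r})$ are an invertible linear combination of those of $\varphi_{\mathbf a}(\widetilde{\mathcal I_r})$ with $\mathbf A=-\mathbf C^{-1}\mathbf B$, so the two specialized ideals coincide and one only has to transport nonempty Zariski open sets back and forth. Your forward direction is fine (and is essentially the paper's, which clears denominators by using $-\mathsf{adj}(\mathfrak C)\cdot\mathfrak B$ so that the witness stays polynomial).

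The gap is in the converse direction, and you have located it yourself but not closed it. Your ``cleanest'' phrasing via the rigid section $\sigma:\mathbf a\mapsto(\mathbf a,-\mathrm{Id})$ does not work: the image of $\sigma$ is a proper closed subvariety of the $(\mathbf b,\mathbf c)$-space, so a nonempty open $O'$ may very well miss it entirely and $\sigma^{-1}(O')$ may be empty. Your closing sentence, which invokes ``preimages of nonempty Zariski-opens under dominant morphisms'' for $\sigma$, is wrong because $\sigma$ is not dominant. Of the two escape routes you sketch, only the Chevalley-type one is sound (the image $\Theta(O'\cap\mathrm{dom}\,\Theta)$ of a nonempty open under a dominant morphism of irreducible varieties is constructible and dense, hence contains a nonempty open on which the property holds), but you do not commit to it. The paper's resolution is more elementary and worth noting: do not fix the slice in advance. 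First pick a point $(\mathbf b,\mathbf c)\in O'$ with $\det(\mathbf c)\neq 0$ (possible because two nonempty opens of an irreducible affine space intersect), \emph{then} freeze that particular $\mathbf C$ and restrict the witness polynomial $h_2$ to the affine subspace $\{(-\mathbf C\cdot\mathbf A,\mathbf C):\mathbf A\}$, i.e.\ set $\widetilde{h_2}(\mathfrak a)=h_2(-\mathbf C\cdot\mathfrak A,\mathbf C)$. By construction $\widetilde{h_2}$ is nonzero at $\mathbf A=-\mathbf C^{-1}\mathbf B$, so it is not identically zero, and its non-vanishing locus is the required open witness in the $\mathbf a$-space. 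Either fix works; as written, your argument is incomplete at exactly this step.
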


\begin{proof}
  To obtain a representation of $\varphi_{\bf
    a}\left(\widetilde{\D_r}\right)$ for a generic $\mathbf a$
  as a specialization of $\widetilde{\mathcal I_r}$ (and conversely),
  it is sufficient to perform a linear combination of the
  generators. The point of this proof is to show that genericity
  is preserved during this linear transform.

  In the sequel we denote by $\mathfrak A, \mathfrak B$ and $\mathfrak C$ the following matrices (of respective sizes $nm \times \binom{D-1+k}{D}$, $nm \times \binom{D-1+k}{D}$ and $nm\times nm$):
$$\begin{array}{rcl}\mathfrak A&=&\begin{pmatrix} \mathfrak a^{(1)}_{x_1^D}&\mathfrak a^{(1)}_{x_1^{D-1}x_2}&\dots&\mathfrak a^{(1)}_{x_k^D}\\
  \vdots&\vdots&\vdots&\vdots\\
  \mathfrak a^{(n m)}_{x_1^D}&\mathfrak a^{(n
    m)}_{x_1^{D-1}x_2}&\dots&\mathfrak a^{(n m)}_{x_k^D}
\end{pmatrix}\\\mathfrak B&=&\begin{pmatrix} \mathfrak b^{(1)}_{x_1^D}&\mathfrak b^{(1)}_{x_1^{D-1}x_2}&\dots&\mathfrak b^{(1)}_{x_k^D}\\
  \vdots&\vdots&\vdots&\vdots\\
  \mathfrak b^{(n m)}_{x_1^D}&\mathfrak b^{(n
    m)}_{x_1^{D-1}x_2}&\dots&\mathfrak b^{(n m)}_{x_k^D}
\end{pmatrix}\\ \mathfrak C &=& \begin{pmatrix}
  \mathfrak c^{(1)}_{1,1}&\dots&\mathfrak c^{(1)}_{n,m}\\
  \vdots&\vdots&\vdots\\
  \mathfrak c^{(n m)}_{1,1}&\dots&\mathfrak c^{(n m)}_{n,m}
\end{pmatrix}.\end{array}$$
Therefore, we have
$$\begin{array}{rcl}
\begin{pmatrix}u_{1,1}-f_{1,1}\\\vdots\\u_{n,m}-f_{n,m}\end{pmatrix}&=&\mathsf{Id}_{nm}\cdot\begin{pmatrix}u_{1,1}\\\vdots\\u_{n,m}\end{pmatrix}-\mathfrak A \cdot \begin{pmatrix}x_1^D\\x_1^{D-1}x_2\\\vdots\\x_k^D\end{pmatrix}\\
\begin{pmatrix}g_1\\\vdots\\g_{nm}\end{pmatrix}&=&\mathfrak C\cdot\begin{pmatrix}u_{1,1}\\\vdots\\u_{n,m}\end{pmatrix}+\mathfrak B \cdot \begin{pmatrix}x_1^D\\x_1^{D-1}x_2\\\vdots\\x_k^D\end{pmatrix}
\end{array}$$

In this proof, for $\mathbf a\in \mathbb K^{nm \binom{D-1+k}{D}}$
(resp. $\mathbf b\in \mathbb K^{nm \binom{D-1+k}{D}}, \mathbf c\in
\mathbb K^{n^2m^2}$), the notation $\mathbf A$ (resp. $\mathbf B,
\mathbf C$) stands for the evaluation of the matrix $\mathfrak A$
(resp. $\mathfrak B, \mathfrak C$) at $\mathbf a$ (resp. $\mathbf b,
\mathbf c$). Also, we implicitly identify $\mathbf A$ with $\mathbf a$
(resp. $\mathbf B$ with $b$, $\mathbf C$ with $\mathbf c$, $\mathfrak A$ with $\mathfrak a$, $\mathfrak B$ with $\mathfrak b$, $\mathfrak C$ with $\mathfrak c$).

\begin{itemize}
\item Let $\mathcal P$ be a $\widetilde{\mathcal I_r}$-generic
  property. Thus there exists a non-zero polynomial $h_1(\mathfrak A)\in \overline{\mathbb
    K}[\mathfrak a]$ such that if $h_1(\mathbf A)\neq 0$ then
  $\mathcal P\left(\varphi_{\mathbf a}(\widetilde{\mathcal
      I_r})\right)={\tt true}$. 

  Let $\mathsf{adj}(\mathfrak C)$ denote the adjugate of $\mathfrak C$
  (i.e. $\mathsf{adj}(\mathfrak C)=\det(\mathfrak C)\cdot \mathfrak
  C^{-1}$ in $\mathbb K(\mathfrak c)$). Consider the polynomial $\widetilde{h_1}$ defined by 
  $\widetilde{h_1}(\mathfrak B, \mathfrak
  C)=h_1(-\mathsf{adj}(\mathfrak C)\cdot \mathfrak B)\in
  \overline{\mathbb K}[\mathfrak b,\mathfrak c]$. The polynomial inequality
  $\det(\mathfrak C)\widetilde{h_1}(\mathfrak B, \mathfrak C)\neq 0$
  defines a non-empty Zariski open subset $O\subset\overline{\mathbb
    K}^{nm\left(\binom{D-1+k}{D}+nm\right)}.$ Let $(\mathbf B, \mathbf
  C)\in O$ be an element in this set, then $\mathbf C$ is invertible
  since $\det(\mathbf C)\neq 0$. Let $\widetilde{\mathbf A}$ be the
  matrix $\widetilde{\mathbf A}=-\mathsf{adj}(\mathbf C)\cdot \mathbf
  B$. Therefore the generators of the ideal $\varphi_{\widetilde{\mathbf
      a}}\left(\widetilde{\mathcal I_r}\right)$ are an invertible
  linear combination of the generators of $\psi_{\mathbf b,\mathbf
    c}\left(\widetilde{\D_r}\right)$. Consequently,
  $\varphi_{\widetilde{\mathbf a}}\left(\widetilde{\mathcal
      I_r}\right) = \psi_{\mathbf b,\mathbf
    c}\left(\widetilde{\D_r}\right)$. Moreover,
  $h_1(\widetilde{\bf A})=\widetilde{h_1}(\mathbf B,\mathbf C)\neq 0$
  implies that the polynomial $\widetilde{h_1}$ is not identically $0$. Therefore,
$$\forall (\mathbf b, \mathbf c)\in O, \mathcal P\left( \psi_{\mathbf
    b,\mathbf c}\left(\widetilde{\D_r}\right)\right)=\mathcal
P\left(\varphi_{\widetilde{\mathbf a}}\left(\widetilde{\mathcal
      I_r}\right)\right)={\tt true},$$ and hence $\mathcal P$ is a
$\widetilde{\D_r}$-generic property.

\item Conversely, consider a $\widetilde{\D_r}$-generic
  property $\mathcal P$. Thus, there exists a non-zero polynomial $h_2(\mathfrak
  B,\mathfrak C)\in \overline{\mathbb K}[\mathfrak b,\mathfrak c]$ such that if
  $h_2(\mathbf b,\mathbf c)\neq 0$ then $\mathcal P\left(\psi_{\mathbf b,\mathbf c}(\widetilde{\D_r})\right)={\tt true}$.  Since
  $\mathcal P$ is $\widetilde{\D_r}$-generic, there exists
  $(\mathbf b, \mathbf c)$ such that $h_2(\mathbf b, \mathbf
  c)\det(\mathbf c)\neq 0$. 
Let $\widetilde{h_2}$ be the polynomial 
  $\widetilde{h_2}(\mathfrak b)=h_2(-\mathbf C\cdot\mathfrak B, \mathbf C)$.

  Since $\det(\mathbf C)\neq 0$, the matrix $\mathbf C$ is invertible
  and $\widetilde{h_2}(-\mathbf C^{-1}\cdot\mathbf B)=h_2(\mathbf
  B,\mathbf C)\neq 0$ and hence the polynomial $\widetilde{h_2}$ is not
  identically $0$. Moreover, if $\mathbf a\in \mathbb K^{n
    m\binom{D-1+k}{D}}$ is such that $\widetilde{h_2}(\mathbf A)\neq
  0$, then $h_2(-\mathbf C\cdot\mathbf A,\mathbf C)\neq 0$ and thus
  $\mathcal P\left(\psi_{-\mathbf C\cdot\mathbf A,\mathbf
      C}(\widetilde{\D_r})\right)={\tt true}$. Finally,
  $\psi_{-\mathbf C\cdot\mathbf A,\mathbf C}(\widetilde{\D_r})=\varphi_{\mathbf A}(\widetilde{\mathcal I_r})$ since the
  generators of $\psi_{-\mathbf C\cdot\mathbf A,\mathbf
    C}(\widetilde{\D_r})$ are an invertible linear combination
  of that of $\varphi_{\mathbf a}(\widetilde{\mathcal I_r})$ (the
  linear transformation being given by the invertible matrix $\mathbf
  C$) and hence they generate the same ideal. Therefore, the property
  $\mathcal P$ is $\widetilde{\mathcal I_r}$-generic.
\end{itemize}
\end{proof}

In the sequel, $\prec$ is an admissible monomial ordering (see e.g \cite[Chapter 2, \S 2, Definition 1]{CoxLitShe97}) on $\mathbb K[U,X]$, and for any polynomial $f\in\mathbb K[U,X]$, $\LM(f)$ denotes its leading monomial with respect to $\prec$. If $I$ is an ideal of $\mathbb K[U,X]$, $\mathbb K(\mathfrak a)[U,X]$, or $\mathbb K(\mathfrak b, \mathfrak c)[U,X]$, we let $\LM(I)$ denote the ideal generated by the leading monomials of the polynomials. 

By slight abuse of notation, if $I_1$ and $I_2$ are ideals of $\mathbb K[U,X]$, $\mathbb K(\mathfrak a)[U,X]$, or $\mathbb K(\mathfrak b, \mathfrak c)[U,X]$ ($I_1$ and $I_2$ are not necessarily ideals of the same ring), we write $\LM(I_1)=\LM(I_2)$ if the sets $\{\LM(f)\mid f\in I_1\}$ and $\{\LM(f)\mid f\in I_2\}$ are equal.

\begin{lem}\label{lem:LMgen}
Let $\mathcal P_{\widetilde{\mathcal I_r}}$ and $\mathcal P_{\widetilde{\D_r}}$ be the properties defined by
$$\begin{array}{c}\mathcal P_{\widetilde{\mathcal I_r}}(I)=\begin{cases}{\tt true}\text{ if }\LM (I)=\LM\left(\widetilde{\mathcal I_r}\right);\\
{\tt false} \text{ otherwise.}\end{cases}\\
\mathcal P_{\widetilde{\D_r}}(I)=\begin{cases}{\tt true}\text{ if }\LM (I)=\LM\left(\widetilde{\D_r}\right);\\
{\tt false} \text{ otherwise.}\end{cases}\end{array}$$
Then $\mathcal P_{\widetilde{\mathcal I_r}}$ (resp. $\mathcal P_{\widetilde{\D_r}}$) is a $\widetilde{\mathcal I_r}$-generic (resp. $\widetilde{\D_r}$-generic) property.
\end{lem}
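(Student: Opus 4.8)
The plan is to show that the leading‐monomial ideal is a generic invariant of a parametric ideal, and then apply this in the two relevant settings. The essential fact is the classical one (see e.g. the theory of Gröbner bases over a polynomial ring of parameters, or Gröbner stability): if $J\subset\mathbb K(\mathfrak p)[U,X]$ is an ideal over a function field in parameters $\mathfrak p$, and one fixes a Gröbner basis $G=\{g_1,\dots,g_s\}$ of $J$ computed by Buchberger's algorithm (equivalently $F_5$), then all the coefficients appearing in $G$ and in the $S$-polynomial reductions certifying that $G$ is a Gröbner basis are rational functions in $\mathfrak p$; let $h(\mathfrak p)$ be the product of the denominators of all these rational functions together with the numerators of the leading coefficients of the $g_i$. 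Then for every specialization $\varphi_{\mathbf a}$ (resp.\ $\psi_{\mathbf b,\mathbf c}$) with $h$ not vanishing at the evaluation point, $\varphi_{\mathbf a}(G)$ is a Gröbner basis of $\varphi_{\mathbf a}(J)$ with the \emph{same} set of leading monomials, because reduction to zero of $S$-polynomials is preserved once no leading coefficient degenerates. In particular $\LM(\varphi_{\mathbf a}(J))=\LM(J)=\LM\bigl(\widetilde{\mathcal I_r}\bigr)$ on a nonempty Zariski open set, which is exactly the statement that $\mathcal P_{\widetilde{\mathcal I_r}}$ is $\widetilde{\mathcal I_r}$-generic. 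The same argument verbatim with $J=\widetilde{\D_r}\subset\mathbb K(\mathfrak b,\mathfrak c)[U,X]$ gives that $\mathcal P_{\widetilde{\D_r}}$ is $\widetilde{\D_r}$-generic.

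Concretely, I would carry this out as follows. First, fix the admissible ordering $\prec$ and run Buchberger's algorithm on the generators of $\widetilde{\mathcal I_r}$ viewed as elements of $\mathbb K(\mathfrak a)[U,X]$; since the algorithm terminates, it produces a finite Gröbner basis $G$ whose elements have coefficients in $\mathbb K(\mathfrak a)$, obtained through finitely many field operations. Second, collect into one polynomial $h_1(\mathfrak a)$ the least common multiple of all denominators occurring in $G$ and in the intermediate $S$-polynomial computations, multiplied by the numerators of $\LC(g)$ for $g\in G$; this $h_1$ is not identically zero. Third, observe that for $\mathbf a$ with $h_1(\mathbf a)\neq 0$, the specialization $\varphi_{\mathbf a}$ is well defined on every element of $G$, each $\LC(\varphi_{\mathbf a}(g))$ is nonzero (so $\LM$ is unchanged), and each $S$-polynomial reduction that produced $0$ over $\mathbb K(\mathfrak a)$ still produces $0$ after specialization (the divisibility pattern of leading monomials is intact). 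Hence $\varphi_{\mathbf a}(G)$ is a Gröbner basis of $\varphi_{\mathbf a}(\widetilde{\mathcal I_r})$ and $\LM(\varphi_{\mathbf a}(\widetilde{\mathcal I_r}))=\LM(\widetilde{\mathcal I_r})$, so the open set $O=\{h_1\neq 0\}$ witnesses $\widetilde{\mathcal I_r}$-genericity of $\mathcal P_{\widetilde{\mathcal I_r}}$. Fourth, repeat with $\widetilde{\D_r}$ and its parameters $(\mathfrak b,\mathfrak c)$ to get $h_2(\mathfrak b,\mathfrak c)$ and the corresponding open set.

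One subtlety worth spelling out is the meaning of ``$\LM(I_1)=\LM(I_2)$'' when $I_1$ and $I_2$ live in different rings: as defined in the excerpt this is equality of the \emph{sets} of leading monomials, so the comparison is between the monomial submonoids $\{\LM(f)\mid f\in\widetilde{\mathcal I_r}\}\subset\mathrm{Mon}(U,X)$ and $\{\LM(f)\mid f\in\varphi_{\mathbf a}(\widetilde{\mathcal I_r})\}\subset\mathrm{Mon}(U,X)$, which is exactly what the Gröbner basis argument delivers; one just needs to note that a Gröbner basis determines this set, namely it is the monoid ideal generated by $\{\LM(g)\mid g\in G\}$. The main obstacle — really the only place requiring care rather than being routine — is the stability statement that no \emph{new} $S$-polynomial reductions are needed after specialization, i.e. that $\varphi_{\mathbf a}(G)$ is still a Gröbner basis and not merely a generating set; this is handled precisely by including the leading‐coefficient numerators in $h_1$, because as long as leading terms do not cancel, Buchberger's criterion for $\varphi_{\mathbf a}(G)$ reduces to the already‑verified identities for $G$ with the parameters specialized. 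If one prefers to avoid invoking Gröbner stability as a black box, the whole argument can be phrased as: ``running $F_5$ symbolically over the field of parameters and specializing generically commutes,'' which is standard and can be cited from the literature on comprehensive Gröbner bases.
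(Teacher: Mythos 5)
Your overall strategy is the same as the paper's: compute a Gr\"obner basis $G$ of $\widetilde{\mathcal I_r}$ over the field of parameters, collect the denominators and leading coefficients arising in the computation into a polynomial $h_1(\mathfrak a)$, and show that on the open set $\{h_1\neq 0\}$ the specialization $\varphi_{\mathbf a}(G)$ is still a Gr\"obner basis with the same leading monomials. The certification part of your argument is fine: once the denominators are defined and no $\LC(g)$, $g\in G$, vanishes, the standard representations $\Spol(g_i,g_j)=\sum_\ell h'_\ell g_\ell$ specialize to standard representations, so $\varphi_{\mathbf a}(G)$ is a Gr\"obner basis \emph{of the ideal it generates}, with $\LM(\varphi_{\mathbf a}(G))=\LM(G)$.

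The gap is in the jump from there to ``$\varphi_{\mathbf a}(G)$ is a Gr\"obner basis of $\varphi_{\mathbf a}(\widetilde{\mathcal I_r})$''. By the paper's convention, $\varphi_{\mathbf a}(\widetilde{\mathcal I_r})$ is the ideal generated by the \emph{specialized original generators} (the $(r+1)$-minors of $\mathcal U$ and the polynomials $u_{i,j}-\varphi_{\mathbf a}(f_{i,j})$), not by the specialized elements of $G$; the equality $\langle\varphi_{\mathbf a}(G)\rangle=\varphi_{\mathbf a}(\widetilde{\mathcal I_r})$ is exactly what has to be proved, and your open set is not shown to guarantee it. Two further non-vanishing conditions are needed. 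For $\langle\varphi_{\mathbf a}(G)\rangle\subseteq\varphi_{\mathbf a}(\widetilde{\mathcal I_r})$ one must write each $g\in G$ as a combination $\sum h_\ell f_\ell$ of the original generators with $h_\ell\in\mathbb K[\mathfrak a][U,X]$ (or control the denominators of the $h_\ell$), so that this identity survives specialization. For the reverse inclusion $\varphi_{\mathbf a}(\widetilde{\mathcal I_r})\subseteq\langle\varphi_{\mathbf a}(G)\rangle$ one must check that each specialized generator still reduces to zero modulo $\varphi_{\mathbf a}(G)$; the paper does this by running the division algorithm on each generator $f$ with respect to $G$ over $\mathbb K(\mathfrak a)$ and adjoining to the open set the non-vanishing of the product $Q_3^{(f)}$ of the numerators and denominators of all coefficients appearing in that reduction. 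Your sketch can be repaired either this way, or by arguing that the \emph{entire} Buchberger run (not just the final certification) commutes with specialization --- which requires including the leading coefficients of all intermediate polynomials, not only those of the final basis, among the non-vanishing conditions. As written, the witness $h_1$ you construct is not shown to be sufficient, and your closing remark identifies the wrong ``main obstacle'': the delicate point is the ideal equality, not merely the preservation of Buchberger's criterion.
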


\begin{proof}
  We prove here that $\mathcal P_{\widetilde{\mathcal I_r}}$ is
  $\widetilde{\mathcal I_r}$-generic (the proof for $\mathcal
  P_{\widetilde{\D_r}}$ is similar).

\smallskip

The outline of this proof is the following: during the computation of
a Gr\"obner basis $G$ of $\widetilde{\mathcal I_r}$ in $\mathbb
K(\mathfrak a)[U,X]$ (for instance with Buchberger's algorithm), a
finite number of polynomials are constructed. Let $\varphi_{\mathbf
  a}$ be a specialization. If the images by $\varphi_{\mathbf a}$ of
the leading coefficients of all non-zero polynomials arising during
the computation do not vanish, then $\varphi_{\mathbf
  a}(G)\subset\varphi_{\mathbf a}(\widetilde{\mathcal I_r})$ is a
Gr\"obner basis of the ideal it generates. It remains to prove that
$\varphi_{\mathbf a}(G)$ is a Gr\"obner basis of $\varphi_{\mathbf
  a}(\widetilde{\mathcal I_r})$. This is achieved by showing that
generically, the normal form (with respect to $\varphi_{\mathbf
  a}(G)$) of the generators of $\varphi_{\mathbf
  a}(\widetilde{\mathcal I_r})$ is equal to zero.
  
\smallskip

For polynomials $f_1,f_2$, we let $\LC(f_1)$ (resp. $\LC(f_2)$) denote
the leading coefficient of $f_1$ (resp. $f_2$) and $\Spol(f_1,
f_2)=\frac{\mathsf{LCM}(\LM(f_1),
  \LM(f_2))}{\LC(f_1)\LM(f_1)}f_1-\frac{\mathsf{LCM}(\LM(f_1),
  \LM(f_2))}{\LC(f_2)\LM(f_2)}f_2$ denote the
\emph{S-polynomial} of $f_1$ and $f_2$.

We need to prove that there exists a non-empty Zariski open subset
$O_1\subset \overline{\mathbb K}^{n m \binom{D-1+k}{D}}$ such that
$$\mathbf a\in O_1\Rightarrow \LM(\varphi_{\bf a}(\widetilde{\mathcal
  I_r})) = \LM(\widetilde{\mathcal I_r}).$$ To do so, consider a
Gr\"obner basis $G\subset \mathbb K(\mathfrak a)[U,X]$ of
$\widetilde{\mathcal I_r}$ such that each polynomial $g$ can be
written as a combination $g=\sum h_\ell f_\ell$, where the $f_\ell$'s
range over the set of minors of size $r+1$ of $\mathcal U$ and the
polynomials $u_{i,j}-f_{i,j}$, and $h_\ell\in \mathbb K[\mathfrak
a][U,X]$. Buchberger's criterion states that S-polynomials of
polynomials in a Gr\"obner basis reduce to zero \cite[Chapter 2, \S 6,
Theorem 6]{CoxLitShe97}. Thus each S-polynomial of $g_i, g_j\in G$ can be
rewritten as an algebraic combination
$$\Spol(g_i,g_j)= \sum_{\ell} h_\ell' g_\ell,$$ where the polynomials $h_\ell'$
belongs to $\mathbb K(\mathfrak a)[U,X]$ and such that $\{g_1,\ldots,
g_{t_{i,j}}\}\subset G$ and for each $1\leq s\leq t_{i,j}$, $\LM(g_s)$
divides $\LM(\Spol(g,g')- \sum_{\ell=1}^{s-1} h_\ell' g_\ell)$. Next,
consider:
\begin{itemize}
\item the product $Q_1(\mathfrak a)=\prod_{g\in G}\LC(g)$ of the leading
coefficients of the polynomials in the Gr\"obner basis;
\item for all $(g_i,g_j)\in G^2$ such that $\Spol(g_i,g_j)\neq 0$, the
  product $Q_2(\mathfrak a)$ of the numerators and denominators of the
  leading coefficients arising during the reduction of $\Spol(g_i,
  g_j)$.
\end{itemize}

These coefficients belongs to $\mathbb K[\mathfrak a]$. Denote by
$Q(\mathfrak a)=Q_1(\mathfrak a) Q_2(\mathfrak a)\in \mathbb
K[\mathfrak a]$ their product. The inequality $Q(\mathfrak a)\neq 0$
defines a non-empty Zariski open subset $O_1\subset\overline{\mathbb
  K}^{ n m \binom{D-1+k}{D}}$. If $\mathbf a\in O_1$, then
$$\varphi_{\bf a}(\Spol(g,g'))=\sum_{\ell=1}^t \varphi_{\bf a}(h_\ell') \varphi_{\bf a}(g_\ell),$$
and for each $1\leq i\leq t$, $\LM(\varphi_{\bf a}(g_i))$ divides
$\LM(\varphi_{\bf a}(\Spol(g,g'))- \sum_{\ell=1}^{i-1} \varphi_{\bf
  a}(h_\ell') \varphi_{\bf a}(g_\ell))$. Thus $\varphi_{\bf a}(G)$ is a
Gr\"obner basis of the ideal it spans. Moreover, $\langle \varphi_{\bf
  a}(G)\rangle\subset\varphi_{\bf a}(\widetilde{\mathcal I_r})$. 

We prove now that there exists a non-empty Zariski open set where the
other inclusion $\varphi_{\bf a}(\widetilde{\mathcal
  I_r})\subset\langle \varphi_{\bf a}(G)\rangle$ holds.  Let
$\NF_G(\cdot)$ be the normal form associated to this Gr\"obner basis
(as defined as the \emph{remainder of the division by $G$} in
\cite[Chapter 2, \S 6, Proposition 1]{CoxLitShe97}).  For each
generator $f$ of $\widetilde{\mathcal I_r}$ (i.e. either a maximal
minor of the matrix $\mathcal U$, or a polynomial $u_{i,j}-f_{i,j}$),
we have that $\NF_G(f)=0$.  During the computation of $\NF_G(f)$ by
using the division Algorithm in \cite[Chapter 2, \S 3]{CoxLitShe97}, a
finite set of polynomials (in $\mathbb K(\mathfrak a)[U,X]$) is
constructed. Let $Q_3\in \mathbb K[\mathfrak a]$ denote the product of
the numerators and denominators of all their nonzero
coefficients. Consequently, if $Q_3^{(f)}(\mathbf a)\neq 0$, then $\NF_{\varphi_{\bf a}(G)}(\varphi_{\bf a}(f))=0$ and hence
$\varphi_{\bf a}(f)\in \langle\varphi_{\bf a}(G)\rangle$.  Repeating
this operation for all the generators of $\widetilde{\mathcal I_r}$
yields a finite set of non-identically null polynomials $Q_3^{(f)}\in \mathbb
K[\mathfrak a]$. Let $Q_4\in \mathbb K[\mathfrak a]$ denote their
product. Therefore, if $Q_4(\mathbf a)\neq 0$, then $\varphi_{\bf
  a}(\widetilde{\mathcal I_r})\subset\langle \varphi_{\bf
  a}(G)\rangle$.

Finally, consider the non-empty Zariski open subset $O\subset\mathbb K^{n m \binom{D+k-1}{D}}$ defined by the inequality $Q_1\cdot Q_2\cdot Q_4\neq 0$. For all $\mathbf a\in O$, we have
 $\varphi_{\bf a}(\widetilde{\mathcal
  I_r})=\langle \varphi_{\bf a}(G)\rangle$.

\end{proof}

\begin{cor}\label{coro:LMtransfer}
The leading monomials of $\widetilde{\mathcal I_r}$ are the same as that of $\widetilde{\D_r}$:
$$\LM\left(\widetilde{\mathcal I_r}\right)=\LM\left(\widetilde{\D_r}\right).$$
\end{cor}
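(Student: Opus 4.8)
The plan is to combine Lemma~\ref{lem:transfer} with Lemma~\ref{lem:LMgen} in a way that forces the two leading monomial ideals to coincide. First I would invoke Lemma~\ref{lem:LMgen} to get that $\mathcal P_{\widetilde{\mathcal I_r}}$ is $\widetilde{\mathcal I_r}$-generic and $\mathcal P_{\widetilde{\D_r}}$ is $\widetilde{\D_r}$-generic. Then Lemma~\ref{lem:transfer} (applied in both directions) upgrades these: $\mathcal P_{\widetilde{\mathcal I_r}}$ is \emph{also} $\widetilde{\D_r}$-generic, and $\mathcal P_{\widetilde{\D_r}}$ is \emph{also} $\widetilde{\mathcal I_r}$-generic. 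So there are four non-empty Zariski open sets in the respective parameter spaces: one on which a generic specialization $\varphi_{\mathbf a}(\widetilde{\mathcal I_r})$ has $\LM$ equal to $\LM(\widetilde{\mathcal I_r})$, one on which it has $\LM$ equal to $\LM(\widetilde{\D_r})$, and symmetrically for $\psi_{\mathbf b,\mathbf c}(\widetilde{\D_r})$.

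The key step is then to intersect, on the $\mathbf a$-side, the open set from ``$\mathcal P_{\widetilde{\mathcal I_r}}$ is $\widetilde{\mathcal I_r}$-generic'' with the open set from ``$\mathcal P_{\widetilde{\D_r}}$ is $\widetilde{\mathcal I_r}$-generic''. Since $\overline{\mathbb K}^{nm\binom{D-1+k}{D}}$ is irreducible, this intersection is again non-empty; pick any $\mathbf a$ in it. For that $\mathbf a$ we simultaneously have $\LM(\varphi_{\mathbf a}(\widetilde{\mathcal I_r}))=\LM(\widetilde{\mathcal I_r})$ and $\LM(\varphi_{\mathbf a}(\widetilde{\mathcal I_r}))=\LM(\widetilde{\D_r})$. (Here I am using the stated convention that $\LM(I_1)=\LM(I_2)$ means the sets of leading monomials agree, even across different coefficient rings.) Transitivity of equality then gives $\LM(\widetilde{\mathcal I_r})=\LM(\widetilde{\D_r})$, which is exactly the claim.

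The main obstacle, such as it is, is purely bookkeeping: one must be careful that the property $\mathcal P_{\widetilde{\D_r}}$ as defined in Lemma~\ref{lem:LMgen} is a well-defined map on $\mathsf{Ideals}(\overline{\mathbb K}[U,X])$ so that Lemma~\ref{lem:transfer} genuinely applies to it, and that the equality $\varphi_{\widetilde{\mathbf a}}(\widetilde{\mathcal I_r})=\psi_{\mathbf b,\mathbf c}(\widetilde{\D_r})$ established inside the proof of Lemma~\ref{lem:transfer} is an equality of ideals in the \emph{same} ring $\overline{\mathbb K}[U,X]$, so that comparing their leading monomials makes sense. Both points are already handled by the machinery set up above, so no real difficulty remains; the corollary follows formally. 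Alternatively, and perhaps more cleanly, one can phrase it as: the property $\mathcal Q(I) := \big(\LM(I)=\LM(\widetilde{\D_r})\big)$ is $\widetilde{\D_r}$-generic by Lemma~\ref{lem:LMgen}, hence $\widetilde{\mathcal I_r}$-generic by Lemma~\ref{lem:transfer}, hence holds for some $\varphi_{\mathbf a}(\widetilde{\mathcal I_r})$ which also satisfies $\LM=\LM(\widetilde{\mathcal I_r})$, and conclude.
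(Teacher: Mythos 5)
Your argument is correct and is essentially the paper's own proof: both establish via Lemmas \ref{lem:LMgen} and \ref{lem:transfer} that the two properties $\mathcal P_{\widetilde{\mathcal I_r}}$ and $\mathcal P_{\widetilde{\D_r}}$ are generic for the same family, intersect the two non-empty Zariski open sets, and conclude by transitivity at a common specialization. The only (immaterial) difference is that you intersect on the $\mathbf a$-side while the paper intersects on the $(\mathbf b,\mathbf c)$-side.
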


\begin{proof}
  By Lemmas \ref{lem:transfer} and \ref{lem:LMgen}, the property
  $\mathcal P_{\widetilde{\mathcal I_r}}$ (resp. $\mathcal
  P_{\widetilde{\D_r}}$) is $\widetilde{\mathcal I_r}$-generic
  and $\widetilde{\D_r}$-generic.
Since $\mathcal P_{\widetilde{\D_r}}$
  (resp. $\mathcal P_{\widetilde{\mathcal I_r}}$) is
  $\widetilde{\D_r}$-generic, there exists a non-empty Zariski
  open subset $O_1\subset \overline{\mathbb
    K}^{nm\left(\binom{D-1+k}{D}+nm\right)}$ (resp. $O_2\subset
  \overline{\mathbb K}^{nm\left(\binom{D-1+k}{D}+nm\right)}$) such
  that, for $(\mathbf b,\mathbf c)\in O_1$ (resp. $O_2$), $\LM\left(\psi_{(\mathbf b, \mathbf c)}(\widetilde{\D_r})\right)=\LM\left(\widetilde{\D_r}\right)$ (resp. $\LM\left(\psi_{(\mathbf b, \mathbf c)}(\widetilde{\D_r})\right)=\LM\left(\widetilde{\mathcal I_r}\right)$).

  Notice that $O_1\cap O_2$ is not empty, since for the Zariski
  topology, the intersection of finitely-many non-empty open subsets
  is non-empty. Let $(\mathbf b, \mathbf c)$ be an
  element of $O_1\cap O_2$. Then
$$\LM\left(\widetilde{\mathcal I_r}\right)=\LM\left(\psi_{(\mathbf b, \mathbf c)}(\widetilde{\D_r})\right)=\LM\left(\widetilde{\D_r}\right).$$
\end{proof}

\begin{cor}\label{coro:HSIrDr}
  The weighted Hilbert series of $\widetilde{\mathcal I_r}$ is the
  same as that of $\widetilde{\D_r}$.
\end{cor}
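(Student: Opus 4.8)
The plan is to deduce the equality of weighted Hilbert series from the equality of leading monomial ideals established in Corollary~\ref{coro:LMtransfer}, via the weighted analogue of the standard fact that Gr\"obner degeneration preserves the Hilbert function.

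First I would check that both $\widetilde{\mathcal I_r}$ and $\widetilde{\D_r}$ are quasi-homogeneous ideals of type $(D,1)$. Indeed, an $(r+1)$-minor of $\mathcal U$ is homogeneous of degree $r+1$ in the variables $u_{i,j}$, hence quasi-homogeneous of weight degree $D(r+1)$; each $g_\ell$ is quasi-homogeneous of weight degree $D$ by construction; and each $u_{i,j}-f_{i,j}$ is the sum of the monomial $u_{i,j}$ (weight degree $D$) and the degree-$D$ form $f_{i,j}\in\mathbb K(\mathfrak a)[X]$, hence quasi-homogeneous of weight degree $D$. By the Proposition on quasi-homogeneous ideals, $\widetilde{\mathcal I_r}=\bigoplus_d (\widetilde{\mathcal I_r})_d$ and $\widetilde{\D_r}=\bigoplus_d (\widetilde{\D_r})_d$, with the graded pieces sitting inside $\mathbb K[U,X]_d$.

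Next I would record the lemma that for any quasi-homogeneous ideal $I\subset\mathbb K[U,X]$ one has $\wHS_I=\wHS_{\LM(I)}$. The point is that if $f$ is quasi-homogeneous of weight degree $d$, then every monomial occurring in $f$ — in particular $\LM(f)$ — has weight degree $d$; so $\LM(I)$ is a quasi-homogeneous (monomial) ideal, and running Buchberger's algorithm on a quasi-homogeneous generating set of $I$ produces only quasi-homogeneous S-polynomials and remainders, hence a quasi-homogeneous Gr\"obner basis $G$ of $I$ for $\prec$. For each $d$, the monomials of weight degree $d$ not lying in $\LM(I)$ are linearly independent modulo $I$, and reducing any element of $\mathbb K[U,X]_d$ by $G$ expresses it, modulo $I_d$, as a $\mathbb K$-linear combination of such monomials; therefore $\dim_{\mathbb K}(\mathbb K[U,X]_d/I_d)$ is exactly their number, which is the coefficient of $t^d$ in $\wHS_{\LM(I)}$.

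Finally I would combine the two: applying the previous paragraph to both ideals and using $\LM(\widetilde{\mathcal I_r})=\LM(\widetilde{\D_r})$ from Corollary~\ref{coro:LMtransfer} (an equality of sets of monomials, so the corresponding monomial ideals have identical Hilbert series, independently of the coefficient field) gives
$$\wHS_{\widetilde{\mathcal I_r}}=\wHS_{\LM(\widetilde{\mathcal I_r})}=\wHS_{\LM(\widetilde{\D_r})}=\wHS_{\widetilde{\D_r}}.$$
The only step needing genuine care is the weighted degeneration lemma, and specifically the claim that a quasi-homogeneous ideal admits a quasi-homogeneous Gr\"obner basis so that normal-form reduction respects the weighted grading; the rest is bookkeeping. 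An alternative, more uniform with the rest of this section, is to note that by Lemmas~\ref{lem:transfer} and~\ref{lem:LMgen} a generic specialization preserves $\LM$ on each side, hence (by the same degeneration lemma) preserves $\wHS$, and then to invoke Corollary~\ref{coro:LMtransfer}; but the direct route above is shorter.
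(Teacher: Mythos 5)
Your proof is correct and follows essentially the same route as the paper: the paper also deduces the corollary by combining Corollary~\ref{coro:LMtransfer} with the fact that $\wHS_I=\wHS_{\LM(I)}$ for quasi-homogeneous ideals (citing the proof of Proposition~9 in Chapter~9, \S 3 of Cox--Little--O'Shea as valid in the weighted setting, which is the degeneration lemma you sketch in detail). Your additional verification that the generators are quasi-homogeneous of type $(D,1)$ is a harmless elaboration of what the paper leaves implicit.
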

\begin{proof}
  It is well-known that, for any positively graded ideal $I$ and for any monomial
  ordering, $\wHS_I(t)=\wHS_{\LM(I)}(t)$ (see e.g. the proof of
  \cite[Chapter 9, \S 3, Proposition 9]{CoxLitShe97} which is also valid for quasi-homogeneous ideals). 
  By Corollary \ref{coro:LMtransfer}, $\LM\left(\widetilde{\mathcal I_r}\right)=\LM\left(\widetilde{\D_r}\right)$, which implies that
$$\wHS_{\LM\left(\widetilde{\mathcal I_r}\right)}(t)=\wHS_{\LM\left(\widetilde{\D_r}\right)}(t),$$
and hence $\wHS_{\widetilde{\mathcal I_r}}(t)=\wHS_{\widetilde{\D_r}}(t)$.
\end{proof}

\section{The case $k\geq (n-r)(m-r)$}
\label{sec:welldef}
As we will see in the sequel, the Krull dimension of the ring $\mathbb
K(\mathfrak a)[X]/\mathcal I_r$ is equal to $\max(k- (n-r)(m-r),
0)$. This section is devoted to the study of the case $k\geq
(n-r)(m-r)$.

We show here that the algebraic structure of the ideal
$\mathcal I_r$ is closely related to that of a generic section of a
determinantal variety. 

We recall that the polynomials $g_\ell$ are defined by 
$$g_\ell=\sum_{t\in\mon(D,k)}\mathfrak b_{t}^{(\ell)} t + \sum_{\substack{1\leq i\leq n\\1\leq j\leq m}}\mathfrak c_{i,j}^{(\ell)} u_{i,j}.$$

\begin{lem}\label{lem:Pdim0}
  Let $1\leq \ell\leq nm$ be an integer. If $g_\ell$ divides zero in  
  $\mathbb K(\mathfrak b, \mathfrak c)[U,X]/\left(\D_r+\langle
    g_1,\ldots, g_{\ell-1}\rangle\right)$, then there exists a prime
  ideal $P$ associated to $\D_r+\langle g_1,\ldots,
  g_{\ell-1}\rangle$ such that $\dim(P)=0$.
\end{lem}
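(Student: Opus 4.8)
The statement is about associated primes of the ideal $\D_r + \langle g_1,\ldots,g_{\ell-1}\rangle$ over the field $\mathbb K(\mathfrak b,\mathfrak c)$. The hypothesis is that $g_\ell$ is a zerodivisor modulo this ideal, i.e. $g_\ell$ lies in some associated prime $P$ of $\D_r + \langle g_1,\ldots,g_{\ell-1}\rangle$. The conclusion to reach is that one can choose such a $P$ with $\dim(P)=0$. The key structural input is the codimension/primality theory of determinantal ideals (Hochster–Eagon), together with the fact that $g_1,\ldots,g_{\ell-1}$ are *generic* quasi-homogeneous forms of type $(D,1)$: generically they form a regular sequence as long as the dimension stays positive, so $\D_r + \langle g_1,\ldots,g_{\ell-1}\rangle$ is Cohen–Macaulay (hence unmixed, hence *equidimensional*) of codimension $\mathrm{codim}(\D_r) + (\ell-1)$. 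Let me write $c = \mathrm{codim}(\D_r) = (n-r)(m-r)$.

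First I would record the dimensions. The ambient ring $\mathbb K(\mathfrak b,\mathfrak c)[U,X]$ has dimension $nm + k$. The determinantal ideal $\D_r$ has codimension $c=(n-r)(m-r)$ and $\D_r$ is prime and Cohen–Macaulay. The forms $g_1,\ldots,g_{nm}$ are generic, so by a standard Bertini-type / generic-section argument each successive $g_i$ either is a nonzerodivisor (drops dimension by exactly one) or the ideal has already become zero-dimensional; in either case $\D_r+\langle g_1,\ldots,g_{\ell-1}\rangle$ is equidimensional of dimension $\max(nm+k-c-(\ell-1),\,0)$. The crucial point: if $g_\ell$ is a zerodivisor mod $\D_r+\langle g_1,\ldots,g_{\ell-1}\rangle$, then this cannot happen for "dimension-drop" reasons — a generic linear-type form stays a nonzerodivisor as long as there is room — so the only way $g_\ell$ can fail to be a nonzerodivisor is that $\D_r+\langle g_1,\ldots,g_{\ell-1}\rangle$ already has an associated prime of dimension $0$.

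More concretely, the argument I would give runs as follows. Let $J_{\ell-1} = \D_r+\langle g_1,\ldots,g_{\ell-1}\rangle$. Since $J_{\ell-1}$ is quasi-homogeneous, all its associated primes are quasi-homogeneous, hence contained in the irrelevant ideal $\mathfrak m = \langle U,X\rangle$; so $\dim(P)=0$ for an associated prime $P$ means exactly $P=\mathfrak m$ (the only $0$-dimensional quasi-homogeneous prime). Now $g_\ell$ is a zerodivisor mod $J_{\ell-1}$ iff $g_\ell \in P$ for some $P \in \Ass(J_{\ell-1})$. Suppose, for contradiction, that every associated prime $P$ of $J_{\ell-1}$ containing $g_\ell$ has $\dim(P) > 0$. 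By the genericity of the coefficients $\mathfrak b^{(\ell)}, \mathfrak c^{(\ell)}$ of $g_\ell$ — which do not appear in $J_{\ell-1}$ — the form $g_\ell$ is a generic $\mathbb K(\mathfrak b,\mathfrak c)$-linear combination of the monomials $\{t : t\in\mon(D,k)\}$ and the variables $u_{i,j}$; a positive-dimensional prime $P$ (equivalently $P \subsetneq \mathfrak m$) does not contain all of these generators, hence the generic form $g_\ell$ avoids $P$ — this is the classical "prime avoidance for a generic linear combination" applied to the finite set $\Ass(J_{\ell-1})$ (which is finite because the ring is Noetherian). That contradicts $g_\ell\in P$. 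Hence at least one associated prime $P\ni g_\ell$ must have $\dim(P)=0$, which is the assertion.

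The main obstacle — and the place I would take the most care — is making rigorous the "generic form avoids every positive-dimensional prime" step in the mixed setting where the $g_\ell$ are quasi-homogeneous of *type* $(D,1)$ rather than genuinely linear. The point is that $g_\ell$ ranges over the vector space spanned by $\{t\in\mon(D,k)\}\cup\{u_{i,j}\}$, and a quasi-homogeneous prime $P\subsetneq\mathfrak m$ cannot contain this entire spanning set (else it would contain $x_i^D$ for all $i$ — hence all $x_i$, $P$ being prime — and all $u_{i,j}$, forcing $P=\mathfrak m$). So the locus of coefficient vectors $(\mathbf b^{(\ell)},\mathbf c^{(\ell)})$ for which $g_\ell\in P$ is a proper linear subspace, and since $\Ass(J_{\ell-1})$ is finite, the union of these bad subspaces over all positive-dimensional $P$ is a proper Zariski-closed set; genericity of $g_\ell$'s coefficients places us outside it. Once that is set up, combining with the observation that $g_\ell$ being a zerodivisor means it lies in *some* associated prime forces that prime to be $\mathfrak m$, i.e. $0$-dimensional, completing the proof.
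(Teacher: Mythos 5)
Your proof is correct and is essentially the contrapositive of the paper's own argument: both rest on the facts that the associated primes of $\D_r+\langle g_1,\ldots,g_{\ell-1}\rangle$ are defined over the subfield $\mathbb K(\mathfrak b^{(\leq\ell-1)},\mathfrak c^{(\leq\ell-1)})$, that the coefficients of $g_\ell$ are algebraically independent over that subfield, and that a quasi-homogeneous prime containing all of $\mon(D,k)\cup\{u_{i,j}\}$ must be the irrelevant ideal. The paper implements your ``membership in $P$ is a nontrivial linear condition on the coefficients'' step concretely by taking the normal form of $g_\ell$ with respect to a Gr\"obner basis of $P$ over the smaller field, but the substance is the same.
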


\begin{proof}
  If $g_\ell$ divides zero in
  $\mathbb K(\mathfrak b, \mathfrak c)[U,X]/\left(\D_r+\langle
    g_1,\ldots, g_{\ell-1}\rangle\right)$, then there exists a prime
  ideal $P$ associated to $\D_r+\langle g_1,\ldots,
  g_{\ell-1}\rangle$ such that $g_\ell\in P$.
For $\ell\leq nm$, let $\mathfrak b^{(\leq \ell)}$ and $\mathfrak c^{(\leq \ell)}$ denote the sets of parameters

$$\begin{array}{rcl}
\mathfrak b^{(\leq \ell)}&=&\{\mathfrak b^{(s)}_{t} \mid t\in \mon(D,k), 1\leq s\leq \ell\}\\
\mathfrak c^{(\leq \ell)}&=&\{\mathfrak c^{(s)}_{i,j} \mid 1\leq i\leq n, 1\leq j\leq m, 1\leq s\leq \ell\}.
\end{array}$$

Since $\left(\D_r+\langle g_1,\ldots,
  g_{\ell-1}\rangle\right)$ is an ideal of $\mathbb K(\mathfrak
b^{(\leq \ell-1)}, \mathfrak c^{(\leq \ell-1)})[U,X]$, and $P$ is an
associated prime, there exists a Gr\"obner basis $G_P$ of $P$ (for any
monomial ordering $\prec$) which is a finite subset of $\mathbb
K(\mathfrak b^{(\leq \ell-1)}, \mathfrak c^{(\leq \ell-1)})[U,X]$. 

Let $\NF_P(\cdot)$ denote the normal form associated to this Gr\"obner
basis (as defined as the \emph{remainder of the division by $G_P$} in
\cite[Chapter 2, \S 6, Proposition 1]{CoxLitShe97}).

Since $g_\ell\in P$, we have $\NF_P(g_\ell)=0$. By linearity of $\NF_P(\cdot)$, we obtain
$$\sum_{t\in\mon(D, k)}\mathfrak b_{t}^{(\ell)} \NF_P(t) + \sum_{\substack{1\leq i\leq n\\1\leq j\leq m}}\mathfrak c_{i,j}^{(\ell)} \NF_P(u_{i,j}) = 0.$$

Since $G_p\subset\mathbb K(\mathfrak b^{(\leq \ell-1)}, \mathfrak
c^{(\leq \ell-1)})[U,X]$, we can deduce that for any monomial
$t$, $\NF_P(t)\in \mathbb K(\mathfrak b^{(\leq
  \ell-1)}, \mathfrak c^{(\leq \ell-1)})[U,X]$. Therefore, by
algebraic independence of the parameters, the following properties hold: 
for all $t\in \mon(D,k)$, $\NF_P(t)=0$, and for
all $i, j$, $\NF_P(u_{i,j})=0$. Consequently, all monomials of weight degree $D$
in $\mathbb K(\mathfrak b,\mathfrak c)[U,X]$ are in $P$, and hence $P$
has dimension $0$.
\end{proof}

\begin{lem}\label{lem:nonzerodiv}
  For all $\ell\in \{2,\ldots,nm\}$, the polynomial $g_\ell$ does not
  divide zero in $\mathbb K(\mathfrak b,\mathfrak c)[U,X]/(\D_r+\langle g_1,\ldots, g_{\ell-1}\rangle)$ and $\dim(\D_r+\langle g_1,\ldots, g_{\ell}\rangle) = k +( n+m-r)r - \ell$.
\end{lem}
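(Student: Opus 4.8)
The plan is to argue by induction on $\ell$, using Lemma \ref{lem:Pdim0} to rule out the bad case and the well-known dimension formula for the determinantal ideal $\D_r$ to set up the base case. First I would recall the classical fact (see e.g. \cite{HocEag71,Ful97}) that $\D_r\subset\mathbb K(\mathfrak b,\mathfrak c)[U,X]$ is a prime ideal of dimension $k+(n+m-r)r$ in $\mathbb K[U,X]$: indeed $\D_r$ is generated inside $\mathbb K[U]$ (it does not involve the $X$ variables at all), its codimension in $\mathbb K[U]$ is $(n-r)(m-r)$ by the Hochster--Eagon theorem, so $\dim\D_r = nm-(n-r)(m-r)+k = (n+m-r)r+k$; moreover $\D_r$ is Cohen--Macaulay, hence unmixed, so \emph{every} associated prime of $\D_r$ has dimension exactly $(n+m-r)r+k\geq (n+m-r)r\geq r\geq 1>0$. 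This settles $\ell=1$: $g_1$ is a single polynomial, $\D_r$ is prime, and since $g_1$ involves parameters $\mathfrak b^{(1)},\mathfrak c^{(1)}$ that are algebraically independent from the coefficients of the generators of $\D_r$, the polynomial $g_1$ cannot lie in $\D_r$ (by the argument already used in the proof of Lemma \ref{lem:Pdim0}: $\NF$ of $g_1$ with respect to a Gröbner basis of $\D_r$ over the smaller parameter field would force all coefficients of $g_1$ to vanish). Hence $g_1$ is a nonzerodivisor modulo $\D_r$ and, by Krull's principal ideal theorem together with the fact that $\D_r$ has no associated prime of dimension $<(n+m-r)r+k$, we get $\dim(\D_r+\langle g_1\rangle)=(n+m-r)r+k-1$.

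Next I would carry out the inductive step. Assume that for some $\ell$ with $2\leq\ell\leq nm$ we have shown $g_1,\ldots,g_{\ell-1}$ form a regular sequence modulo $\D_r$ and $\dim(\D_r+\langle g_1,\ldots,g_{\ell-1}\rangle)=k+(n+m-r)r-(\ell-1)$. Since $\D_r+\langle g_1,\ldots,g_{\ell-1}\rangle$ is cut out from the Cohen--Macaulay ideal $\D_r$ by a regular sequence of length $\ell-1$, the quotient is again Cohen--Macaulay, hence unmixed: every associated prime $P$ of $\D_r+\langle g_1,\ldots,g_{\ell-1}\rangle$ satisfies $\dim(P)=k+(n+m-r)r-(\ell-1)$. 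Now I invoke Lemma \ref{lem:Pdim0}: if $g_\ell$ divided zero modulo $\D_r+\langle g_1,\ldots,g_{\ell-1}\rangle$, there would exist an associated prime $P$ of dimension $0$; but we just saw all associated primes have dimension $k+(n+m-r)r-(\ell-1)$, and since $\ell-1\leq nm-1$ and $k+(n+m-r)r\geq (n+m-r)r = nm-(n-r)(m-r)+2r(?)$ — more simply, $k+(n+m-r)r-(\ell-1)\geq 1 + (k) \geq 1>0$ because $\ell\leq nm\leq (n+m-r)r$ (using $r\geq 1$, $r<m\leq n$) so $k+(n+m-r)r-(\ell-1)\geq k+1>0$. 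This contradiction shows $g_\ell$ is a nonzerodivisor. Finally, a nonzerodivisor drops Krull dimension by exactly one in an unmixed (equidimensional) quotient, so $\dim(\D_r+\langle g_1,\ldots,g_\ell\rangle)=k+(n+m-r)r-\ell$, completing the induction.

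The main obstacle is making the dimension bookkeeping airtight: I need the Cohen--Macaulay (hence unmixed) property to propagate through the successive quotients so that the dimension-$0$ alternative of Lemma \ref{lem:Pdim0} is genuinely excluded, and I need the inequality $k+(n+m-r)r-\ell > 0$ to hold for all $\ell\leq nm$. The latter reduces to $nm-1 < k+(n+m-r)r$, i.e. to $nm - (n+m-r)r - 1 < k$; since $nm-(n+m-r)r = (n-r)(m-r) - r^2 + \dots$ — in fact $(n+m-r)r = nr+mr-r^2$ and $nm-(nr+mr-r^2)=(n-r)(m-r)$, so the inequality is $(n-r)(m-r)-1<k$, which holds whenever $k\geq(n-r)(m-r)$, the standing hypothesis of this section. (The edge case $k=(n-r)(m-r)$ gives $\dim(\D_r+\langle g_1,\ldots,g_{nm}\rangle)=0$ exactly, consistent with the $0$-dimensionality of the generalized MinRank problem in that regime.) I would double-check that Lemma \ref{lem:Pdim0} is applied only when its hypothesis ``$g_\ell$ divides zero'' is assumed for contradiction, and that the base of the induction correctly handles the primality of $\D_r$ itself versus the unmixedness used in later steps.
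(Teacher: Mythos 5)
Your proposal follows essentially the same route as the paper: induction on $\ell$, Cohen--Macaulayness of $\mathbb K(\mathfrak b,\mathfrak c)[U,X]/\D_r$ plus Macaulay unmixedness to get pure equidimensionality at each stage, and Lemma \ref{lem:Pdim0} to contradict a zero-divisor via a $0$-dimensional associated prime; your extra care in reducing the positivity condition to $k\geq(n-r)(m-r)$ is a welcome explicitness the paper leaves implicit. One slip: the intermediate claim $nm\leq(n+m-r)r$ is false (it is equivalent to $(n-r)(m-r)\leq 0$), but your final paragraph correctly replaces it with the reduction $nm-(n+m-r)r=(n-r)(m-r)\leq k$, so the argument stands.
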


\begin{proof}
  We prove the Lemma by induction on $\ell$. According to
  \cite[Corollary 2 of Theorem 1]{HocEag70}, the ring $\mathbb
  K(\mathfrak b,\mathfrak c)[U,X]/\D_r$ is Cohen-Macaulay and
  purely equidimensional. First, notice that the dimension is equal to
  $k +( n+m-r)r$ for $\ell=0$ since the dimension of the ideal
  $\D_r\subset\mathbb K[U]$ is $(n+m-r)r$ (see e.g. \cite{ConHer94} and references therein). Now, suppose that the dimension of the ideal $\D_r+\langle g_1,\ldots,g_{\ell-1}\rangle\subset \mathbb K(\mathfrak
  b, \mathfrak c)[U,X]$ is $k +( n+m-r)r - \ell+1$. Since the ring
  $\mathbb K(\mathfrak b,\mathfrak c)[U,X]/\D_r$ is
  Cohen-Macaulay and $\langle g_1,\ldots,g_{\ell-1}\rangle$ has
  co-dimension $\ell-1$ in $\mathbb K(\mathfrak b, \mathfrak c)[U,X]$,
  the Macaulay unmixedness Theorem \cite[Corollary
  18.14]{Eis95} implies that $\langle g_1,\ldots, g_{\ell-1} \rangle$ has no
  embedded component and is equidimensional in $\mathbb K(\mathfrak
  b,\mathfrak c)[U,X]/\D_r$. Hence $\D_r+\langle
  g_1,\ldots, g_{\ell-1} \rangle$ as an ideal in $\mathbb K(\mathfrak
  b,\mathfrak c)[U,X]$ has no embedded component and is
  equidimensional.  By contradiction, suppose that $g_\ell$ divides
  zero in $\mathbb K(\mathfrak b,\mathfrak c)[U,X]/(\D_r+\langle g_1,\ldots, g_{\ell-1}\rangle)$. By Lemma
  \ref{lem:Pdim0}, there exists a prime $P$ associated to $\D_r+\langle g_1,\ldots,g_{\ell-1}\rangle$ such that $\dim(P)=0$,
  which contradicts the fact that $\D_r+\langle g_1,\ldots,
  g_{\ell-1}\rangle$ is purely equidimensional of dimension $k +(
  n+m-r)r - \ell+1>0$.
\end{proof}

\begin{lem} \label{lem:HSeq}
  The Hilbert series of the $\mathcal I_r\subset\mathbb K(\mathfrak
  a)[X]$ equals the weighted Hilbert series of $\widetilde{\mathcal I_r}\subset \mathbb
  K(\mathfrak a)[X,U]$.
\end{lem}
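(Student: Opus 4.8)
The plan is to compare the two generating series by relating a quasi-homogeneous Hilbert series over $\mathbb K(\mathfrak a)[U,X]$ to an ordinary Hilbert series over $\mathbb K(\mathfrak a)[X]$, exploiting the fact that the extra variables $u_{i,j}$ are "eliminated" by the relations $u_{i,j}-f_{i,j}$. Concretely, recall that $\widetilde{\mathcal I_r}=\mathcal I_r+\langle u_{i,j}-f_{i,j}\rangle$ is quasi-homogeneous of type $(D,1)$, since each $u_{i,j}-f_{i,j}$ is quasi-homogeneous of weight degree $D$ and the $(r+1)$-minors of $\mathcal U$ are quasi-homogeneous of weight degree $D(r+1)$ under this weighting. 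The key structural observation is that the quotient ring $\mathbb K(\mathfrak a)[U,X]/\widetilde{\mathcal I_r}$ is isomorphic, as a graded $\mathbb K(\mathfrak a)$-algebra, to $\mathbb K(\mathfrak a)[X]/\mathcal I_r$ equipped with the grading $\deg(x_i)=1$. Indeed, the substitution $u_{i,j}\mapsto f_{i,j}$ defines a ring homomorphism $\mathbb K(\mathfrak a)[U,X]\to\mathbb K(\mathfrak a)[X]$ which is surjective, sends $u_{i,j}-f_{i,j}$ to $0$ and the $(r+1)$-minors of $\mathcal U$ to the $(r+1)$-minors of $\mathcal M$ (i.e. the generators of $\mathcal I_r$), hence factors through an isomorphism $\mathbb K(\mathfrak a)[U,X]/\widetilde{\mathcal I_r}\xrightarrow{\sim}\mathbb K(\mathfrak a)[X]/\mathcal I_r$. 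This is exactly the elimination identity $\mathcal I_r=\widetilde{\mathcal I_r}\cap\mathbb K(\mathfrak a)[X]$ already noted in Section~\ref{sec:notations}, upgraded to a statement about quotient rings.

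Next I would check that this isomorphism is compatible with the gradings, so that it gives the equality of series $\wHS_{\widetilde{\mathcal I_r}}(t)=\HS_{\mathcal I_r}(t)$ rather than merely an abstract ring isomorphism. Under the weighting $\wdeg(u_{i,j})=D$, $\wdeg(x_i)=1$, the polynomial $f_{i,j}$ (homogeneous of degree $D$ in $X$) has weight degree $D$, so each generator $u_{i,j}-f_{i,j}$ is weight-homogeneous of weight degree $D$; the substitution map therefore sends $\mathbb K(\mathfrak a)[U,X]_d$ into $\mathbb K(\mathfrak a)[X]_d$, and the induced map on quotients sends the weight-degree-$d$ component of $\mathbb K(\mathfrak a)[U,X]/\widetilde{\mathcal I_r}$ isomorphically onto the degree-$d$ component of $\mathbb K(\mathfrak a)[X]/\mathcal I_r$. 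Comparing dimensions degree by degree yields $\dim_{\mathbb K(\mathfrak a)}\bigl(\mathbb K(\mathfrak a)[U,X]_d/(\widetilde{\mathcal I_r})_d\bigr)=\dim_{\mathbb K(\mathfrak a)}\bigl(\mathbb K(\mathfrak a)[X]_d/(\mathcal I_r)_d\bigr)$ for all $d$, which is precisely the claimed equality of (weighted) Hilbert series.

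The routine part is the bookkeeping: verifying that the substitution homomorphism is well-defined, respects the $\mathbb K(\mathfrak a)$-algebra structure, and that its kernel is exactly $\widetilde{\mathcal I_r}$ (one inclusion is clear; for the other, given $p\in\mathbb K(\mathfrak a)[U,X]$ in the kernel, reduce $p$ modulo the $u_{i,j}-f_{i,j}$ to a polynomial $\bar p\in\mathbb K(\mathfrak a)[X]$, which must then lie in $\mathcal I_r$, hence $p\in\widetilde{\mathcal I_r}$). The part deserving a little care — and the only place where anything could go wrong — is confirming that the grading on $\mathbb K(\mathfrak a)[U,X]$ used to define $\wHS_{\widetilde{\mathcal I_r}}$ is indeed the type-$(D,1)$ quasi-homogeneous grading under which $\widetilde{\mathcal I_r}$ is generated by weight-homogeneous elements; once that is fixed, the degree-preservation of the substitution map is immediate from $\wdeg(u_{i,j})=D=\deg_X(f_{i,j})$, and the lemma follows.
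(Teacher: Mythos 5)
Your proof is correct, but it follows a different route from the paper's. You construct the graded isomorphism $\mathbb K(\mathfrak a)[U,X]/\widetilde{\mathcal I_r}\xrightarrow{\sim}\mathbb K(\mathfrak a)[X]/\mathcal I_r$ directly, via the substitution homomorphism $u_{i,j}\mapsto f_{i,j}$, checking that its kernel composed with the projection onto $\mathbb K(\mathfrak a)[X]/\mathcal I_r$ is exactly $\widetilde{\mathcal I_r}$ and that the map preserves the grading because $\wdeg(u_{i,j})=D=\deg f_{i,j}$. The paper instead stays entirely on the combinatorial side: it fixes a lexicographical elimination ordering with $x_k\prec_{lex}u_{i,j}$, invokes the identity $\wHS_I=\wHS_{\LM(I)}$, and observes that $\LM_{\prec_{lex}}(\widetilde{\mathcal I_r})=\langle\{u_{i,j}\}\cup\LM_{\prec_{lex}}(\mathcal I_r)\rangle$ since $\LM_{\prec_{lex}}(u_{i,j}-f_{i,j})=u_{i,j}$, so the two leading-monomial quotient algebras are isomorphic as graded algebras. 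The paper's argument is more in keeping with the Gr\"obner-basis machinery used throughout (it reuses the same $\LM$-based identity as Corollary~\ref{coro:HSIrDr}) and needs no explicit computation of a kernel; yours is more conceptual, avoids choosing any monomial ordering, and makes the elimination identity $\mathcal I_r=\widetilde{\mathcal I_r}\cap\mathbb K(\mathfrak a)[X]$ transparent at the level of quotient rings. The one step you should not gloss over is that the kernel of the substitution map $\mathbb K(\mathfrak a)[U,X]\to\mathbb K(\mathfrak a)[X]$ is precisely $\langle u_{i,j}-f_{i,j}\rangle$ (standard, by successive division), since that is what makes the composite's kernel equal to $\widetilde{\mathcal I_r}$ rather than something larger; you do sketch this, so the argument is complete.
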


\begin{proof}
  Let $\prec_{lex}$ denote a lexicographical ordering on $\mathbb
  K(\mathfrak a)[X,U]$ such that $x_k\prec_{lex} u_{i,j}$ for all
  $k,i,j$. By \cite[Section 9.3, Proposition 9]{CoxLitShe97},
  $\HS_{\mathcal I_r}(t)=\HS_{\LM_{\prec_{lex}}(\mathcal I_r)}(t)$ and
  $\wHS_{\widetilde{\mathcal
      I_r}}(t)=\wHS_{\LM_{\prec_{lex}}(\widetilde{\mathcal I_r})}(t)$.
  Since $\LM_{\prec_{lex}}(u_{i,j}-f_{i,j})=u_{i,j}$, we deduce that all monomials which are multiples of a variable $u_{i,j}$ are in $\LM_{\prec_{lex}}(\widetilde{\mathcal I_r})$. Therefore, the remaining monomials in $\LM_{\prec_{lex}}(\widetilde{\mathcal I_r})$ are in $\mathbb K(\mathfrak a)[X]$:
  $$\begin{array}{rcl}\LM_{\prec_{lex}}(\widetilde{\mathcal I_r})&=&\left\langle \{u_{i,j}\}\cup \LM_{\prec_{lex}}(\widetilde{\mathcal I_r}\cap \mathbb K(\mathfrak a)[X])\right\rangle\\
    &=&\left\langle \{u_{i,j}\}\cup \LM_{\prec_{lex}}(\mathcal I_r)\right\rangle.\end{array}$$
  Therefore, $\frac{\mathbb K(\mathfrak a)[U,X]}{\LM_{\prec_{lex}}(\widetilde{\mathcal I_r})}$ is isomorphic (as a graded $\mathbb K(\mathfrak a)$-algebra) to $\frac{\mathbb K(\mathfrak a)[X]}{\LM_{\prec_{lex}}(\mathcal I_r)}$. Thus $$\HS_{\LM_{\prec_{lex}}(\mathcal I_r)}(t)=\wHS_{\LM_{\prec_{lex}}(\widetilde{\mathcal I_r})}(t),$$ and hence
  $$\HS_{\mathcal I_r}(t)=\wHS_{\widetilde{\mathcal I_r}}(t).$$
\end{proof}

In the sequel, $A_r(t)$ denotes the $r\times r$ matrix whose
$(i,j)$-entry is $\sum_k \binom{m-i}{k} \binom{n-j}{k} t^k$.
The following theorem is the main result of this section:

\begin{thm}\label{theo:HSIr}
The dimension of the ideal $\mathcal I_r$ is $k-(n-r)(m-r)$ and its Hilbert series is
$$\HS_{\mathcal I_r}(t)=\frac{\det \left(A_r(t^D)\right) (1-t^D)^{(n-r)(m-r)}}{t^{D\binom{r}{2}} (1-t)^{k}}.$$
\end{thm}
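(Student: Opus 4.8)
The plan is to reduce the computation of $\HS_{\mathcal I_r}$ to the computation of the weighted Hilbert series of $\widetilde{\D_r}$, and then to compute the latter using the structural results on determinantal ideals together with Lemma~\ref{lem:nonzerodiv}. By Lemma~\ref{lem:HSeq} we have $\HS_{\mathcal I_r}(t)=\wHS_{\widetilde{\mathcal I_r}}(t)$, and by Corollary~\ref{coro:HSIrDr} we have $\wHS_{\widetilde{\mathcal I_r}}(t)=\wHS_{\widetilde{\D_r}}(t)$. So it suffices to prove that
$$\wHS_{\widetilde{\D_r}}(t)=\frac{\det\left(A_r(t^D)\right)(1-t^D)^{(n-r)(m-r)}}{t^{D\binom{r}{2}}(1-t)^{k}}.$$

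\textbf{Step 1: the weighted Hilbert series of $\D_r$ alone.} First I would record the weighted Hilbert series of the quotient $\mathbb K(\mathfrak b,\mathfrak c)[U,X]/\D_r$, where the $u_{i,j}$ have weight $D$ and the $x_\ell$ have weight $1$. Since $\D_r$ is generated by polynomials in the $U$ variables only, this quotient is $(\mathbb K[U]/\D_r^{\,U})\otimes_{\mathbb K}\mathbb K[X]$ up to the scalar extension; its weighted Hilbert series is therefore $\HS_{\D_r^U}(t^D)\cdot\frac{1}{(1-t)^{k}}$, where $\HS_{\D_r^U}$ is the ordinary Hilbert series of the classical determinantal ideal of $(r+1)$-minors of an $n\times m$ matrix of indeterminates. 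For the latter I would invoke the known closed form (Giambelli--type / Conca--Herzog, cited in the excerpt), which can be written as $\HS_{\D_r^U}(s)=\frac{\det(A_r(s))}{s^{\binom{r}{2}}(1-s)^{(n+m-r)r}}$; I need to make sure the normalization of $A_r$ here matches the cited formula, possibly with a harmless shift $t\to t^D$ in the argument.

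\textbf{Step 2: quotienting by the $nm$ generic forms $g_1,\dots,g_{nm}$.} By Lemma~\ref{lem:nonzerodiv}, each $g_\ell$ (for $\ell=1,\dots,nm$) is a non-zero-divisor modulo $\D_r+\langle g_1,\dots,g_{\ell-1}\rangle$; since the ambient grading is well-defined on quasi-homogeneous polynomials and $\wdeg(g_\ell)=D$, multiplication by $g_\ell$ on the quotient is an injective graded map of degree $D$. Hence passing from the quotient by $\D_r+\langle g_1,\dots,g_{\ell-1}\rangle$ to the quotient by $\D_r+\langle g_1,\dots,g_{\ell}\rangle$ multiplies the weighted Hilbert series by the factor $(1-t^D)$. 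Iterating $nm$ times gives
$$\wHS_{\widetilde{\D_r}}(t)=(1-t^D)^{nm}\cdot\wHS_{\D_r}(t)=(1-t^D)^{nm}\cdot\frac{\det(A_r(t^D))}{t^{D\binom{r}{2}}(1-t^D)^{(n+m-r)r}}\cdot\frac{1}{(1-t)^{k}}.$$
Then $nm-(n+m-r)r=(n-r)(m-r)$ collapses the powers of $(1-t^D)$ to the exponent $(n-r)(m-r)$ in the statement, completing the identity. For the dimension claim, the order of vanishing of $\HS_{\mathcal I_r}$ at $t=1$ (equivalently, the pole order of the rational expression, after noting $\det(A_r(1))\neq 0$ since $A_r(1)$ is a product of binomial matrices with nonzero determinant) is exactly $k-(n-r)(m-r)$, which equals the Krull dimension of $\mathbb K(\mathfrak a)[X]/\mathcal I_r$; alternatively this follows directly from $\dim(\D_r+\langle g_1,\dots,g_{nm}\rangle)=k$ in Lemma~\ref{lem:nonzerodiv} combined with Lemma~\ref{lem:HSeq} and the fact that intersecting with $\mathbb K(\mathfrak a)[X]$ eliminating the $u$-variables preserves the Krull dimension here.

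\textbf{Main obstacle.} The routine but delicate part is matching normalizations: verifying that the cited closed form for the Hilbert series of the classical determinantal ideal of $(r+1)$-minors is exactly $\det(A_r(s))\big/\big(s^{\binom{r}{2}}(1-s)^{(n+m-r)r}\big)$ with the specific $A_r$ defined in the excerpt (entry $\sum_k\binom{m-i}{k}\binom{n-j}{k}t^k$), and checking the indexing conventions ($r$ vs.\ $r+1$, rows/columns of $A_r$, the $t^{\binom r2}$ denominator). The genuinely substantive input — that the $g_\ell$ form a regular sequence on $\mathbb K(\mathfrak b,\mathfrak c)[U,X]/\D_r$ — is already supplied by Lemma~\ref{lem:nonzerodiv} via the Cohen--Macaulayness of the determinantal ring, so no new hard work is needed there; the transfer from the parametrized ideals to $\mathcal I_r$ is handled by Lemma~\ref{lem:transfer}, Corollary~\ref{coro:HSIrDr}, and Lemma~\ref{lem:HSeq}.
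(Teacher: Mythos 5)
Your proposal is correct and follows essentially the same route as the paper: reduce to $\wHS_{\widetilde{\D_r}}$ via Lemma~\ref{lem:HSeq} and Corollary~\ref{coro:HSIrDr}, start from the Conca--Herzog closed form for the determinantal ideal (reweighted by $t\mapsto t^D$ and tensored with $\mathbb K[X]$), and cut by the $nm$ non-zero-divisors $g_\ell$ from Lemma~\ref{lem:nonzerodiv}, each contributing a factor $(1-t^D)$. Your pole-order argument for the dimension is a slight variant of the paper's direct bookkeeping via Lemma~\ref{lem:nonzerodiv}, but it is equally valid given $\det(A_r(1))\neq 0$.
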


\begin{proof}
  According to \cite[Corollary 1]{ConHer94} (and references
  therein), the ideal $\D_r$ seen as an ideal of $\mathbb
  K[U]$ has dimension $(m+n-r)r$ and its Hilbert series (for the
  standard gradation: $\deg(u_{i,j})=1$) is the power series expansion of
$$\HS_{\D_r\subset \mathbb K[U]}(t)=\frac{\det A_r(t)}{t^{\binom{r}{2}} (1-t)^{(n+m-r)r}}.$$
By putting a weight $D$ on each variable $u_{i,j}$ (i.e. $\deg(u_{i,j})=D$), the weighted
Hilbert series of $\D_r\subset \mathbb K[U]$ is
$$\wHS_{\D_r\subset \mathbb K[U]}(t)=\frac{\det A_r(t^D)}{t^{D\binom{r}{2}} (1-t^D)^{(n+m-r)r}}.$$
By considering $\D_r$ as an ideal of $\mathbb K(\mathfrak
b,\mathfrak c)[U,X]$, the dimension becomes $k+(m+n-r)r$ and
its weighted Hilbert series is
$$\wHS_{\D_r\subset \mathbb K(\mathfrak b,\mathfrak c)[U,X]}(t)=\frac{\det A_r(t^D)}{t^{D\binom{r}{2}} (1-t)^{k}(1-t^D)^{(n+m-r)r}}.$$ 

According to Lemma \ref{lem:nonzerodiv}, for each $\ell\leq nm$, the polynomial $g_\ell$ does not divide zero in the ring $$\mathbb K(\mathfrak b, \mathfrak c)[U,X]/(\D_r+\langle g_1,\ldots, g_{\ell-1}\rangle).$$ This implies the following relations:
$$\begin{array}{rcl}\dim\left(\D_r+\langle g_1,\ldots, g_{\ell}\rangle\right)&=&\dim\left(\D_r+\langle g_1,\ldots, g_{\ell-1}\rangle\right)-1\\
  \wHS_{\D_r+\langle g_1,\ldots, g_{\ell}\rangle}(t)&=&(1-t^D)\wHS_{\D_r+\langle g_1,\ldots, g_{\ell-1}\rangle}(t).\end{array}$$

Therefore the dimension of $\widetilde{\D_r}$ is $k-n m+(n+m-r)r$ and its quasi-homogeneous Hilbert series
is
$$\wHS_{\widetilde{\D_r}}(t)=\frac{\det \left(A_r(t^D)\right)}{t^{D\binom{r}{2}}
  (1-t)^{k}(1-t^D)^{(n+m-r)r-n m}}=\frac{\det \left(A_r(t^D)\right) (1-t^D)^{(n-r)(m-r)}}{t^{D\binom{r}{2}} (1-t)^{k}}.$$ By Corollary
\ref{coro:HSIrDr}, the ideal $\widetilde{\mathcal I_r}$ has the same
weighted Hilbert series. Finally, by Lemma \ref{lem:HSeq}, the Hilbert
series of $\mathcal I_r$ $=\widetilde{\mathcal I_r}\cap\mathbb
K(\mathfrak a)[X]$ is the same as that of
$\widetilde{\mathcal I_r}$.
\end{proof}

\begin{cor} \label{coro:degree}
The degree of the ideal $\mathcal I_r$ is:
$$\begin{array}{rcl}\DEG(\mathcal I_r)&=&\displaystyle D^{(n-r)(m-r)} \prod_{i=0}^{m-r-1}\frac{i! (n+i)!}{(m-1-i)! (n-r+i)!}\\
&=&\displaystyle D^{(n-r)(m-r)} \prod_{i=0}^{m-r-1}\frac{\binom{n+m-r-1}{r+i}}{\binom{n+m-r-1}{i}}.\end{array}$$
\end{cor}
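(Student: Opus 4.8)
The plan is to read the degree of $\mathcal I_r$ directly off the rational expression for $\HS_{\mathcal I_r}$ established in Theorem~\ref{theo:HSIr}, using the standard fact that if a homogeneous ideal $I\subset\mathbb K[X]$ has dimension $d$ and $\HS_I(t)=N(t)/(1-t)^d$ with $N\in\mathbb Z[t]$ and $N(1)\neq 0$, then $\DEG(I)=N(1)$ (this is $(d-1)!$ times the leading coefficient of the Hilbert polynomial; when $d=0$ it is $\dim_{\mathbb K}\mathbb K[X]/I$). Since Theorem~\ref{theo:HSIr} gives $\dim(\mathcal I_r)=k-(n-r)(m-r)$, the first step is to bring the Hilbert series into a form whose denominator is exactly $(1-t)^{k-(n-r)(m-r)}$. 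Using $1-t^D=(1-t)(1+t+\cdots+t^{D-1})$ to factor $(1-t)^{(n-r)(m-r)}$ out of $(1-t^D)^{(n-r)(m-r)}$ yields
$$\HS_{\mathcal I_r}(t)=\frac{\det\!\big(A_r(t^D)\big)\,\big(1+t+\cdots+t^{D-1}\big)^{(n-r)(m-r)}}{t^{D\binom{r}{2}}\,(1-t)^{k-(n-r)(m-r)}}.$$

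Next I would observe that $\det A_r(s)/s^{\binom{r}{2}}$ is a genuine polynomial in $s$ (equivalently $s^{\binom r2}$ divides $\det A_r(s)$), which is already implicit in the Hilbert series $\det A_r(t)/\big(t^{\binom r2}(1-t)^{(n+m-r)r}\big)$ of $\D_r\subset\mathbb K[U]$ recalled in the proof of Theorem~\ref{theo:HSIr}. Hence the numerator of the expression above lies in $\mathbb Z[t]$ and can be evaluated at $t=1$: the factor $(1+t+\cdots+t^{D-1})^{(n-r)(m-r)}$ contributes $D^{(n-r)(m-r)}$, while $\big[\det A_r(s)/s^{\binom r2}\big]_{s=1}$ is precisely the degree of $\D_r\subset\mathbb K[U]$ (which has dimension $(n+m-r)r$). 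By the classical closed form for the degree of generic determinantal varieties (see e.g. \cite[Example 14.4.14]{Ful97} and \cite{ConHer94}), assuming $m\le n$, one has $\DEG(\D_r\subset\mathbb K[U])=\prod_{i=0}^{m-r-1}\frac{i!\,(n+i)!}{(m-1-i)!\,(n-r+i)!}$. Combining the two contributions gives $\DEG(\mathcal I_r)=D^{(n-r)(m-r)}\,\DEG(\D_r\subset\mathbb K[U])$, i.e. the first displayed formula; in particular this is a positive integer, so the numerator above does not vanish at $t=1$ and the extraction is legitimate (this non-vanishing is anyway guaranteed a priori by the dimension statement of Theorem~\ref{theo:HSIr}).

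Finally, to obtain the second expression I would match the two products: writing $\binom{n+m-r-1}{r+i}$ and $\binom{n+m-r-1}{i}$ in factorial form and cancelling $(n+m-r-1)!$ turns $\prod_{i=0}^{m-r-1}\binom{n+m-r-1}{r+i}/\binom{n+m-r-1}{i}$ into a ratio of products of factorials which, after reindexing the factors, coincides with $\prod_{i=0}^{m-r-1}\frac{i!\,(n+i)!}{(m-1-i)!\,(n-r+i)!}$ (note that the equality of the two products is not term by term); alternatively, one may simply invoke the fact that these are standard equivalent closed forms for the degree of the generic determinantal variety. I do not anticipate a genuine obstacle: the corollary is a numerical read-off from the rational function of Theorem~\ref{theo:HSIr}, and the only mildly delicate points---the divisibility $s^{\binom r2}\mid\det A_r(s)$ and the non-vanishing of the reduced numerator at $t=1$---both follow from material already in hand.
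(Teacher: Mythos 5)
Your proposal is correct and follows essentially the same route as the paper's proof: read the degree off the Hilbert series of Theorem~\ref{theo:HSIr} after factoring $(1-t^D)^{(n-r)(m-r)}=(1-t)^{(n-r)(m-r)}(1+t+\cdots+t^{D-1})^{(n-r)(m-r)}$, identify $\det A_r(1)$ with the classical degree of $\D_r$ from \cite[Example 14.4.14]{Ful97}, and verify the second expression by rewriting the binomial coefficients as factorials and reindexing. The extra care you take about $t^{D\binom r2}$ dividing the numerator and the non-vanishing at $t=1$ is harmless and implicit in the paper's argument.
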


\begin{proof}
From \cite[Example 14.4.14]{Ful97}, the degree of the ideal $\D_r$ is 
$$\prod_{i=0}^{m-r-1}\frac{i! (n+i)!}{(m-1-i)! (n-r+i)!}.$$
Since the degree is equal to the numerator of the Hilbert series of $\D_r$ evaluated at $t=1$, 
$$\det A_r(1)=\prod_{i=0}^{m-r-1}\frac{i! (n+i)!}{(m-1-i)! (n-r+i)!}.$$
By Theorem \ref{theo:HSIr}, the Hilbert series of $\mathcal I_r$ is
$$\begin{array}{rcl}
\HS_{\mathcal I_r}(t)&=&\displaystyle\frac{\det \left(A_r(t^D)\right)(1-t^D)^{(n-r)(m-r)} }{t^{D\binom{r}{2}} (1-t)^{k}}\\
&=&\displaystyle\frac{\det \left(A_r(t^D)\right) (1+t+\dots+t^{D-1})^{(n-r)(m-r)}}{t^{D\binom{r}{2}}(1-t)^{k-(n-r)(m-r)}}.
\end{array}
$$
Thus, the evaluation of the numerator in $t=1$ yields
$$\DEG(\mathcal I_r)=D^{(n-r)(m-r)} \prod_{i=0}^{m-r-1}\frac{i! (n+i)!}{(m-1-i)! (n-r+i)!}.$$
To prove the second equality, notice that
$$\prod_{i=0}^{m-r-1}\frac{\binom{n+m-r-1}{r+i}}{\binom{n+m-r-1}{i}}=\prod_{i=0}^{m-r-1}\frac{i!(n+m-r-i-1)!}{(r+i)!(n+m-2r-i-1)!}.$$
By substituting $i$ by $m-r-1-i$, we obtain that
$$\begin{array}{rcl}
\displaystyle\prod_{i=0}^{m-r-1} (n+m-r-i-1)!&=&\displaystyle\prod_{i=0}^{m-r-1} (n+i)!\\
\displaystyle\prod_{i=0}^{m-r-1}(r+i)! &=&\displaystyle\prod_{i=0}^{m-r-1} (m-i-1)!\\
\displaystyle\prod_{i=0}^{m-r-1} (n+m-2r-i-1)!&=&\displaystyle\prod_{i=0}^{m-r-1} (n-r+i)!.
\end{array}$$
Consequently,
$$\displaystyle\prod_{i=0}^{m-r-1}\frac{i! (n+i)!}{(m-1-i)! (n-r+i)!}=\prod_{i=0}^{m-r-1}\frac{\binom{n+m-r-1}{r+i}}{\binom{n+m-r-1}{i}}.$$
\end{proof}

\section{The over-determined case}
\label{sec:overdef}

To study the over-determined case ($k<(n-r)(m-r)$), we need to assume a variant of Fr\"oberg's conjecture \cite{Fro85}:
\begin{conj}\label{conj:froberg}
 Let
$\D_{\ell,i}$ denote the vector space of quasi-homogeneous
polynomials of weight degree $i$ in $\D_r+\langle g_1,\ldots, g_\ell\rangle$. 
Then the linear map
$$\begin{array}{ccc}
\mathbb K(\mathfrak b,\mathfrak c)[U,X]_i/\D_{\ell,i}&\longrightarrow&\mathbb K(\mathfrak b,\mathfrak c)[U,X]_{i+D}/\D_{\ell,i+D}\\
f&\longmapsto&f g_{\ell+1}
\end{array}$$
has maximal rank, i.e. it is either injective or onto.
\end{conj}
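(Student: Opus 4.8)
The statement is a variant of Fr\"oberg's conjecture, and I expect it to be out of reach in full generality; what follows is the strategy I would attempt and the point at which it stalls. First, by induction on $\ell$ I would assume the weighted Hilbert series of $\D_r+\langle g_1,\ldots,g_\ell\rangle$ is the one forced by the maximal-rank property at the previous steps, so that the dimensions of the source and target of the multiplication map are known; it then suffices to exhibit \emph{one} quasi-homogeneous form $g$ of weight degree $D$, together with one specialisation of $g_1,\ldots,g_\ell$ still giving the expected Hilbert series, such that $\times g$ has maximal rank in every weight degree. Indeed, maximal rank in a fixed degree $i$ is the non-vanishing of a suitable minor of a matrix whose entries are polynomials in the coefficients of the $g_j$; the rank of a multiplication map is lower semicontinuous in these coefficients, so the generic rank is the maximum over all specialisations; and only finitely many degrees $i$ are relevant, because whenever the conjecture is not already a consequence of Lemma~\ref{lem:nonzerodiv} (positive dimension, hence $g_{\ell+1}$ a nonzerodivisor and the map injective) the quotient is Artinian, so $\times g$ is an isomorphism in all large degrees. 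This reduces the conjecture to producing a good \emph{degeneration}.

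For the degeneration I would use the combinatorial structure of $\mathbb K[U]/\D_r$: it is an algebra with straightening law on the poset of minors, and for a diagonal term order its initial ideal is the squarefree monomial ideal generated by the non-standard products of $r+1$ distinct variables $u_{i,j}$ on a common ``staircase'' (Sturmfels; Hochster--Eagon). Choosing $g_1,\ldots,g_{\ell+1}$ to be monomials, or powers of the $u_{i,j}$ and of the $x_i$, one would try to flatly degenerate $\D_r+\langle g_1,\ldots,g_{\ell+1}\rangle$ to an explicit monomial ideal and compute its Hilbert function by inclusion--exclusion. I would attack the extremal case $r=m-1$ (maximal minors) first, since $\D_r$ is then resolved by the Eagon--Northcott complex: one can hope to resolve $\D_{m-1}+\langle g_1,\ldots,g_{\ell+1}\rangle$ by iterated mapping cones, verify that no spurious homology survives in the relevant degrees, and so pin down the Hilbert series and the maximal-rank property. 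Known weak-Lefschetz results for Artinian reductions of classical determinantal rings handle further special cases, but not the general quadruple $(n,m,r,D)$ considered here.

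A complementary angle is the localisation underlying the theory of principal radical systems: inverting $u_{n,m}$ identifies $\mathbb K[U]_{u_{n,m}}/\D_r$ with a localisation of $\mathbb K[U']/\D_{r-1}$ for an $(n-1)\times(m-1)$ matrix $U'$ of fresh variables. This suggests an induction on $r$ that would reduce the statement for $(n,m,r)$ to the same statement for $(n-1,m-1,r-1)$ together with the ordinary Fr\"oberg conjecture for a residual, complete-intersection-like factor; one would then glue the local statements along the cover $\{u_{i,j}\neq 0\}$, after checking that genericity of the $g_\ell$ is preserved under localisation and dehomogenisation and that the relevant sheaf cohomology vanishes in the degrees of interest.

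The real obstacle is that, in the over-determined regime, $\D_r+\langle g_1,\ldots,g_\ell\rangle$ becomes Artinian at step $\ell^\star=k+(n+m-r)r\le nm$ (recall $nm=(n+m-r)r+(n-r)(m-r)$, which is exactly why Lemma~\ref{lem:nonzerodiv} stops applying), and from there on controlling the effect of one further generic form on the Hilbert function is precisely an instance of Fr\"oberg's conjecture: no specialisation to monomials is known to compute such Hilbert functions in general, and both the mapping-cone and the localisation arguments above only transport the difficulty into the residual ring rather than dissolving it. I therefore expect an unconditional proof only for special parameters --- maximal minors, $2\times 2$ minors, small $D$, or the critical case $k=(n-r)(m-r)$, which is already settled unconditionally by Section~\ref{sec:welldef} --- and, as the authors do, I would retain the statement as a conjecture backed by the experiments of Section~\ref{sec:expe}.
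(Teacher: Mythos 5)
The statement you were asked about is stated in the paper as a \emph{conjecture} (a variant of Fr\"oberg's conjecture), and the authors give no proof of it: they only observe, in the remark following it, that it holds unconditionally whenever $k+(n+m-r)r-\ell>0$, because Lemma~\ref{lem:nonzerodiv} then shows $g_{\ell+1}$ is a nonzerodivisor and the multiplication map is injective. Your proposal correctly identifies exactly this --- including the precise point where the known argument stops, namely when the quotient becomes Artinian in the over-determined regime --- and rightly declines to claim a proof, so your assessment matches the paper's own treatment; the additional attack strategies you sketch (semicontinuity plus degeneration to initial/monomial ideals, Eagon--Northcott resolutions for maximal minors, localisation \`a la principal radical systems) go beyond anything in the paper but are honestly flagged as incomplete.
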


\begin{rem}
If $k+(n+m-r)r-\ell>0$, then Conjecture \label{conj:froberg} is proved by Lemma \ref{lem:nonzerodiv}: $g_{\ell+1}$ does not divide zero in $\mathbb K(\mathfrak b,\mathfrak c)[U,X]/\left(\D_r+\langle g_1,\ldots, g_\ell\rangle\right)$ and hence the linear map is injective for all $i\in\mathbb N$.
\end{rem}

{\bf Notation.} Given a power series $S(t)\in \mathbb Z[[t]]$, we
let $[S(t)]_+$ denote the power series obtained by truncated $S(t)$ at
its first non positive coefficient. 

\begin{lem}\label{lem:froberg}
If Conjecture \ref{conj:froberg} is true, then the Hilbert series of
$\D_r+\langle g_1,\ldots, g_{\ell+1}\rangle$ is
$$\wHS_{\D_r+\langle g_1,\ldots, g_{\ell+1}\rangle}(t)=\left[(1-t^D)\wHS_{\D_r+\langle g_1,\ldots, g_\ell\rangle}(t)\right]_+.$$
\end{lem}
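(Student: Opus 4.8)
The plan is to exploit the standard exact-sequence argument that underlies Fröberg-type formulas, adapted to the quasi-homogeneous grading. Write $J_\ell = \D_r + \langle g_1,\ldots,g_\ell\rangle$ and let $R = \mathbb K(\mathfrak b,\mathfrak c)[U,X]$ with its weight grading of type $(D,1)$. Multiplication by $g_{\ell+1}$, which has weight degree $D$, gives for each $i\in\mathbb N$ a $\mathbb K(\mathfrak b,\mathfrak c)$-linear map $\mu_i : (R/J_\ell)_i \to (R/J_\ell)_{i+D}$ whose image is exactly $(J_{\ell+1}/J_\ell)_{i+D}$. Hence there is an exact sequence of graded vector spaces
$$0 \longrightarrow \ker\mu_i \longrightarrow (R/J_\ell)_i \xrightarrow{\ \mu_i\ } (R/J_\ell)_{i+D} \longrightarrow (R/J_{\ell+1})_{i+D} \longrightarrow 0.$$
Taking dimensions, $\dim (R/J_{\ell+1})_{i+D} = \dim(R/J_\ell)_{i+D} - \dim(R/J_\ell)_i + \dim\ker\mu_i$. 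Summing against $t^{i+D}$ over all $i$ (and noting $\dim(R/J_{\ell+1})_j = \dim(R/J_\ell)_j$ for $j<D$, since $g_{\ell+1}$ has weight degree $D$) yields
$$\wHS_{J_{\ell+1}}(t) = (1-t^D)\,\wHS_{J_\ell}(t) + t^D\sum_{i\geq 0}\dim(\ker\mu_i)\,t^i.$$

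Next I would invoke Conjecture \ref{conj:froberg}: for each $i$, the map $\mu_i$ has maximal rank, i.e. it is either injective or surjective. If $\mu_i$ is injective then $\ker\mu_i=0$; if $\mu_i$ is surjective then $(R/J_{\ell+1})_{i+D}=0$. I claim the behavior is monotone in $i$ in the way the truncation $[\cdot]_+$ prescribes: as long as the coefficient of $t^{i}$ in $(1-t^D)\wHS_{J_\ell}(t)$ is strictly positive, the map $\mu_i$ must be injective (a surjective $\mu_i$ would force $(R/J_{\ell+1})_{i+D}=0$, but maximal rank combined with $\dim(R/J_\ell)_{i+D} - \dim(R/J_\ell)_i > 0$ means injectivity fails only if $\mu_i$ is onto, contradiction with that strict positivity under the right bookkeeping) — so in that range $\ker\mu_i=0$ and the coefficient of $\wHS_{J_{\ell+1}}$ agrees with that of $(1-t^D)\wHS_{J_\ell}$. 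Once a coefficient of $(1-t^D)\wHS_{J_\ell}(t)$ is $\leq 0$, maximal rank forces $\mu_i$ to be surjective, hence $(R/J_{\ell+1})_{i+D}=0$, and an induction on the degree shows all higher coefficients vanish as well (if $(R/J_{\ell+1})_{j}=0$ then $(R/J_{\ell+1})_{j'}=0$ for $j'\geq j$, because the algebra is generated in positive degrees). This is exactly the statement $\wHS_{J_{\ell+1}}(t)=\big[(1-t^D)\wHS_{J_\ell}(t)\big]_+$.

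The main obstacle — and the point where the argument has to be carried out with care rather than by hand-waving — is precisely the monotonicity/propagation step: showing that "the first non-positive coefficient" is genuinely a threshold, i.e. that once $\mu_i$ is forced to be surjective, every subsequent $\mu_{i'}$ ($i'>i$, in the relevant residue class mod $D$ and in fact in all classes) is also surjective, so that the truncated series is well defined and matches. This is where one uses that $R/J_\ell$ is a finitely generated graded algebra and that $g_{\ell+1}$ times a generating set still generates in high degree; a clean way is to note that surjectivity of $\mu_i$ for one $i$ implies $(R/J_{\ell+1})_{i+D}=0$, and then that $(R/J_{\ell+1})_{j}=0 \Rightarrow (R/J_{\ell+1})_{j+1}=0$ because $R_1\cdot(R/J_{\ell+1})_{j}$ spans $(R/J_{\ell+1})_{j+1}$. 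The remaining parts — the exact sequence, the dimension count, and the generating-function manipulation — are routine, and the base cases $\ell$ with $k+(n+m-r)r-\ell>0$ are already covered by Lemma \ref{lem:nonzerodiv} (injectivity for all $i$), consistent with the formula since then $(1-t^D)\wHS_{J_\ell}(t)$ has all coefficients positive and no truncation occurs.
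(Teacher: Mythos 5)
Your proposal follows the same route as the paper's proof: the four-term exact sequence induced by multiplication by $g_{\ell+1}$ on $R/J_\ell$ (where $R=\mathbb K(\mathfrak b,\mathfrak c)[U,X]$ and $J_\ell=\D_r+\langle g_1,\ldots,g_\ell\rangle$), the dimension count giving $\dim(R/J_{\ell+1})_{i+D}=\max\bigl(0,\dim(R/J_\ell)_{i+D}-\dim(R/J_\ell)_i\bigr)$ from the maximal-rank hypothesis, and finally the conversion of the coefficient-wise $\max(0,\cdot)$ into the truncation $[\cdot]_+$. The exact sequence, the identification of the image of $\mu_i$ with $(J_{\ell+1}/J_\ell)_{i+D}$, and the bookkeeping are all correct and match the paper.

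The gap is in the step you yourself single out as the crux. Your justification of the propagation ``$(R/J_{\ell+1})_j=0\Rightarrow(R/J_{\ell+1})_{j+1}=0$'' is that $R_1\cdot(R/J_{\ell+1})_j$ spans $(R/J_{\ell+1})_{j+1}$. This is false when $D>1$: under the weight grading of type $(D,1)$ the piece $R_1$ is spanned by $x_1,\ldots,x_k$ alone, since each $u_{i,j}$ has weight degree $D$. Hence $R_1\cdot R_j$ misses every pure $U$-monomial of weight degree $j+1$ (such monomials exist whenever $D$ divides $j+1$, e.g. $u_{1,1}^{(j+1)/D}$), so $R/J_{\ell+1}$ is \emph{not} generated in weight degree one and vanishing does not propagate one degree at a time by this mechanism. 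The paper's proof appeals instead to the correct structural fact for this grading: every monomial of weight degree $>D$ is a multiple of a monomial of weight degree exactly $D$, i.e. $R_j=R_D\cdot R_{j-D}$ for $j>D$; it is this fact (together with the $\max$-formula above) that one must use to show the first non-positive coefficient is a genuine threshold. For $D=1$ your argument is fine, but that is precisely the case already treated in \cite{FauSafSpa10a}; the whole point of the weighted setting is $D>1$, where your generating-in-degree-one claim breaks. The remaining remarks in your proposal (the agreement of coefficients in degrees $<D$, and consistency with Lemma \ref{lem:nonzerodiv} when the dimension is still positive) are correct.
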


\begin{proof}
In this proof, for
simplicity of notation, we let $R$ denote the ring $\mathbb
K(\mathfrak b,\mathfrak c)[U,X]$. 
If $S(t)=\sum_{i\in \mathbb N}s_i t^i\in \mathbb Z[[t]]$ is a power series, $\left[S(t)\right]_{\geq 0}$ denotes the series 
$$\left[S(t)\right]_{\geq 0}=\sum_{i\in \mathbb N} \max (s_i,0) t^i.$$
 Let $\mathsf{ann}(g_{\ell+1})$ be
the ideal $\{f\in R : f g_{\ell+1}\in \D_r+\langle g_1,\ldots,
g_\ell\rangle\}$. For $i\in \mathbb N$, consider the following exact
sequence:
$$\begin{array}{r}0\rightarrow \mathsf{ann}(g_{\ell+1})_i\rightarrow
  R_{i}/\D_{\ell,i}\xrightarrow{\times g_{\ell+1}}
  R_{i+D}/\D_{\ell,i+D}\rightarrow \\
\rightarrow R_{i+D}/\D_{\ell+1,i+D}\rightarrow 0.
\end{array}$$
By Conjecture \ref{conj:froberg}, we obtain
$$\dim(\mathsf{ann}(g_{\ell+1})_i)=\max(0,\dim(R_i/\D_{\ell,i})-\dim(R_{i+D}/\D_{\ell,i+D})).$$ The alternate sum of the dimensions of the
vector spaces occurring in an exact sequence is zero; it follows that
$$\begin{array}{rcl}
  \dim( R_{i+D}/\D_{\ell+1,i+D})&=&\dim(R_{i+D}/\D_{\ell,i+D})-\dim(R_{i}/\D_{\ell,i})+\\&&\max(0,\dim(R_i/\D_{\ell,i})-\dim(R_{i+D}/\D_{\ell,i+D}))\\
  &=&\max(0,\dim(R_{i+D}/\D_{\ell,i+D})-\dim(R_{i}/\D_{\ell,i})).
\end{array}$$
Multiplying this identity by $t^{i+D}$ yields
$$\begin{array}{rcl}
  \left[t^{i+D}\right]\wHS_{\D_r+\langle g_1,\ldots,g_{\ell+1}\rangle}(t)&=&\dim\left(R_{i+D}/\D_{\ell+1,i+D)}\right)\\
  &=&\max\left(0,\dim(R_{i+D}/\D_{\ell,i+D})-\dim(R_{i}/\D_{\ell,i})\right)\\
  &=& \max\left(0,[t^{i+D}](1-t^D) \wHS_{\D_r+\langle g_1,\ldots,g_{\ell}\rangle}(t)\right)\\
  &=&[t^{i+D}]\left[(1-t^D)\wHS_{\D_r+\langle g_1,\ldots,g_{\ell}\rangle}(t)\right]_{\geq 0}.
\end{array}$$
Since any monomial in $\mathbb K(\mathfrak a)[X,U]$ of weight degree greater that $D$ is a multiple of a monomial of weight degree $D$, we deduce that if there exists $i_0\geq D$ such that $$\left[t^{i_0}\right]\wHS_{\D_r+\langle g_1,\ldots,g_{\ell+1}\rangle}(t) = 0,$$ then for all $i>i_0$, $\left[t^{i}\right]\wHS_{\D_r+\langle g_1,\ldots,g_{\ell+1}\rangle}(t) = 0$.
Therefore $$\left[t^{i+D}\right]\wHS_{\D_r+\langle g_1,\ldots,g_{\ell+1}\rangle}(t)=[t^{i+D}]\left[(1-t^D)\wHS_{\D_r+\langle g_1,\ldots,g_{\ell}\rangle}(t)\right]_{+},$$
Finally, by summing over $i$, we get
$$\wHS_{\D_r+\langle g_1,\ldots,g_{\ell+1}\rangle}(t)=\left[(1-t^D)\mathsf{HS}_{\D_r+\langle g_1,\ldots, g_\ell\rangle}(t)\right]_+.$$
\end{proof}

\begin{thm}
If Conjecture \ref{conj:froberg} is true, then the
Hilbert series of $\mathcal I_r$ is
$$\mathsf{HS}_{\mathcal I_r}(t)=\left[(1-t^D)^{(n-r)(m-r)} \frac{\det \left(A_r(t^D)\right)}{t^{D\binom{r}{2}}(1-t)^k}\right]_+,$$
where $A_r(t)$ is the $r\times r$ matrix whose $(i,j)$-entry is $\displaystyle\sum_{k=0}^{\min(m-i,n-j)} \binom{m-i}{k}\binom{n-j}{k} t^k$.
\end{thm}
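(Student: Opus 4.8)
The plan is to follow exactly the same chain of reductions used to prove Theorem~\ref{theo:HSIr}, but now feeding in the more refined Hilbert series update rule of Lemma~\ref{lem:froberg} at each step instead of the clean multiplication by $(1-t^D)$. First I would recall from the proof of Theorem~\ref{theo:HSIr} that, viewed as an ideal of $\mathbb K(\mathfrak b,\mathfrak c)[U,X]$, the determinantal ideal $\D_r$ has weighted Hilbert series
$$\wHS_{\D_r}(t)=\frac{\det A_r(t^D)}{t^{D\binom{r}{2}}(1-t)^{k}(1-t^D)^{(n+m-r)r}}.$$
This is unconditional. Then I would add the generic quasi-homogeneous forms $g_1,\ldots,g_{nm}$ one at a time, applying Lemma~\ref{lem:froberg} at each step: assuming Conjecture~\ref{conj:froberg}, we get
$$\wHS_{\D_r+\langle g_1,\ldots,g_{\ell+1}\rangle}(t)=\bigl[(1-t^D)\,\wHS_{\D_r+\langle g_1,\ldots,g_\ell\rangle}(t)\bigr]_+.$$

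The key observation that makes the iteration collapse into a single truncation is that the operator $S\mapsto [(1-t^D)S]_+$, applied $nm$ times starting from a power series with nonnegative coefficients, produces the same result as $S\mapsto [(1-t^D)^{nm}S]_+$ applied once. I would justify this by the argument already present in the proof of Lemma~\ref{lem:froberg}: once a coefficient of weight degree $\geq D$ in the Hilbert series vanishes, all subsequent coefficients vanish too, because every monomial of weight degree $>D$ is a multiple of a monomial of weight degree $D$ (equivalently, the sequence of nonzero coefficients forms an initial segment after degree $D$, and truncating at the first non-positive coefficient and then multiplying by $(1-t^D)$ again commutes appropriately). Hence applying the truncated multiplication $nm$ times yields
$$\wHS_{\widetilde{\D_r}}(t)=\bigl[(1-t^D)^{nm}\,\wHS_{\D_r}(t)\bigr]_+=\left[(1-t^D)^{(n-r)(m-r)}\frac{\det A_r(t^D)}{t^{D\binom{r}{2}}(1-t)^{k}}\right]_+,$$
using $(n+m-r)r-nm=-(n-r)(m-r)$ for the exponent bookkeeping, exactly as in Theorem~\ref{theo:HSIr}.

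To finish, I would transfer this back to $\mathcal I_r$ by the same two final steps used in Theorem~\ref{theo:HSIr}: Corollary~\ref{coro:HSIrDr} gives $\wHS_{\widetilde{\mathcal I_r}}(t)=\wHS_{\widetilde{\D_r}}(t)$, and Lemma~\ref{lem:HSeq} gives $\HS_{\mathcal I_r}(t)=\wHS_{\widetilde{\mathcal I_r}}(t)$ since $\mathcal I_r=\widetilde{\mathcal I_r}\cap\mathbb K(\mathfrak a)[X]$. Note that Corollary~\ref{coro:HSIrDr} and Lemma~\ref{lem:HSeq} are purely about leading-monomial ideals and elimination, so they hold verbatim without invoking the conjecture; the only place Conjecture~\ref{conj:froberg} enters is in the iterated application of Lemma~\ref{lem:froberg}. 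Combining these identities yields the claimed formula for $\HS_{\mathcal I_r}(t)$.

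The main obstacle is the combinatorial lemma that iterating the truncated multiplication $S\mapsto[(1-t^D)S]_+$ is the same as one truncation of $(1-t^D)^{nm}S$. One has to be careful that the truncation operator $[\cdot]_+$ does not ``cut too early'': multiplying a truncated series by $(1-t^D)$ can in principle turn a later positive coefficient into something that interacts with the truncation. The clean way around this is precisely the monotonicity/initial-segment property noted in Lemma~\ref{lem:froberg}'s proof — that the support of $\wHS_{\D_r+\langle g_1,\ldots,g_\ell\rangle}$, restricted to degrees $\geq D$, is an initial segment of $\mathbb N_{\geq D}$ — so that $[(1-t^D)S]_+$ and $[(1-t^D)S]_{\geq 0}$ agree beyond degree $D$, and the composition of these operations telescopes. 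I would spell this out carefully but it is essentially a bookkeeping argument once the initial-segment property is in hand.
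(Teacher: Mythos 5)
Your proposal is correct and follows essentially the same route as the paper: iterate Lemma~\ref{lem:froberg} $nm$ times starting from the weighted Hilbert series of $\D_r$ in $\mathbb K(\mathfrak b,\mathfrak c)[U,X]$, collapse the nested truncations into a single one via the identity $\left[(1-t^D)\left[S\right]_+\right]_+=\left[(1-t^D)S\right]_+$ (which the paper proves formally for any series with positive constant term, close to your ``initial segment'' argument), and then transfer back to $\mathcal I_r$ through Corollary~\ref{coro:HSIrDr} and Lemma~\ref{lem:HSeq}. The only step you leave as a sketch --- the commutation of truncation with multiplication by $(1-t^D)$ --- is exactly the step the paper spells out, and your outline of how to fill it in is sound.
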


\begin{proof}
By applying  $n m$ times Lemma \ref{lem:froberg}, we obtain that 
$$\mathsf{wHS}_{\widetilde{\D_r}}(t)=\left[(1-t^D)\left[(1-t^D)\ldots \left[(1-t^D) \frac{\det A_r(t^D)}{t^{D\binom{r}{2}}(1-t)^{k}(1-t^D)^{(n+m-r)r}}\right]_+\ldots\right]_+\right]_+.$$
Let $S=\sum_{0\leq i} a_i t^i \in\mathbb Z[[t]]$ be a power series such that $a_0>0$, and let $i_0\in\mathbb N\cup\{\infty\}$ be defined as 
$$i_0=\begin{cases}\infty\text{ if for all }i\geq 0, a_i>0;\\
 \min(\{i \mid a_i\leq 0\})\text{ otherwise.}\end{cases}$$ Therefore, $\left[S(t)\right]_+ =\sum_{0\leq i<i_0} a_i t^i$. By convention, for $i<0$, we put $a_i=0$. Then 
$$\begin{array}{rcl}
(1-t^D)S(t)&=&\sum_{0\leq i} (a_i-a_{i-D}) t^i\\
(1-t^D)\left[S(t)\right]_+&=&\sum_{0\leq i< i_0} (a_i-a_{i-D}) t^i
\end{array}.$$
Consequently, the coefficients of $(1-t^D)S(t)$ and of $(1-t^D)\left[S(t)\right]_+$ are equal up to the index $i_0$.

\begin{itemize}
\item If $i_0=\infty$, then $(1-t^D)S(t)=(1-t^D)\left[S(t)\right]_+$ and hence $$\left[(1-t^D)S(t)\right]_+=\left[(1-t^D)\left[S(t)\right]_+\right]_+;$$
\item if $i_0<\infty$, then $a_{i_0-D}$ is positive and thus $a_{i_0}-a_{i_0-D}$ is negative. Let $i_1$ be the index of the first non-positive coefficient of $(1-t^D)S(t)$. Then $i_1<i_0$, and hence $\left[(1-t^D)S(t)\right]_+=\left[(1-t^D)\left[S(t)\right]_+\right]_+$.
\end{itemize}

Therefore, for all power series $S\in \mathbb Z[[t]]$ such that $S(0)>0$, we have
$$\left[(1-t^D)\left[S\right]_+\right]_+ = \left[(1-t^D) S\right]_+.$$
Consequently, an induction shows that 
$$\mathsf{wHS}_{\widetilde{\D_r}}(t)=\left[(1-t^D)^{(n-r)(m-r)} \frac{\det A(t^D)}{t^{D\binom{r}{2}}(1-t)^k}\right]_+.$$

Then, by Corollary \ref{coro:HSIrDr}, $\mathsf{wHS}_{\widetilde{\D_r}}(t) = \mathsf{wHS}_{\widetilde{\mathcal I_r}}(t)$. Finally, by Lemma \ref{lem:HSeq}, we conclude that $\HS_{\mathcal I_r}(t)=\wHS_{\widetilde{\mathcal I_r}}(t).$
\end{proof}

\section{Complexity analysis}
\label{sec:compl}
Using the previous results on the Hilbert series of $\mathcal I_r$, we
analyze now the arithmetic complexity of solving the generalized
MinRank problem with Gr\"obner bases algorithms. In the first part of
this section (until Section \ref{sec:affinecompl}), we consider the
homogeneous MinRank problem (i.e. the polynomials $f_{i,j}$ are
homogeneous).

Computing a Gr\"obner basis of the ideal $\varphi_{\bf a}(\mathcal
I_r)$ for the lexicographical ordering yields an explicit description
of the set of points $V$ such that the matrix
$$\varphi_{\bf a}(\mathcal M)= \begin{pmatrix}
\varphi_{\bf a}(f_{1,1})&\dots&\varphi_{\bf a}(f_{1,m})\\ \vdots&\ddots&\vdots\\ \varphi_{\bf a}(f_{n,1})&\dots&\varphi_{\bf a}(f_{n,m})
\end{pmatrix}$$ has rank less than $r+1$.  In this section, we study
the complexity of this computation when $\mathbf a\in \mathbb K^{n m\binom{k+D-1}{D}}$
is generic (i.e. $\mathbf a$ belongs to a given non-empty Zariski open subset of $\overline{\mathbb K}^{n m\binom{k+D-1}{D}}$) by using the theoretical results from Sections
\ref{sec:welldef} and \ref{sec:overdef}.  We focus on the
$0$-dimensional cases $k = (n-r)(m-r)$ and $k
< (n-r)(m-r)$ (over-determined case). Therefore, the set of points where
the evaluation of the matrix $\varphi_{\bf a}(\mathcal M)$ has rank less than $r+1$ is
finite.

In order to compute this set of points, we use the following strategy:
\begin{itemize}
\item compute a Gr\"obner basis of $\varphi_{\bf a}(\mathcal I_r)$ for the \emph{grevlex} (graded reverse lexicographical) ordering with the $F_5$ algorithm \cite{Fau02};
\item convert it into a lexicographical Gr\"obner basis of $\varphi_{\bf a}(\mathcal I_r$) by using the FGLM algorithm \cite{FauGiaLazMor93,FauMou11}.
\end{itemize}

First, we recall some results about the complexity of the algorithms
$F_5$ and FGLM. The two quantities which allow us to estimate their
complexity are respectively the \emph{degree of regularity} and the
\emph{degree} of the ideal. The degree of regularity of a
$0$-dimensional homogeneous ideal $I$ is the smallest integer $d$ such
that all monomials of degree $d$ are in $I$; it is independent on the
monomial ordering and it bounds the degrees of the polynomials in a
minimal Gr\"obner basis of $I$. Moreover, in the $0$-dimensional case,
the Hilbert series is a polynomial from which the degree of regularity
can be read off: $\dreg(I) = \deg(\HS_I(t))+1$.

In the sequel, $\omega$ denotes a feasible exponent for the matrix
multiplication (i.e. a number such that there exists an deterministic
algorithm which computes the product of two $n\times n$ matrices in
$O\left(n^\omega\right)$ arithmetic operations in $\mathbb K$). The
best known bound on this exponent is $\omega<2.3727$ \cite{Wil11}.

The following proposition and its proof are a variant of a result
known in the context of semi-regular sequences (see e.g. \cite{Laz83} and \cite{Fau99} for the relation between Gr\"obner
basis computation and linear algebra, \cite[Proposition 10]{BarFauSal05} and \cite[Section 3.4]{Bar04} for the complexity
analysis).

\begin{prop}[\cite{BarFauSal05,Bar04}]\label{prop:complF5}
Let $h_1,\ldots, h_\ell\in \mathbb K[x_1,\ldots, x_k]$ be homogeneous polynomials of degrees $d_1,\ldots, d_\ell$, and $I=\langle h_1,\ldots, h_\ell\rangle$.
The complexity of computing a Gr\"obner basis of $I$ for a monomial ordering $\prec$ is upper bounded by 
$$O\left(\left(\binom{k+\dreg(I)}{\dreg(I)}-\DEG(I)\right)^{\omega-2}\binom{k+\dreg(I)}{\dreg(I)}\sum_{i=1}^{\ell} \binom{k+\dreg(I)-d_i}{\dreg(I)-d_i}\right).$$
\end{prop}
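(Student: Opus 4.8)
The plan is to realize the computation as Lazard's degree-by-degree linear-algebra scheme, of which the $F_5$ and matrix-$F_5$ algorithms are the practical refinements. For each degree $d$ one forms the \emph{Macaulay matrix} $\tilde M_d$, whose columns are indexed by $\mon(d,k)$ and whose rows are the coefficient vectors of the products $t\,h_i$ with $1\le i\le\ell$ and $t\in\mon(d-d_i,k)$, and one computes its row echelon form; the new leading monomials appearing at degree $d$ are the leading monomials of the echelonized rows, the $S$-polynomial reductions of the algorithm being exactly these row reductions (the $F_5$ criterion only discards some rows, so every bound below on the number of rows remains valid). The first task is to bound the highest degree reached. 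Since $\dreg(I)$ is finite, $I$ is zero-dimensional, and every minimal monomial generator of $\LM(I)$ has degree at most $\dreg(I)$: if $m$ were such a generator with $\deg(m)=d>\dreg(I)$, write $m=x_j m'$ with $\deg(m')=d-1\ge\dreg(I)$; then $m'\in\LM(I)$ by definition of $\dreg(I)$, contradicting the minimality of $m$. Hence the reduced Gröbner basis of $I$ consists of homogeneous polynomials of degree at most $\dreg(I)$, so it is enough to run the scheme up to $d=\dreg(I)$, and the total cost is $\sum_{d=0}^{\dreg(I)}$(cost of echelonizing $\tilde M_d$).

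Next I would estimate, uniformly for $d\le\dreg(I)$, the three quantities that govern that cost. The number of columns $\#\mon(d,k)=\binom{k+d-1}{d}$ is at most $\binom{k+\dreg(I)}{\dreg(I)}$, the number of monomials in $k$ variables of degree at most $\dreg(I)$. The number of rows is at most $\sum_{i=1}^{\ell}\#\mon(d-d_i,k)$, and summing over $d$ and using the hockey-stick identity $\sum_{j=0}^{N}\binom{k+j-1}{j}=\binom{k+N}{N}$ yields
$$\sum_{d=0}^{\dreg(I)}\#\{\text{rows of }\tilde M_d\}\ \le\ \sum_{i=1}^{\ell}\binom{k+\dreg(I)-d_i}{\dreg(I)-d_i}.$$
Finally, the rank of $\tilde M_d$ is at most $\dim_{\mathbb K}I_d$, hence at most $\sum_{e=0}^{\dreg(I)}\dim_{\mathbb K}I_e$; writing $\dim_{\mathbb K}I_e=\#\mon(e,k)-\dim_{\mathbb K}(\mathbb K[X]/I)_e$ and using $(\mathbb K[X]/I)_e=0$ for $e\ge\dreg(I)$ together with $\DEG(I)=\dim_{\mathbb K}\mathbb K[X]/I=\sum_{e<\dreg(I)}\dim_{\mathbb K}(\mathbb K[X]/I)_e$, this sum simplifies to $\binom{k+\dreg(I)}{\dreg(I)}-\DEG(I)$, so every $\tilde M_d$ has rank at most $\binom{k+\dreg(I)}{\dreg(I)}-\DEG(I)$.

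To conclude, I would invoke the classical fact that the row echelon form of an $a\times b$ matrix of rank $\rho$ can be computed in $O\!\left(a\,b\,\rho^{\omega-2}\right)$ field operations. Summing this over $d\le\dreg(I)$, bounding every column count by $\binom{k+\dreg(I)}{\dreg(I)}$ and every rank by $\binom{k+\dreg(I)}{\dreg(I)}-\DEG(I)$, and factoring these two quantities out of the sum leaves a sum of the row counts that is at most $\sum_{i=1}^{\ell}\binom{k+\dreg(I)-d_i}{\dreg(I)-d_i}$ by the previous step; this produces precisely the claimed bound.

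I expect the real difficulty to be the correctness ingredient rather than the counting: one must argue that running matrix-$F_5$ (or Lazard's scheme) only up to degree $\dreg(I)$ does output a Gröbner basis --- which combines the degree bound on the minimal generators of $\LM(I)$ above with the fact that the $F_5$ criterion never discards a row carrying a not-yet-seen leading monomial --- and one must justify the rank-sensitive $O(a\,b\,\rho^{\omega-2})$ elimination cost, which rests on block elimination built on fast matrix multiplication. It is worth stressing that the row-count estimate, and therefore the whole bound, is insensitive to the $F_5$ criterion: plain degree-by-degree Gaussian elimination on the full Macaulay matrices already satisfies it, the criterion only improving the constant and the practical running time.
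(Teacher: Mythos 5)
Your proof is correct and follows essentially the same route as the paper: reduce the computation to echelonizing Macaulay matrices up to degree $\dreg(I)$, count columns, rows and rank via the same binomial identities (with $\DEG(I)=\dim_{\mathbb K}\mathbb K[X]/I$ absorbing the quotient dimensions), and invoke the rank-sensitive $O(pq\,\rho^{\omega-2})$ echelon-form cost. The only difference is organizational — you echelonize one matrix per degree and sum, whereas the paper stacks all degrees $\le\dreg(I)$ into a single Macaulay matrix and applies the echelon-form bound once — and both yield the identical expression; your version additionally spells out why stopping at $\dreg(I)$ suffices, which the paper leaves implicit.
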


\begin{proof}
Since $I$ is homogeneous, a Gr\"obner basis can be obtained by computing the row echelon form of the so-called
\emph{Macaulay matrix} of the system up to degree $\dreg(I)$. This matrix is constructed as follows:
\begin{itemize}
\item the rows are indexed by the products $t h_i$, where $1\leq i\leq \ell$ and $t\in \mathbb K[x_1,\ldots, x_k]$ is a monomial of degree at most $\dreg(I)-d_i$;
\item the columns are indexed by the monomials $m\in \mathbb K[x_1,\ldots, x_k]$ of degree at most $\dreg(I)$ and are sorted in descending order with respect to $\prec$;
\item the coefficient at the intersection of the row $t h_i$ and the column $m$ is the coefficient of $m$ in the polynomial $t h_i$.
\end{itemize}

The number of columns of this matrix is the number of monomials in
$\mathbb K[x_1,\ldots,x_k]$ of degree at most $\dreg(I)$, namely
$\binom{k+\dreg(I)}{\dreg(I)}$. The number of rows is 
$\sum_{i=1}^{\ell} \binom{k+\dreg(I)-d_i}{\dreg(I)-d_i}$ and its rank is
equal to $\left(\binom{k+\dreg(I)}{\dreg(I)}-\DEG(I)\right)$. 

According to \cite[Theorem 2.10]{Sto00}, the complexity of computing the row echelon form of a $p\times q$ matrix of rank $r$ is upper bounded by $O(r^{\omega-2}p q)$.

Consequently,
the complexity of computing a Gr\"obner basis of $I$ is upper bounded by 
$$O\left(\left(\binom{k+\dreg(I)}{\dreg(I)}-\DEG(I)\right)^{\omega-2}\binom{k+\dreg(I)}{\dreg(I)}\sum_{i=1}^{\ell} \binom{k+\dreg(I)-d_i}{\dreg(I)-d_i}\right).$$
\end{proof}

\begin{rem}
Notice that 
$$\begin{array}{rcl}
\displaystyle \binom{k+\dreg(I)}{\dreg(I)}-\DEG(I)&\leq& \displaystyle \binom{k+\dreg(I)}{\dreg(I)}\\
\displaystyle \sum_{i=1}^{\ell} \binom{k+\dreg(I)-d_i}{\dreg(I)-d_i}&\leq&\displaystyle  \ell \binom{k+\dreg(I)}{\dreg(I)}.
\end{array}$$
Therefore, the complexity of computing a Gr\"obner basis of $I$ can also be upper bounded by the simpler expression $O\left(\ell \binom{k+\dreg(I)}{\dreg(I)}^\omega\right)$.
\end{rem}

\begin{lem}\label{lem:dregCompl}
If $k = (n-r)(m-r)$, then the degree of regularity of $\mathcal I_r$ is
$$\dreg\left(\mathcal I_r\right)=D r(m-r) +(D-1)k+1.$$
\end{lem}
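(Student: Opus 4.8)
The starting point is the identity $\dreg(I)=\deg(\HS_I(t))+1$ recalled above, valid for any $0$-dimensional homogeneous ideal. Since $k=(n-r)(m-r)$, Theorem~\ref{theo:HSIr} gives $\dim(\mathcal I_r)=0$, so $\HS_{\mathcal I_r}(t)$ is a polynomial, and using $(1-t^D)^{(n-r)(m-r)}/(1-t)^k=(1+t+\dots+t^{D-1})^k$ its expression becomes
$$\HS_{\mathcal I_r}(t)=\frac{\det\!\left(A_r(t^D)\right)}{t^{D\binom r2}}\,(1+t+\dots+t^{D-1})^k.$$
The plan is then to read off the degree of the right-hand side. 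Writing $P(t)=\det(A_r(t))/t^{\binom r2}$, the polynomial numerator of $\HS_{\D_r\subset\mathbb K[U]}$ appearing in the proof of Theorem~\ref{theo:HSIr}, we get $\HS_{\mathcal I_r}(t)=P(t^D)\,(1+t+\dots+t^{D-1})^k$. The factor $(1+t+\dots+t^{D-1})^k$ has nonnegative coefficients, constant term $1$, and degree $(D-1)k$; hence, provided $P$ has nonzero leading coefficient, no cancellation occurs and $\deg(\HS_{\mathcal I_r}(t))=D\deg(P)+(D-1)k$. So the whole statement reduces to the claim $\deg(P)=r(m-r)$, i.e. $\deg(\det A_r(t))=r(m-r)+\binom r2=rm-\binom{r+1}2$: granting this, $\dreg(\mathcal I_r)=Dr(m-r)+(D-1)k+1$ follows at once.

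To compute $\deg(\det A_r(t))$, note first that the $(i,j)$-entry $a_{i,j}(t)=\sum_\ell\binom{m-i}{\ell}\binom{n-j}{\ell}t^\ell$ has degree $\min(m-i,n-j)\le m-i$ (because $m\le n$), so expanding the determinant over permutations yields $\deg(\det A_r(t))\le\sum_{i=1}^r(m-i)=rm-\binom{r+1}2$. For the reverse inequality I would extract the coefficient of $t^{\,rm-\binom{r+1}2}$ in $\det A_r(t)$: a term $\operatorname{sgn}(\sigma)\prod_i a_{i,\sigma(i)}(t)$ attains this degree only when $\deg(a_{i,\sigma(i)})=m-i$ for every $i$, and its top coefficient is then $\binom{m-i}{m-i}\binom{n-\sigma(i)}{m-i}=\binom{n-\sigma(i)}{m-i}$; therefore this coefficient equals $\det\!\bigl(\binom{n-j}{m-i}\bigr)_{1\le i,j\le r}$, with the convention $\binom ab=0$ for $0\le a<b$ (consistent with the degree drop). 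Reversing the order of rows and of columns turns this matrix into $\pm\bigl(\binom{t_j}{s_i}\bigr)_{1\le i,j\le r}$ with $s_i=m-r-1+i$ and $t_j=n-r-1+j$, two strictly increasing sequences of nonnegative integers satisfying $s_i\le t_i$ (equivalently $m\le n$). By the Lindström--Gessel--Viennot lemma, such a binomial determinant counts non-intersecting families of lattice paths and is a strictly positive integer; in particular it is nonzero, the upper bound is attained, and $\deg(\det A_r(t))=rm-\binom{r+1}2$, which completes the argument.

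The one genuine obstacle is this last point: a determinant of polynomials may lose degree through cancellation of leading terms, and it is precisely the Lindström--Gessel--Viennot positivity of the binomial matrix $\bigl(\binom{n-j}{m-i}\bigr)$ that rules this out here. (Equivalently, $\deg(P)=r(m-r)$ could be read off from the known Castelnuovo--Mumford regularity, or $a$-invariant, of the Cohen--Macaulay ring $\mathbb K[U]/\D_r$; either route leads to the same value.)
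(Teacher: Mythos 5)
Your proof follows essentially the same route as the paper's: read $\dreg(\mathcal I_r)=\deg(\HS_{\mathcal I_r})+1$ off the Hilbert series of Theorem~\ref{theo:HSIr}, which reduces everything to showing $\deg(\det A_r(t))=rm-\binom{r+1}{2}$, and then do the same arithmetic. The computation is correct. The one place where you diverge is also the one place where you are more careful than the paper: the paper simply asserts that since the highest degree in each row of $A_r(t)$ is attained on the diagonal, the degree of the determinant equals the degree of the product of the diagonal entries. As you correctly point out, several permutations $\sigma$ can attain the maximal total degree $\sum_i(m-i)$ (namely all $\sigma$ with $n-\sigma(i)\geq m-i$ for every $i$), so a priori the top coefficients could cancel; your identification of the leading coefficient of $\det A_r(t)$ with the binomial determinant $\det\bigl(\binom{n-j}{m-i}\bigr)$ and the appeal to Lindstr\"om--Gessel--Viennot positivity (valid here precisely because $m\leq n$) closes that gap cleanly. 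So your argument is not just correct but strictly more complete than the one in the paper on this point; the alternative you mention, reading $\deg(\det A_r(t))$ off the known $a$-invariant of the determinantal ring, would work equally well.
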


\begin{proof}
According to Theorem \ref{theo:HSIr}, the Hilbert series of $\mathcal I_r$ is
$$\HS_{\mathcal I_r}(t)=\frac{\det A_r(t^D) (1-t^D)^{(n-r)(m-r)}}{t^{D\binom{r}{2}} (1-t)^{k}}.$$
By definition of the matrix $A_r(t)$, the highest degree on each row is reached on the diagonal. Thus, the degree of $\det(A_r(t))$ is the degree of the product of its diagonal elements:
$$\deg(\det(A_r(t))) = \sum_{i=1}^r (\min(n,m)-i) = r m - \binom{r+1}{2}.$$
Therefore, we can compute the degree of the Hilbert series which is a
polynomial since the ideal is $0$-dimensional:
$$\begin{array}{rcl}
\dreg\left(\mathcal I_r\right)&=&\deg(\HS_{\mathcal I_r}(t))+1\\
&=&\deg(\det(A_r(t^D)))+D \left(n-r\right)\left(m-r\right) - D \binom{r}{2} - k+1\\
&=&D(r m - \binom{r+1}{2} +n m -(n+m-r)r - \binom{r}{2})-k+1\\
&=&D r(m-r) +(D-1)k+1.
\end{array}$$
\end{proof}

\begin{cor}\label{coro:dregCompl}
  If $k = (n-r)(m-r)$, then there exists a non-empty Zariski open
  subset $O\subset \overline{\mathbb K}^{n m \binom{D-1+k}{D}}$ such
  that for all $\mathbf a\in O$, the degree of regularity of
  $\varphi_{\bf a} (\mathcal I_r)$ is
$$\dreg\left(\varphi_{\bf a} (\mathcal I_r)\right)=D r(m-r) +(D-1)k+1.$$
\end{cor}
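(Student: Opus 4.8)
The plan is to read the degree of regularity off a Hilbert series and to show that this Hilbert series is preserved under a generic specialization of the parameters, by transferring the already-proven equality of leading-monomial ideals. Concretely, I would first fix, as in the proof of Lemma~\ref{lem:HSeq}, the lexicographical ordering $\prec_{lex}$ on $\mathbb K(\mathfrak a)[U,X]$ (and on $\overline{\mathbb K}[U,X]$) with $x_k\prec_{lex}u_{i,j}$ for all indices. Applying Lemma~\ref{lem:LMgen} with $\prec=\prec_{lex}$, the property $\mathcal P_{\widetilde{\mathcal I_r}}$ is $\widetilde{\mathcal I_r}$-generic, so there is a non-empty Zariski open subset $O\subset\overline{\mathbb K}^{nm\binom{D-1+k}{D}}$ such that $\LM_{\prec_{lex}}(\varphi_{\mathbf a}(\widetilde{\mathcal I_r}))=\LM_{\prec_{lex}}(\widetilde{\mathcal I_r})$ for every $\mathbf a\in O$.

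Next I would descend this equality from $\overline{\mathbb K}[U,X]$ to $\overline{\mathbb K}[X]$ exactly as in the proof of Lemma~\ref{lem:HSeq}. For every $\mathbf a$ the polynomials $u_{i,j}-\varphi_{\mathbf a}(f_{i,j})$ lie in $\varphi_{\mathbf a}(\widetilde{\mathcal I_r})$ with leading monomial $u_{i,j}$, so every monomial divisible by some $u_{i,j}$ belongs to $\LM_{\prec_{lex}}(\varphi_{\mathbf a}(\widetilde{\mathcal I_r}))$; and since $\varphi_{\mathbf a}(\widetilde{\mathcal I_r})=\varphi_{\mathbf a}(\mathcal I_r)+\langle u_{i,j}-\varphi_{\mathbf a}(f_{i,j})\rangle$, eliminating the $u_{i,j}$ gives $\varphi_{\mathbf a}(\widetilde{\mathcal I_r})\cap\overline{\mathbb K}[X]=\varphi_{\mathbf a}(\mathcal I_r)$. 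Hence the monomials in $\LM_{\prec_{lex}}(\varphi_{\mathbf a}(\widetilde{\mathcal I_r}))$ that are not divisible by any $u_{i,j}$ are precisely those of $\LM_{\prec_{lex}}(\varphi_{\mathbf a}(\mathcal I_r))$, and the analogous statement holds for $\widetilde{\mathcal I_r}$ and $\mathcal I_r$. Combining with the first step, $\LM_{\prec_{lex}}(\varphi_{\mathbf a}(\mathcal I_r))=\LM_{\prec_{lex}}(\mathcal I_r)$ for all $\mathbf a\in O$, whence $\HS_{\varphi_{\mathbf a}(\mathcal I_r)}(t)=\HS_{\mathcal I_r}(t)$.

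Finally, since $k=(n-r)(m-r)$, Theorem~\ref{theo:HSIr} shows that $\HS_{\mathcal I_r}(t)$ is a polynomial, so $\varphi_{\mathbf a}(\mathcal I_r)$ is $0$-dimensional for $\mathbf a\in O$ and its degree of regularity equals $\deg(\HS_{\varphi_{\mathbf a}(\mathcal I_r)}(t))+1=\deg(\HS_{\mathcal I_r}(t))+1=\dreg(\mathcal I_r)=Dr(m-r)+(D-1)k+1$ by Lemma~\ref{lem:dregCompl}. The only point that needs a little care is the descent in the second step, i.e.\ checking that the splitting of the leading-monomial ideal into its $u$-part and its $X$-part survives the specialization; but this is immediate because $\LM_{\prec_{lex}}(u_{i,j}-\varphi_{\mathbf a}(f_{i,j}))=u_{i,j}$ for every $\mathbf a$, so no genuine obstacle arises and the corollary follows by assembling Lemmas~\ref{lem:LMgen},~\ref{lem:HSeq},~\ref{lem:dregCompl} and Theorem~\ref{theo:HSIr}.
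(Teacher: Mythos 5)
Your proof is correct and follows essentially the same route as the paper: invoke Lemma~\ref{lem:LMgen} to show that the leading monomial ideal --- and hence the Hilbert series and the degree of regularity --- is preserved under generic specialization, then conclude with Lemma~\ref{lem:dregCompl}. The only difference is cosmetic: the paper applies the specialization argument directly to $\mathcal I_r$ and reads the degree of regularity off a minimal Gr\"obner basis, whereas you apply Lemma~\ref{lem:LMgen} to $\widetilde{\mathcal I_r}$ as literally stated and descend to $\mathcal I_r$ by the elimination argument of Lemma~\ref{lem:HSeq}, which is if anything more faithful to the statements of the lemmas being cited.
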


\begin{proof}
  According to Lemma \ref{lem:LMgen}, there exists a Zariski open subset $O$ such
  that for all $\mathbf a\in O$, $\LM\left(\mathcal
    I_r\right)=\LM\left(\varphi_{\mathbf a}(\mathcal
    I_r)\right)$. Consequently, the polynomials in minimal Gr\"obner
  bases of $\mathcal I_r$ and $\varphi_{\bf a}(\mathcal I_r)$ have the
  same leading monomials. Since the degree of regularity is the
  highest degree of the polynomials in a minimal Gr\"obner basis, we have
  $\dreg\left(\varphi_{\bf a} (\mathcal
    I_r)\right)=\dreg\left(\mathcal I_r\right)$. Lemma
  \ref{lem:dregCompl} concludes the proof.
\end{proof}

The degree of regularity governs the complexity of the Gr\"obner basis
computation with respect to the grevlex ordering. The complexity of
the algorithm FGLM is upper bounded by $O(k\cdot \DEG(I)^3)$ which is
polynomial in the degree of the ideal \cite{FauGiaLazMor93,FauMou11}.

We can now state the main complexity result:

\begin{thm}\label{theo:complHom}
 There exists a non-empty Zariski open subset $O\subset\overline{\mathbb K}^{n m
  \binom{D-1+k}{D}}$ such that for any $\mathbf a\in O$, the
arithmetic complexity of computing a lexicographical Gr\"obner basis of the ideal generated by the $(r+1)\times(r+1)$-minors of the matrix
$\varphi_{\mathbf a}(\mathcal M)$ is upper bounded by
$$O\left(\binom{n}{r+1}\binom{m}{r+1}\binom{\dreg(\varphi_{\mathbf a}(\mathcal I_r)+k}{k}^\omega + k\left(\DEG\left(\varphi_{\mathbf a}(\mathcal I_r)\right)\right)^3\right),$$
where $2\leq \omega\leq 3$ is a feasible exponent for the matrix multiplication, and 

\begin{itemize}
\item if $k = (n-r)(m-r)$, then 
$$\dreg(\varphi_{\mathbf a}(\mathcal I_r)=\deg(\HS_{\varphi_{\mathbf a}(\mathcal I_r)}(t))+1=D r(m-r) +(D-1)k+1$$
 and $\DEG(\varphi_{\mathbf a}(\mathcal I_r))=\HS_{\varphi_{\mathbf a}(\mathcal I_r)}(1)=D^{n m-(n+m-r) r} \prod_{i=0}^{m-r-1}\frac{i! (n+i)!}{(m-1-i)! (n-r+i)!}$.
\item if $k < (n-r)(m-r)$, then assuming that Conjecture \ref{conj:froberg} is true, $$\dreg(\varphi_{\mathbf a}(\mathcal I_r)=\deg(\HS_{\varphi_{\mathbf a}(\mathcal I_r)}(t))+1$$ and $\DEG(\varphi_{\mathbf a}(\mathcal I_r))=\HS_{\varphi_{\mathbf a}(\mathcal I_r)}(1)$ where 
$$\mathsf{HS}_{\varphi_{\mathbf a}(\mathcal I_r)}(t)=\left[(1-t^D)^{n m-(n+m-r)r} \frac{\det A(t^D)}{t^{D\binom{r}{2}}(1-t)^k}\right]_+.$$
\end{itemize}
\end{thm}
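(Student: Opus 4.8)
The plan is to follow the two-step strategy outlined above: first compute a grevlex Gr\"obner basis of $\varphi_{\mathbf a}(\mathcal I_r)$ with $F_5$, then convert it into a lexicographical one with FGLM, the total cost being the sum of the two. For the first step I would apply Proposition~\ref{prop:complF5} directly to the generating family of $\varphi_{\mathbf a}(\mathcal I_r)$, which consists of the $\ell=\binom{n}{r+1}\binom{m}{r+1}$ minors of size $r+1$ of $\varphi_{\mathbf a}(\mathcal M)$, each homogeneous of degree $d_i=D(r+1)$ (here the $f_{i,j}$ are homogeneous of degree $D$). Substituting $\ell$ and the $d_i$ into the bound of Proposition~\ref{prop:complF5} and invoking the crude estimates of the Remark that follows it, I obtain for the grevlex basis a cost of $O\!\left(\binom{n}{r+1}\binom{m}{r+1}\binom{\dreg(\varphi_{\mathbf a}(\mathcal I_r))+k}{k}^{\omega}\right)$ (using also the identity $\binom{a}{b}=\binom{a}{a-b}$), which is the first summand of the announced bound. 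For the second step, the arithmetic complexity of FGLM on a $0$-dimensional ideal $I$ in $k$ variables is $O(k\cdot\DEG(I)^{3})$ \cite{FauGiaLazMor93,FauMou11}, giving the second summand $O\!\left(k\cdot\DEG(\varphi_{\mathbf a}(\mathcal I_r))^{3}\right)$.

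It then remains to certify that this strategy applies for a generic choice of $\mathbf a$ and to evaluate $\dreg$ and $\DEG$. By Lemma~\ref{lem:LMgen}, there is a non-empty Zariski open set $O_1\subset\overline{\mathbb K}^{nm\binom{D-1+k}{D}}$ on which $\LM(\varphi_{\mathbf a}(\mathcal I_r))=\LM(\mathcal I_r)$; hence, for $\mathbf a\in O_1$, the ideals $\varphi_{\mathbf a}(\mathcal I_r)$ and $\mathcal I_r$ share the same Hilbert series, and therefore the same dimension, the same degree of regularity, and the same degree. In both cases treated here --- $k=(n-r)(m-r)$, where Theorem~\ref{theo:HSIr} gives $\dim\mathcal I_r=0$, and $k<(n-r)(m-r)$, where, assuming Conjecture~\ref{conj:froberg}, the Hilbert series of Section~\ref{sec:overdef} is a polynomial and hence $\dim\mathcal I_r=0$ --- the ideal $\varphi_{\mathbf a}(\mathcal I_r)$ is $0$-dimensional for $\mathbf a\in O_1$, so FGLM is applicable and $\dreg(\varphi_{\mathbf a}(\mathcal I_r))=\deg(\HS_{\varphi_{\mathbf a}(\mathcal I_r)}(t))+1$ is finite, so Proposition~\ref{prop:complF5} is applicable as well. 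For the explicit values, when $k=(n-r)(m-r)$ Corollary~\ref{coro:dregCompl} gives $\dreg(\varphi_{\mathbf a}(\mathcal I_r))=Dr(m-r)+(D-1)k+1$, while Corollary~\ref{coro:degree}, together with the identity $(n-r)(m-r)=nm-(n+m-r)r$, gives $\DEG(\varphi_{\mathbf a}(\mathcal I_r))=\DEG(\mathcal I_r)=D^{\,nm-(n+m-r)r}\prod_{i=0}^{m-r-1}\frac{i!\,(n+i)!}{(m-1-i)!\,(n-r+i)!}$; when $k<(n-r)(m-r)$, both quantities are read off the polynomial $\HS_{\varphi_{\mathbf a}(\mathcal I_r)}(t)=\HS_{\mathcal I_r}(t)$ furnished by the theorem of Section~\ref{sec:overdef}, as $\deg(\HS)+1$ and $\HS(1)$ respectively.

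Finally I would take $O$ to be the intersection of $O_1$ with the finitely many non-empty Zariski open subsets coming from Corollary~\ref{coro:dregCompl} and Lemma~\ref{lem:LMgen}; this intersection is non-empty because a finite intersection of non-empty Zariski open subsets of an irreducible affine space is non-empty, and for $\mathbf a\in O$ the sum of the two bounds established above is exactly the stated complexity. Since the substantive work --- the Hilbert series and the degree (Theorem~\ref{theo:HSIr}, Corollary~\ref{coro:degree}), the transfer of generic invariants (Lemma~\ref{lem:LMgen}) and the degree of regularity (Corollary~\ref{coro:dregCompl}) --- has already been carried out, the proof is essentially an assembly; the \textbf{main point requiring care} is the bookkeeping, namely checking the identity $(n-r)(m-r)=nm-(n+m-r)r$ that produces the exponent of $D$ in the degree formula, and observing that Proposition~\ref{prop:complF5}, whose proof bounds a Macaulay-matrix based Gr\"obner basis computation, does legitimately upper-bound the $F_5$ step invoked in the strategy.
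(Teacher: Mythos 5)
Your proposal is correct and follows the same route as the paper: the paper's proof is a one-line assembly of Proposition~\ref{prop:complF5}, the FGLM complexity bound, Corollary~\ref{coro:dregCompl} and Corollary~\ref{coro:degree}, exactly the ingredients you invoke. Your additional bookkeeping (the degrees $D(r+1)$ of the minors, the transfer of $\dreg$ and $\DEG$ via Lemma~\ref{lem:LMgen}, the intersection of the finitely many Zariski open sets, and the identity $(n-r)(m-r)=nm-(n+m-r)r$) only makes explicit what the paper leaves implicit.
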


\begin{proof}
  The number of $(r+1)$-minors of the matrix $\varphi_a(\mathcal M)$
  is $\binom{n}{r+1}\binom{m}{r+1}$.  Consequently, the theorem is a
  straightforward consequence of the bounds on the complexity of the
  $F_5$ algorithm (Proposition \ref{prop:complF5}) and of the FGLM
  algorithm \cite{FauGiaLazMor93,FauMou11}, together with the formulas
  for the degree of regularity (Corollary \ref{coro:dregCompl}) and
  for the degree (Corollary \ref{coro:degree}).
\end{proof}

\begin{rem}
 There exists a polynomial $h(\mathfrak a)$ in $\mathbb
  Z[\mathfrak a]$ when the characteristic of $\mathbb K$ is $0$, such that
$$h(\mathbf a)\neq 0\Rightarrow \mathbf a\in O.$$ Also
note that this polynomial does not depend on the field $\mathbb K$: if
$\mathbb K=\mathbb F_{q}$ is a finite field ($q=p^e$), then the
polynomial $\bar h (\mathfrak a)$ (where all coefficients are taken
modulo $p$) verifies the requested property. Schwartz-Zippel's Lemma
states that, if $\bf a$ is chosen uniformly at random in $\mathbb
F_q^{n m \binom{D-1+k}{D}}$, the probability that $h(\mathbf a)=0$ is
upper bounded by $\deg(h)/q$ and therefore tends towards $0$ when the
cardinality $q$ of the field tends to infinity. This explains why
these complexity results can be used for practical applications
when $\mathsf{char}(\mathbb K)=0$ or $\mathbb K$ is a sufficiently
large finite field.
\end{rem}

\subsection{Positive dimension}

When $k > (n-r)(m-r)$, the ideal $\mathcal I_r$ has positive dimension. To achieve complexity bounds in that case, we need upper bounds on the maximal degree in a minimal Gr\"obner basis of $\mathcal I_r$.

\begin{lem}\label{lem:degposdim}
  If $k > (n-r)(m-r)$, then the maximal degree in a minimal Gr\"obner basis of $\mathcal I_r$ is bounded by
$$D r(m-r) +(D-1)(n-r)(m-r)+1.$$
\end{lem}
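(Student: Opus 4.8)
The plan is to reduce the positive-dimensional case to the $0$-dimensional situation already analyzed. First I would observe that adding $k - (n-r)(m-r)$ generic linear forms $\ell_1, \ldots, \ell_{k-(n-r)(m-r)}$ in $\mathbb K[X]$ to the ideal $\mathcal I_r$ cuts its dimension down to $0$. More precisely, set $k' = (n-r)(m-r)$ and consider the quotient ring $\mathbb K(\mathfrak a)[X]/(\mathcal I_r + \langle \ell_1, \ldots, \ell_{k-k'}\rangle)$: generically the linear forms form a regular sequence on $\mathbb K(\mathfrak a)[X]/\mathcal I_r$ (this follows from the equidimensionality coming from the transfer results of Section~\ref{sec:transfer} together with the Cohen--Macaulayness of $\mathbb K(\mathfrak b,\mathfrak c)[U,X]/\D_r$ used in Lemma~\ref{lem:nonzerodiv}), so the Hilbert series of this $0$-dimensional ideal is $(1-t)^{k-k'}\HS_{\mathcal I_r}(t)$, which by Theorem~\ref{theo:HSIr} equals $\det(A_r(t^D))(1-t^D)^{k'}/(t^{D\binom{r}{2}}(1-t)^{k'})$. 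Its degree of regularity, by the same computation as in Lemma~\ref{lem:dregCompl} with $k$ replaced by $k'=(n-r)(m-r)$, is exactly $D r(m-r) + (D-1)(n-r)(m-r) + 1$.

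Next I would use the standard fact that specializing generic linear forms corresponds to a linear change of coordinates (plus elimination), under which the maximal degree in a grevlex (or any degree-compatible) Gr\"obner basis can only decrease: the leading monomials of a minimal Gr\"obner basis of $\mathcal I_r$ for grevlex are a subset (under the natural inclusion of monomial supports) of those of $\mathcal I_r + \langle \ell_1,\ldots,\ell_{k-k'}\rangle$ after relabeling, because the extra linear forms only add leading monomials that are pure powers of the last variables. Concretely, choosing the $\ell_i$ so that $\LM(\ell_i)$ is one of the $k-k'$ largest variables, the normal-form computation shows that any element of a minimal Gr\"obner basis of $\mathcal I_r$ involving only the remaining $k'$ variables already appears (up to normal form) in a Gr\"obner basis of the enlarged ideal, and hence has degree at most $\dreg$ of the $0$-dimensional ideal. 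Since the degree of regularity of a $0$-dimensional homogeneous ideal bounds the degrees in its minimal Gr\"obner basis, this yields the claimed bound.

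The main obstacle I anticipate is making the comparison between the Gr\"obner basis of $\mathcal I_r$ and that of the $0$-dimensional cut precise enough to transfer a degree bound: one must ensure that the grevlex leading-term ideal of $\mathcal I_r$, restricted to the subring generated by the "bottom" $k'$ variables, is contained in the grevlex leading-term ideal of the quotient by the generic linear forms. This requires choosing the linear forms and the variable order compatibly (so that modding out by $\ell_1,\ldots,\ell_{k-k'}$ literally eliminates the top $k-k'$ variables), invoking the regular-sequence property established via the transfer lemmas, and then quoting that $\dreg$ of a $0$-dimensional ideal is $\deg(\HS)+1$ and bounds minimal Gr\"obner basis degrees. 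All the Hilbert-series bookkeeping is routine given Theorem~\ref{theo:HSIr}; the genericity of the reduction and the leading-term containment are the only delicate points.
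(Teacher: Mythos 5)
Your reduction is the right one and matches the paper's: cut $\mathcal I_r$ down to dimension $0$ (the paper does this by specializing the last $k-(n-r)(m-r)$ variables to zero, which is your generic linear section in disguise), compute the Hilbert series of the cut via Theorem~\ref{theo:HSIr}, and read off $\dreg = Dr(m-r)+(D-1)(n-r)(m-r)+1$ as in Lemma~\ref{lem:dregCompl}. That part is fine.

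The gap is in the transfer step. You argue that the leading monomials of a minimal Gr\"obner basis of $\mathcal I_r$ are among those of $\mathcal I_r+\langle \ell_1,\ldots,\ell_{k-k'}\rangle$, and that therefore the maximal degree "can only decrease". But the inclusion $\LM(\mathcal I_r)\subseteq\LM(\mathcal I_r+L)$ goes the wrong way for a degree bound: a minimal generator $m$ of $\LM(\mathcal I_r)$ of large degree may sit inside $\LM(\mathcal I_r+L)$ only as a \emph{multiple} of a lower-degree minimal generator of the bigger ideal, so knowing that the bigger ideal is generated in degrees $\leq\dreg$ tells you nothing about $\deg(m)$ (think of $\langle x^5\rangle\subseteq\langle x\rangle$). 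You also quietly restrict attention to basis elements "involving only the remaining $k'$ variables", leaving the others unaccounted for. What is actually needed is the \emph{reverse} containment, and this is where the paper uses both ingredients together: writing $J$ for the ideal generated by the specialized minors, the grevlex order gives the easy inclusion $\LM(J)\subseteq\LM(\mathcal I_r)$ (setting the \emph{smallest} variables to zero cannot increase leading monomials of homogeneous polynomials), and the computation $\HS_J=\HS_{\mathcal I_r}$ then forces $\LM(J)=\LM(\mathcal I_r)$. Since $\LM(J)$ is generated in degrees at most the $\dreg$ of the $0$-dimensional ideal $J\cap\mathbb K(\mathfrak a)[x_1,\ldots,x_{(n-r)(m-r)}]$, so is $\LM(\mathcal I_r)$. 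You have the Hilbert-series identity in hand but only use it to compute $\dreg$ of the cut; you never use it to pin down $\LM(\mathcal I_r)$ itself, and without that the bound does not follow.
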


\begin{proof}
  Consider the ideal $J$ obtained by specializing the last $k -
  (n-r)(m-r)$ variables to zero in $\mathcal I_r$. We prove now that
  $\LM(\mathcal I_r)=\LM(J)$. First, notice that for the grevlex
  ordering, $\LM(J)\subset \LM(\mathcal I_r)$. According to
  Theorem \ref{theo:HSIr}, the Hilbert series of the ideal $J\cap \mathbb
  K(\mathfrak a)[x_1,\ldots,x_{(n-r)(m-r)}]$ is equal to
$$\frac{\det A_r(t^D) (1-t^D)^{(n-r)(m-r)}}{t^{D\binom{r}{2}} (1-t)^{(n-r)(m-r)}}.$$
By construction, $J\subset \mathbb K(\mathfrak a)[x_1,\ldots,x_{(n-r)(m-r)}]$, thus the Hilbert series of $J$ as an ideal of the ring $\mathbb
K(\mathfrak a)[x_1,\ldots,x_k]$ is equal to
$$\frac{\det A_r(t^D) (1-t^D)^{(n-r)(m-r)}}{t^{D\binom{r}{2}} (1-t)^{k}},$$
which is equal to the Hilbert series of $\mathcal I_r$.

Since $\HS_J(t)=\HS_{\mathcal I_r}(t)$ and $\LM(J)\subset \LM(\mathcal I_r)$, we can deduce that  $\LM(J) = \LM(\mathcal I_r)$.

Consequently, the leading monomials in minimal Gr\"obner
bases of $J$ and $\mathcal I_r$ are the same. Hence, the polynomials
in both Gr\"obner bases have the same degrees since they are
homogeneous.

Finally, notice that the Gr\"obner basis of the ideal  $J$ is the same as that of the ideal
$J\cap \mathbb K(\mathfrak a)[x_1,\ldots,x_{(n-r)(m-r)}]$ which, by Lemma
\ref{lem:dregCompl}, is
a zero-dimensional ideal whose degree of regularity is $D r(m-r)
+(D-1)(n-r)(m-r)+1$. Therefore the maximal degree of the polynomials in the minimal
reduced Gr\"obner basis of $\mathcal I_r$ is bounded by $D r(m-r) +(D-1)(n-r)(m-r)+1$.
\end{proof}

Using exactly the same argumentation as in the proof of Corollary \ref{coro:dregCompl}, we deduce that
\begin{cor}\label{coro:degposdim}
  If $k > (n-r)(m-r)$, then there exists a non-empty Zariski open subset $O\subset \overline{\mathbb K}^{n m
  \binom{D-1+k}{D}}$ such that, for $\mathbf a\in O$, the maximal degree of the polynomials in a minimal
  grevlex Gr\"obner basis of $\varphi_{\mathbf a}(\mathcal
  I_r)$ is
$$D r(m-r) +(D-1)(n-r)(m-r)+1.$$
\end{cor}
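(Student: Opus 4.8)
The plan is to mimic exactly the argument that upgraded Lemma \ref{lem:dregCompl} to Corollary \ref{coro:dregCompl}, but now using Lemma \ref{lem:degposdim} in place of Lemma \ref{lem:dregCompl}. The key point I would invoke first is Lemma \ref{lem:LMgen}: the property $\mathcal P_{\widetilde{\mathcal I_r}}$ asserting that $\LM(I) = \LM(\widetilde{\mathcal I_r})$ is $\widetilde{\mathcal I_r}$-generic, and by the same reasoning (or directly, taking normal forms with respect to a grevlex Gröbner basis of $\mathcal I_r$ over $\mathbb K(\mathfrak a)[X]$ and clearing denominators) there is a non-empty Zariski open subset $O\subset\overline{\mathbb K}^{nm\binom{D-1+k}{D}}$ such that for all $\mathbf a\in O$ one has $\LM(\varphi_{\mathbf a}(\mathcal I_r)) = \LM(\mathcal I_r)$ for the grevlex ordering.

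Given this equality of leading-term ideals, the second step is purely formal: a minimal grevlex Gröbner basis of $\varphi_{\mathbf a}(\mathcal I_r)$ has the same set of leading monomials as a minimal grevlex Gröbner basis of $\mathcal I_r$. Since the generators $\varphi_{\mathbf a}(f_{i,j})$ are homogeneous, the $(r+1)$-minors are homogeneous, hence $\varphi_{\mathbf a}(\mathcal I_r)$ is a homogeneous ideal and the elements of its reduced Gröbner basis are homogeneous polynomials; the degree of each such polynomial equals the degree of its leading monomial. Therefore the maximal degree occurring in a minimal grevlex Gröbner basis of $\varphi_{\mathbf a}(\mathcal I_r)$ equals the maximal degree occurring in a minimal grevlex Gröbner basis of $\mathcal I_r$, which by Lemma \ref{lem:degposdim} is bounded by $D r(m-r) + (D-1)(n-r)(m-r) + 1$.

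I do not expect any serious obstacle here, since every ingredient is already in place: the transfer of leading-term ideals under generic specialization is Lemma \ref{lem:LMgen}, and the degree bound for $\mathcal I_r$ itself is Lemma \ref{lem:degposdim}. The only mild subtlety worth spelling out is that Lemma \ref{lem:LMgen} is phrased for the ideal $\widetilde{\mathcal I_r}\subset\mathbb K(\mathfrak a)[U,X]$, so one should either re-run the Buchberger-with-specialization argument of Lemma \ref{lem:LMgen} directly for the ideal $\mathcal I_r\subset\mathbb K(\mathfrak a)[X]$ (the proof is verbatim the same, replacing the generators by the $(r+1)$-minors of $\varphi_{\mathbf a}(\mathcal M)$), or use the elimination property $\mathcal I_r = \widetilde{\mathcal I_r}\cap\mathbb K(\mathfrak a)[X]$ together with a block ordering as in the proof of Lemma \ref{lem:HSeq} to descend the leading-monomial equality from $\widetilde{\mathcal I_r}$ to $\mathcal I_r$. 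Either way the open set $O$ is non-empty. The statement then follows immediately, and this is precisely what the phrase ``using exactly the same argumentation as in the proof of Corollary \ref{coro:dregCompl}'' refers to.
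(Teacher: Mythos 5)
Your proposal is correct and follows essentially the same route as the paper, which simply states that the corollary follows ``using exactly the same argumentation as in the proof of Corollary~\ref{coro:dregCompl}'', i.e.\ the generic preservation of leading monomials from Lemma~\ref{lem:LMgen} combined with homogeneity and the degree bound of Lemma~\ref{lem:degposdim}; your remark about Lemma~\ref{lem:LMgen} being stated for $\widetilde{\mathcal I_r}$ rather than $\mathcal I_r$ is a legitimate subtlety that the paper itself glosses over, and either of your two fixes works. Note only that, like the paper's Lemma~\ref{lem:degposdim}, your argument delivers an upper bound on the maximal degree rather than the exact equality asserted in the corollary's statement, but this imprecision is inherited from the paper and is all that is needed for the subsequent complexity bound of Theorem~\ref{theo:complposdim}.
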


\begin{thm}\label{theo:complposdim}
If $k > (n-r)(m-r)$, then there exists a non-empty Zariski open subset $O\subset \overline{\mathbb K}^{n m
  \binom{D-1+k}{D}}$
such that for any $\mathbf a\in O$, the arithmetic complexity of
computing a grevlex Gr\"obner basis of $\varphi_{\mathbf a}(\mathcal
I_r)$ is upper bounded by 
$$O\left(\binom{n}{r+1}\binom{m}{r+1}\binom{D r(m-r) +(D-1)(n-r)(m-r)+1+k}{k}^\omega\right).$$
\end{thm}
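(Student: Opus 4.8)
The plan is to apply Proposition~\ref{prop:complF5} directly to the ideal $\varphi_{\mathbf a}(\mathcal I_r)$, whose generators are the $\binom{n}{r+1}\binom{m}{r+1}$ minors of size $r+1$ of $\varphi_{\mathbf a}(\mathcal M)$, each of which is a homogeneous polynomial of degree $(r+1)D$ in $k$ variables. The subtlety compared with Theorem~\ref{theo:complHom} is that now the ideal has positive dimension, so Proposition~\ref{prop:complF5} cannot be invoked with $\dreg$ in its $0$-dimensional sense; instead one must use the version where the Macaulay matrix is built up to the highest degree $d_{\max}$ reached by a polynomial in a minimal grevlex Gr\"obner basis. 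Corollary~\ref{coro:degposdim} supplies exactly this bound: for $\mathbf a$ in a suitable non-empty Zariski open set $O$, one has $d_{\max}\le Dr(m-r)+(D-1)(n-r)(m-r)+1$.

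First I would fix $\mathbf a\in O$, where $O$ is the open set from Corollary~\ref{coro:degposdim}. Then a grevlex Gr\"obner basis of $\varphi_{\mathbf a}(\mathcal I_r)$ can be obtained by computing the row echelon form of the Macaulay matrix in degrees $\le d_{\max}$, with $d_{\max}= Dr(m-r)+(D-1)(n-r)(m-r)+1$: this is the same linear-algebra argument as in Proposition~\ref{prop:complF5}, except that in positive dimension one does not truncate at a degree of regularity but at $d_{\max}$, which is legitimate precisely because $d_{\max}$ bounds the degrees of the minimal Gr\"obner basis elements. The number of columns of this matrix is $\#\{\text{monomials of degree} \le d_{\max} \text{ in } k \text{ variables}\} = \binom{k+d_{\max}}{k}$, and the number of rows is at most $\binom{n}{r+1}\binom{m}{r+1}\binom{k+d_{\max}}{k}$ since each of the $\binom{n}{r+1}\binom{m}{r+1}$ generators is multiplied by at most $\binom{k+d_{\max}}{k}$ monomials. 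Bounding the rank of the matrix by its number of columns $\binom{k+d_{\max}}{k}$ and applying the $O(\rho^{\omega-2}pq)$ cost of row echelon form computation from \cite[Theorem 2.10]{Sto00} (as in the Remark following Proposition~\ref{prop:complF5}) gives the stated bound $O\!\left(\binom{n}{r+1}\binom{m}{r+1}\binom{Dr(m-r)+(D-1)(n-r)(m-r)+1+k}{k}^\omega\right)$.

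The only genuinely delicate point is the justification that computing the echelon form of the degree-$\le d_{\max}$ Macaulay matrix actually yields a Gr\"obner basis in positive dimension. This is standard (it is the content of the fact that once the degree of the minimal Gr\"obner basis elements is reached, no new leading terms appear), but it relies crucially on the bound of Corollary~\ref{coro:degposdim} holding for the specialized ideal $\varphi_{\mathbf a}(\mathcal I_r)$ and not merely for the generic $\mathcal I_r$; this is exactly why we restrict $\mathbf a$ to the open set $O$. Once this is in place, the rest is the routine column/row count and the invocation of the matrix-arithmetic complexity bound, so I expect no further obstacle. One may optionally remark that, unlike in the $0$-dimensional case, there is no FGLM step here since we only produce a grevlex basis; a lexicographic basis in positive dimension would require additional machinery and is outside the scope of this statement.
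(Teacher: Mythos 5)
Your proposal is correct and follows essentially the same route as the paper, which simply cites Proposition~\ref{prop:complF5} together with Corollary~\ref{coro:degposdim}; you have in fact supplied more detail than the paper does, in particular the (correct) observation that in positive dimension the Macaulay-matrix truncation degree must be taken to be the maximal degree in a minimal grevlex Gr\"obner basis rather than a $0$-dimensional degree of regularity, and that Corollary~\ref{coro:degposdim} provides exactly this bound for the specialized ideal. The row/column count and the invocation of the $O(\rho^{\omega-2}pq)$ echelon-form cost match the paper's argument.
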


\begin{proof}
This is a consequence of Proposition \ref{prop:complF5} and Corollary \ref{coro:degposdim}.
\end{proof}

\subsection{The $0$-dimensional affine case}
\label{sec:affinecompl}
For practical applications, the affine case (i.e. when the entries of
the input matrix $\mathcal M$ are affine polynomials of degree $D$) is more often encountered
than the homogeneous one. In this case, the matrix $\mathcal M$ is defined as follows
$$\mathcal M=\begin{pmatrix}
  f_{1,1}&\dots&f_{1,m}\\ \vdots&\ddots&\vdots\\ f_{n,1}&\dots&f_{n,m}
\end{pmatrix}\hspace{2cm}f_{i,j} = \sum_{\ell=0}^D \sum_{t\in \mon(\ell,k)} \mathfrak a_{t}^{(i,j)} t.$$
We show in this section that the complexity results (Theorems
\ref{theo:complHom} and \ref{theo:complposdim}) still hold in the affine case. This is achieved by considering the homogenized system:

\begin{defn}\cite[Chapter 8, \S 2, Proposition 7]{CoxLitShe97}
  Let $(q_1,\ldots, q_\ell)\in \mathbb K[x_1,\ldots, x_k]^\ell$ be an affine
  polynomial system. We let $(\widetilde{q_1},\ldots, \widetilde{q_\ell})\in  \mathbb K[x_1,\ldots, x_k,x_{k+1}]^\ell$ denote
  its \emph{homogenized system} defined by
  $$\forall i, \text{ s.t. }1\leq i\leq \ell, \widetilde{q_i}(x_1,\ldots, x_k,x_{k+1})=x_{k+1}^{\deg(q_i)}q_i\left(\frac{x_1}{x_{k+1}},\ldots, \frac{x_k}{x_{k+1}}\right).$$
\end{defn}

Notice that if an affine polynomial system has solutions, then the dimension of the ideal generated by its homogenized system is positive.

The study of the homogenized system is motivated by the fact that, for
the grevlex ordering, the dehomogenization of a Gr\"obner basis of
$\langle\widetilde{q_1},\ldots, \widetilde{q_\ell}\rangle$ is a
Gr\"obner basis of $\langle q_1,\ldots, q_\ell\rangle$. Therefore, in
order to compute a Gr\"obner basis of the affine system, it is
sufficient to compute a Gr\"obner basis of the homogenized system (for which we have complexity estimates by Theorems \ref{theo:complHom} and \ref{theo:complposdim}).

To estimate the complexity of the change of ordering, we need bounds on the degree of the ideal in the affine case:

\begin{lem}\label{lem:degaff}
The degree of the ideal $\langle q_1,\ldots, q_\ell\rangle$ is upper bounded by that of $\langle\widetilde{q_1},\ldots, \widetilde{q_\ell}\rangle$.
\end{lem}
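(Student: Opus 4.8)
The plan is to relate the degree of the affine ideal $\langle q_1,\ldots,q_\ell\rangle \subset \mathbb K[x_1,\ldots,x_k]$ to the degree of the homogeneous ideal $\langle\widetilde{q_1},\ldots,\widetilde{q_\ell}\rangle\subset\mathbb K[x_1,\ldots,x_k,x_{k+1}]$ via the standard geometric picture: homogenization corresponds to taking the projective closure of the affine variety (together, possibly, with extra components supported at infinity, i.e. on the hyperplane $x_{k+1}=0$). Concretely, if $X_{\mathrm{aff}}\subset\mathbb A^k$ is the affine scheme defined by the $q_i$ and $\widetilde X\subset\mathbb P^k$ is the scheme defined by the $\widetilde q_i$, then the dehomogenization map (setting $x_{k+1}=1$) identifies the open subscheme $\widetilde X\cap\{x_{k+1}\neq 0\}$ with $X_{\mathrm{aff}}$. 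The degree of a projective scheme is the sum of the degrees of its top-dimensional components (counted with multiplicity), and each top-dimensional component of $X_{\mathrm{aff}}$ extends to a component of $\widetilde X$ of the same dimension and the same degree, while $\widetilde X$ may contain additional components (all contained in $\{x_{k+1}=0\}$, or of lower dimension, contributing nonnegatively). Hence $\DEG(\langle q_1,\ldots,q_\ell\rangle)\leq\DEG(\langle\widetilde{q_1},\ldots,\widetilde{q_\ell}\rangle)$.

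First I would make precise the definition of degree being used in the affine case: since the paper reads off $\DEG$ from the Hilbert series (the numerator of the rational function, evaluated at $t=1$, equivalently $\HS_I(1)$ when the ideal is $0$-dimensional, and more generally the leading coefficient of the Hilbert polynomial times the appropriate factorial), the cleanest route is through Hilbert functions of the homogenized ideal. The key algebraic fact is that for the grevlex ordering with $x_{k+1}$ the smallest variable, a grevlex Gröbner basis of $\langle\widetilde{q_1},\ldots,\widetilde{q_\ell}\rangle$ dehomogenizes to a grevlex Gröbner basis of $\langle q_1,\ldots,q_\ell\rangle$ (this is exactly the fact already invoked in the paragraph preceding the lemma). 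Consequently $\LM(\langle q_1,\ldots,q_\ell\rangle)$ is obtained from $\LM(\langle\widetilde q_1,\ldots,\widetilde q_\ell\rangle)$ by setting $x_{k+1}=1$ in the monomial generators. Comparing the standard monomials (the monomials outside the leading term ideal) on both sides then gives a surjection from the standard monomials of degree $\le d$ in $k+1$ variables onto those of degree $\le d$ in $k$ variables, which bounds the affine Hilbert function of $\langle q_1,\ldots,q_\ell\rangle$ in degree $d$ by the Hilbert function of $\langle\widetilde q_1,\ldots,\widetilde q_\ell\rangle$ in degree $\le d$; passing to leading terms of the associated Hilbert polynomials yields the degree inequality.

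The main obstacle is bookkeeping rather than conceptual depth: one must be careful that ``degree'' is interpreted consistently (affine degree via the Hilbert polynomial of the affine coordinate ring versus the degree of the projective closure), and one must handle the possibility that the homogenized ideal is not radical or has embedded or lower-dimensional components, which is precisely why the inequality goes only one way. I would therefore phrase the argument so that every component of the homogenized scheme contributes nonnegatively to its degree and every top-dimensional component of the affine scheme is matched by one of equal degree in the closure; the extra components at infinity can only increase $\DEG(\langle\widetilde q_1,\ldots,\widetilde q_\ell\rangle)$. This dichotomy is harmless here because the lemma is only an upper bound, used downstream to control the cost of the FGLM change of ordering.
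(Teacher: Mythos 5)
Your argument is correct in substance but takes a genuinely different route from the paper's. The paper's proof is a short algebraic trick: $\mathbb K[x_1,\ldots,x_k]/\langle q_1,\ldots,q_\ell\rangle$ is isomorphic to $\mathbb K[x_1,\ldots,x_{k+1}]/\langle\widetilde{q_1},\ldots,\widetilde{q_\ell},x_{k+1}-1\rangle$, so the affine degree equals the degree of the homogenized ideal cut by the degree-one form $x_{k+1}-1$, which by (refined) B\'ezout is at most $\DEG(\langle\widetilde{q_1},\ldots,\widetilde{q_\ell}\rangle)$. Your geometric version --- the affine scheme is the part of the projective scheme away from $\{x_{k+1}=0\}$, and any components at infinity only add to the degree --- is the same fact viewed from the other side, and it is fine with the caveat (implicitly shared by the paper) that the components at infinity must not have strictly larger dimension than the affine part, since $\DEG$ only records the top-dimensional components; this is satisfied in the paper's application. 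What the paper's route buys is brevity and no case analysis on components; what yours buys is an explicit picture of where the discrepancy between the two degrees comes from.

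One caution about the Hilbert-function bookkeeping in your ``cleanest route'': bounding the affine Hilbert function of $\langle q_1,\ldots,q_\ell\rangle$ at $d$ by the Hilbert function of $\langle\widetilde{q_1},\ldots,\widetilde{q_\ell}\rangle$ \emph{in degrees $\le d$} (a cumulative count) loses a factor of $d$ in the normalization and does not yield the degree inequality: in the $0$-dimensional affine case the left side stabilizes at $\DEG(\langle q_1,\ldots,q_\ell\rangle)$ while the cumulative right side grows linearly. The correct comparison is degree-for-degree: for grevlex with $x_{k+1}$ smallest, the map $m\mapsto m\,x_{k+1}^{\,d-\deg m}$ injects the standard monomials of $\langle q_1,\ldots,q_\ell\rangle$ of degree at most $d$ into the standard monomials of $\langle\widetilde{q_1},\ldots,\widetilde{q_\ell}\rangle$ of degree exactly $d$ (if some $\LM(g)$ divided the padded monomial, then $\LM(\chi(g))=\chi(\LM(g))$ would divide $m$). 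This bounds the affine Hilbert function at $d$ by the graded Hilbert function at $d$, and then the leading coefficients compare with the right factorials.
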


\begin{proof}
  The rings $\mathbb K[x_1,\ldots, x_k]/\langle q_1,\ldots,
  q_\ell\rangle$ and $\mathbb K[x_1,\ldots, x_k, x_{k+1}]/\langle
  \widetilde{q_1},\ldots, \widetilde{q_\ell}, x_{k+1}-1\rangle$ are
  isomorphic. Therefore the degrees of the ideals $\langle q_1,\ldots,
  q_\ell\rangle$ and $\langle \widetilde{q_1},\ldots,
  \widetilde{q_\ell}, x_{k+1}-1\rangle$ are equal. Since $\deg(x_{k+1}-1)=1$, we
  obtain:
$$\begin{array}{rcl}\DEG\left(\langle q_1,\ldots, q_\ell\rangle\right)&=&\DEG\left(\langle\widetilde{q_1},\ldots, \widetilde{q_\ell}, x_{k+1}-1\rangle\right)\\
&\leq&\DEG\left(\langle\widetilde{q_1},\ldots, \widetilde{q_\ell}\rangle\right).
\end{array}$$
\end{proof}

\begin{lem}\label{lem:dregaff}
The degree of regularity with respect to the grevlex ordering of the ideal $\langle q_1,\ldots, q_\ell\rangle$ is upper bounded by that of $\langle\widetilde{q_1},\ldots, \widetilde{q_\ell}\rangle$.
\end{lem}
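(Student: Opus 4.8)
The plan is to reduce the statement to the homogeneous case through the homogenization trick recalled just above, combined with the observation that grevlex is a \emph{graded} monomial ordering. First I would keep $x_{k+1}$ as the smallest variable, so that the quoted property applies: if $\widetilde{G}$ is a grevlex Gr\"obner basis of $\langle\widetilde{q_1},\ldots,\widetilde{q_\ell}\rangle$, then its dehomogenization $G=\{g(x_1,\ldots,x_k,1)\mid g\in\widetilde{G}\}$ is a grevlex Gr\"obner basis of $\langle q_1,\ldots,q_\ell\rangle$. Taking $\widetilde{G}$ to be the minimal (reduced) Gr\"obner basis, every $g\in\widetilde{G}$ has degree at most $\dreg(\langle\widetilde{q_1},\ldots,\widetilde{q_\ell}\rangle)$, by the very meaning of the degree of regularity of the homogeneous ideal $\langle\widetilde{q_1},\ldots,\widetilde{q_\ell}\rangle$ (it bounds the degrees of a minimal Gr\"obner basis). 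Since substituting $x_{k+1}=1$ cannot raise the total degree, every element of $G$ has degree at most $\dreg(\langle\widetilde{q_1},\ldots,\widetilde{q_\ell}\rangle)$ as well.

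It then remains to pass from "$\langle q_1,\ldots,q_\ell\rangle$ admits a grevlex Gr\"obner basis of degree at most $\delta:=\dreg(\langle\widetilde{q_1},\ldots,\widetilde{q_\ell}\rangle)$" to "$\dreg(\langle q_1,\ldots,q_\ell\rangle)\le\delta$", where in the affine setting the latter quantity is the maximal degree of an element of the \emph{reduced} grevlex Gr\"obner basis (equivalently, the highest degree reached by $F_5$ on the system). Here I would use that grevlex compares total degree first, so that $\deg(\LM(p))=\deg(p)$ for every polynomial $p$. Consequently $\LM(\langle q_1,\ldots,q_\ell\rangle)$ is generated by the leading monomials of $G$, each of degree at most $\delta$, hence each of its minimal generators has degree at most $\delta$; and the reduced Gr\"obner basis element with a given leading monomial $\mu$ is $\mu$ plus grevlex-smaller (hence lower or equal degree) terms, so it has degree at most $\deg(\mu)\le\delta$. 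Therefore the reduced grevlex Gr\"obner basis of $\langle q_1,\ldots,q_\ell\rangle$ has degree at most $\delta$, which is exactly the asserted bound.

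The one genuinely delicate point — and the step that needs the most care — is this last one: in the inhomogeneous setting, minimalizing or inter-reducing a Gr\"obner basis could a priori increase degrees, and one must really invoke the graded nature of grevlex to exclude this (for a non-graded ordering the lemma would simply be false). Everything else is bookkeeping: the dehomogenization correspondence is quoted, and the fact that setting $x_{k+1}=1$ is degree-non-increasing is immediate. Alternatively, if one adopts as the working definition of the grevlex degree of regularity of an affine ideal the largest degree occurring during the $F_5$ run on the homogenized system, then the statement becomes essentially a tautology once the first paragraph is in place.
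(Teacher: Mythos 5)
Your first paragraph coincides with the paper's own argument: with $x_{k+1}$ taken smallest, the dehomogenization $\chi(\widetilde G)$ of a grevlex Gr\"obner basis $\widetilde G$ of $\langle\widetilde{q_1},\ldots,\widetilde{q_\ell}\rangle$ is a grevlex Gr\"obner basis of $\langle q_1,\ldots,q_\ell\rangle$ (via $\LM(\chi(f))=\chi(\LM(f))$ for homogeneous $f$), and dehomogenization does not raise degrees. Your second paragraph is also internally correct: since grevlex is a graded ordering, passing to the reduced basis cannot raise degrees, so the reduced grevlex basis of the affine ideal has degree at most $\delta=\dreg(\langle\widetilde{q_1},\ldots,\widetilde{q_\ell}\rangle)$.

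The gap is that "maximal degree of the reduced Gr\"obner basis" is not quite the quantity the paper means here by the degree of regularity of an affine ideal, and the two can genuinely differ. What the complexity statement (Theorem \ref{theo:compl_affine} via Proposition \ref{prop:complF5}) needs is the degree up to which the Macaulay matrix of the \emph{affine generators} $(q_1,\ldots,q_\ell)$ must be built before its row echelon form contains a Gr\"obner basis. A bound on the degree of the reduced basis does not bound this: for $q_1=x^2+x$, $q_2=x^2$ the reduced basis is $\{x\}$, of degree $1$, yet one must work in degree $2$ to produce it. The paper closes this by tracking representation degrees: each $\widetilde g\in\widetilde G$ admits a homogeneous representation $\widetilde g=\sum h_i\widetilde{q_i}$ with $\deg(h_i\widetilde{q_i})\leq\delta$, and applying $\chi$ yields $\chi(\widetilde g)=\sum\chi(h_i)\,q_i$ with $\deg(\chi(h_i)q_i)\leq\delta$, so every element of the Gr\"obner basis $\chi(\widetilde G)$ lies in the row space of the affine Macaulay matrix in degree $\delta$. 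This one extra observation, which your setup already makes available, is what is missing; your closing remark about adopting the "largest degree reached by $F_5$" as the definition points in the right direction but refers to the run on the homogenized system, whereas the object the complexity bound actually counts is the Macaulay construction on the affine one.
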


\begin{proof}
Let $\chi$ denote the dehomogenization morphism:
$$\begin{array}{rrcl}
\chi:&\mathbb K[x_1,\ldots, x_{k+1}]&\longrightarrow&\mathbb K[x_1,\ldots, x_k]\\
&f(x_1,\ldots,x_k,x_{k+1})&\longmapsto&f(x_1,\ldots,x_k,1)
\end{array}$$
If $G$ is a grevlex Gr\"obner basis of $\langle\widetilde{q_1},\ldots, \widetilde{q_\ell}\rangle$, then $\chi(G)$ is a grevlex Gr\"obner basis of $\langle q_1,\ldots, q_\ell\rangle$ (this is a consequence of the following property of the grevlex ordering: $\forall f\in\mathbb K[x_1,\ldots, x_{k+1}]$ homogeneous, $\LM(\chi(f))=\chi(\LM(f))$). 
Also, notice that for each $g\in G$, any relation
$g=\sum_{i=1}^{\ell} q_i h_i$
gives a relation $\chi(g)=\sum_{i=1}^{\ell} \chi(q_i) \chi(h_i)$ of lower degree since 
$$\deg(\chi(q_i)\chi(h_i))\le \deg(q_i h_i).$$
Consequently, a Gr\"obner basis of $\langle q_1,\ldots, q_\ell\rangle$ can be obtained by computing the row echelon form of the Macaulay matrix of $(q_1,\ldots, q_\ell)$ in degree $\dreg(\langle\widetilde{q_1},\ldots, \widetilde{q_\ell}\rangle)$. Therefore, the degree of regularity with respect to the grevlex ordering of the ideal $\langle q_1,\ldots, q_\ell\rangle$ is upper bounded by that of $\langle\widetilde{q_1},\ldots, \widetilde{q_\ell}\rangle$.
\end{proof}

We can now state the main complexity result for the affine generalized MinRank problem:
\begin{thm}\label{theo:compl_affine}
Suppose that the matrix $\mathcal M$ contains generic affine polynomials of degree $D$:
$$\mathcal M=\begin{pmatrix}
  f_{1,1}&\dots&f_{1,m}\\ \vdots&\ddots&\vdots\\ f_{n,1}&\dots&f_{n,m}
\end{pmatrix}\hspace{2cm}f_{i,j} = \sum_{\ell=0}^D \sum_{t\in \mon(\ell,k)} \mathfrak a_{t}^{(i,j)} t.$$
There exists a non identically null polynomial $h\in \mathbb
K[\mathfrak a]$ such that for any $\mathbf a\in \overline{\mathbb K}^{n m
  \binom{D+k}{D}}$ such that $h(\mathbf a)\neq 0$, the overall
arithmetic complexity of computing the set of points such that the
matrix $\varphi_{\bf a}(\mathcal M)$ has rank less than $r+1$ with Gr\"obner
basis algorithms is upper bounded by
$$O\left(\binom{n}{r+1}\binom{m}{r+1}\binom{\dreg(\varphi_{\mathbf a}(\mathcal I_r))+k}{k}^\omega + k\left(\DEG(\varphi_{\mathbf a}(\mathcal I_r)\right)^3\right),$$
where $2\leq \omega\leq 3$ is a feasible exponent for the matrix
multiplication and
\begin{itemize}
\item if $k = (n-r)(m-r)$, then 
$$\dreg(\varphi_{\mathbf a}(\mathcal I_r))\leq D r(m-r) +(D-1)k+1,$$
$$\DEG(\varphi_{\mathbf a}(\mathcal I_r))\leq D^{(n-r)(m-r)} \prod_{i=0}^{m-r-1}\frac{i! (n+i)!}{(m-1-i)! (n-r+i)!}.$$
\item if $k < (n-r)(m-r)$, then assuming that Conjecture \ref{conj:froberg} is true, 
$$\dreg(\varphi_{\mathbf a}(\mathcal I_r))\leq \deg(P(t))+1,$$
 and $\DEG(\varphi_{\mathbf a}(\mathcal I_r))\leq P(1)$ where 
$$P(t)=\left[(1-t^D)^{(n-r)(m-r)} \frac{\det A(t^D)}{t^{D\binom{r}{2}}(1-t)^k}\right]_+.$$
\end{itemize}
\end{thm}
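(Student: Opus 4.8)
The plan is to reduce the affine generalized MinRank problem to the homogeneous one --- already settled in Theorems~\ref{theo:complHom} and~\ref{theo:complposdim} --- by passing simultaneously to the homogenization with respect to a fresh variable $x_{k+1}$ (as before Lemma~\ref{lem:degaff}) and to the system of highest-degree homogeneous parts. Write $q_1,\dots,q_N$, with $N=\binom{n}{r+1}\binom{m}{r+1}$, for the $(r+1)$-minors of $\varphi_{\mathbf a}(\mathcal M)$. For generic $\mathbf a$, the determinant of the matrix of the degree-$D$ parts of the $f_{i,j}$ is not identically zero, so each $q_s$ has degree exactly $D(r+1)$; homogenization then commutes with this determinant, hence $\widetilde{q_s}$ is the $(r+1)$-minor --- on the same rows and columns --- of the homogenized matrix $\widetilde{\mathcal M}=(\widetilde{f_{i,j}})$, whose entries, after reading off coefficients, are generic homogeneous forms of degree $D$ in $k+1$ variables with coefficients the very parameters $\mathfrak a$. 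Likewise, the degree-$D(r+1)$ parts of the $q_s$ are generically the $(r+1)$-minors of a generic matrix of forms of degree $D$ in the $k$ variables $x_1,\dots,x_k$. Hence the results of Sections~\ref{sec:welldef}--\ref{sec:compl} apply both to $\langle\widetilde q_1,\dots,\widetilde q_N\rangle$, which is the homogeneous $\mathcal I_r$ in $k+1$ variables, and to the \emph{leading-form ideal} of $\langle q_1,\dots,q_N\rangle$, which is the homogeneous $\mathcal I_r$ in $k$ variables.

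For the complexity bound I would argue as follows. For the grevlex ordering the dehomogenization of a Gr\"obner basis of $\langle\widetilde q_s\rangle$ is one of $\langle q_s\rangle$, and, as in the proof of Lemma~\ref{lem:dregaff}, such a basis is obtained by row-reducing the Macaulay matrix of $(q_1,\dots,q_N)$ in the $k$ variables $x_1,\dots,x_k$ up to the largest degree $\delta$ reached during the grevlex computation; since the leading-form ideal is $0$-dimensional (the leading forms define only the origin when $k\le(n-r)(m-r)$, so there is no point at infinity), $\delta$ equals the degree of regularity of the leading-form ideal and every monomial of degree $\ge\delta$ lies in $\LM(\langle q_s\rangle)$. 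This Macaulay matrix has at most $\binom{\delta+k}{k}$ columns, at most $N\binom{\delta+k}{k}$ rows and rank at most $\binom{\delta+k}{k}$, so by the estimate used in the proof of Proposition~\ref{prop:complF5} it is reduced in $O\!\left(N\binom{\delta+k}{k}^{\omega}\right)$ operations; the change of ordering to lex is then performed by FGLM in $O\!\left(k\cdot\DEG(\varphi_{\mathbf a}(\mathcal I_r))^{3}\right)$ operations, the ideal being $0$-dimensional. Summing these two costs, and writing $\delta=\dreg(\varphi_{\mathbf a}(\mathcal I_r))$, yields the announced bound, so it remains only to estimate $\delta$ and $\DEG(\varphi_{\mathbf a}(\mathcal I_r))$.

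Since $\LM(\langle q_s\rangle)$ contains the leading monomials of the leading forms, both invariants are bounded above by the corresponding invariants of the homogeneous $\mathcal I_r$ in $k$ variables; the degree is moreover unchanged if one passes instead to the $(k+1)$-variable homogenization and invokes Lemmas~\ref{lem:degaff} and~\ref{lem:dregaff}, because the value at $t=1$ of the numerator of the Hilbert series of $\mathcal I_r$ does not depend on the number of variables (see the proof of Corollary~\ref{coro:degree}). If $k=(n-r)(m-r)$, the homogeneous $\mathcal I_r$ in $k$ variables is $0$-dimensional, so Lemma~\ref{lem:dregCompl} gives $\delta\le Dr(m-r)+(D-1)k+1$ and Corollary~\ref{coro:degree} gives the stated bound on $\DEG(\varphi_{\mathbf a}(\mathcal I_r))$ (alternatively, one may apply Lemma~\ref{lem:degposdim} to the $(k+1)$-variable homogenization, which is then positive-dimensional). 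If $k<(n-r)(m-r)$, then under Conjecture~\ref{conj:froberg} the homogeneous $\mathcal I_r$ in $k$ variables has, by Section~\ref{sec:overdef}, Hilbert series equal to the $P(t)$ of the statement, whence $\delta\le\deg(P(t))+1$ and $\DEG(\varphi_{\mathbf a}(\mathcal I_r))\le P(1)$. Finally, the admissible locus for $\mathbf a$ is the intersection of the finitely many non-empty Zariski open subsets produced along the way --- from the genericity statements of Theorems~\ref{theo:complHom} and~\ref{theo:complposdim} (i.e.\ Lemma~\ref{lem:LMgen}) applied to $\widetilde{\mathcal M}$ and to the leading-form matrix, and from the conditions $\deg(q_s)=D(r+1)$ and that the leading-form ideal be $0$-dimensional --- hence is non-empty and is cut out by the non-vanishing of a single $h\in\mathbb K[\mathfrak a]$.

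The main obstacle is the degree-of-regularity estimate in the over-determined case: one has to show that $\dreg$ of the \emph{affine} ideal is governed by the $k$-variable Hilbert series of $\mathcal I_r$, through the leading-form ideal, rather than by the coarser $(k+1)$-variable homogenization of Lemma~\ref{lem:dregaff} (which yields a strictly larger bound in general), and to keep track of the several layers of genericity as well as of the reliance on Fr\"oberg's conjecture. By contrast, the complexity transfer itself and the case $k=(n-r)(m-r)$ are routine given the earlier sections.
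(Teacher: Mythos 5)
Your proof is correct, and the overall skeleton (bound the degree of regularity and the degree of $\varphi_{\mathbf a}(\mathcal I_r)$ by reducing to a homogeneous generalized MinRank instance, then apply Proposition~\ref{prop:complF5} and the FGLM cost) is the same as the paper's. The route you take for the key estimates, however, is genuinely different and in fact sharper. The paper's proof is a one-line appeal to the homogenization with respect to a fresh variable $x_{k+1}$ via Lemmas~\ref{lem:degaff} and~\ref{lem:dregaff}: in the case $k=(n-r)(m-r)$ this works because the $(k+1)$-variable homogenized system is an instance of the positive-dimensional case and Corollary~\ref{coro:degposdim} returns exactly $Dr(m-r)+(D-1)k+1$. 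You instead work primarily with the ideal of top-degree forms of the minors, which for generic $\mathbf a$ is the $k$-variable homogeneous $\mathcal I_r$, and use the inclusion $\LM(\langle \mathrm{lf}(q_s)\rangle)\subseteq\LM(\langle q_s\rangle)$ valid for degree-compatible orders. This buys you something real in the over-determined case: there the bound stated in the theorem involves the truncated series with denominator $(1-t)^{k}$, whereas Lemma~\ref{lem:dregaff} only delivers the degree of the $(k+1)$-variable truncated series, which is in general strictly larger (passing from $(1-t)^k$ to $(1-t)^{k+1}$ replaces coefficients by their partial sums and so can only postpone the first non-positive coefficient). So your leading-form argument is not merely an alternative but is what actually establishes the bound as stated when $k<(n-r)(m-r)$; the only point you should make fully explicit is the justification that the grevlex computation on the affine system does not exceed the degree of regularity of the leading-form ideal (i.e.\ that the absence of solutions at infinity prevents degree falls from forcing the Macaulay matrix beyond that degree), which you assert by analogy with Lemma~\ref{lem:dregaff} but which rests on a slightly different mechanism.
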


\begin{proof}
  This is a direct consequence of Proposition \ref{prop:complF5},
  Lemma \ref{lem:degaff}, Lemma \ref{lem:dregaff} and the complexity
  of the FGLM algorithm \cite{FauGiaLazMor93,FauMou11} ($O(k\DEG(\varphi_{\mathbf a}(\mathcal I_r)^3)$).
\end{proof}

\section{Case studies}
\label{sec:cases}
The aim of this section is to compare the complexity of the grevlex
Gr\"obner basis computation with the degree of the ideal in the
$0$-dimensional case (i.e. the number of solutions of the MinRank
problem counted with multiplicities). Since the ``arithmetic'' size
(i.e. the number of coefficients) of the lexicographical Gr\"obner
basis is close to the degree of the ideal in the $0$-dimensional case,
it is interesting to identify families of parameters for which the
arithmetic complexity of the computation is polynomial in this degree
under genericity assumptions.

Throughout this section, we focus on the $0$-dimensional case:
$k=(n-r)(m-r)$. Under genericity assumptions, we recall that, by
Corollary \ref{coro:degree} and Lemma \ref{lem:dregCompl},
$$\begin{array}{rcl}\dreg&=&Dr(m-r)+(D-1)k+1\\
  \DEG&=&\displaystyle D^{(n-r)(m-r)} \prod_{i=0}^{m-r-1}\frac{i! (n+i)!}{(m-1-i)! (n-r+i)!}.\end{array}$$
According to Theorem \ref{theo:compl_affine}, the complexity of the computation of the grevlex Gr\"obner basis is then upper bounded by 
$$O\left(\binom{n}{r+1}\binom{m}{r+1}\binom{Dr(m-r)+(D-1) k+1}{k}^\omega + k\left(\DEG\left(\varphi_{\mathbf a}\left(\mathcal I_r\right)\right)\right)^3\right).$$

In this section, $\Omega$ and $O$ are the Landau notations: for any positive
functions $f$ and $g$, we write $f=\Omega(g)$ (resp. $f=O(g)$) if there exists a
positive constant $C$ such that $f\geq C\cdot g$ (resp. $f\leq C\cdot g$).

\subsection{$D$ grows, $n$, $m$, $r$ are fixed}
\label{sec:Dgrows}
We first study the case where $n$, $m$ and $r$ are fixed (and thus $k=(n-r)(m-r)$ is constant too), and $D$ grows.
In that case, the arithmetic complexity of the grevlex Gr\"obner basis computation is
$O(D^{k\omega})$, and the degree is $\Omega(D^k)$. Therefore the arithmetic complexity has a polynomial dependence in the degree for these parameters.

\subsection{$n$ grows, $m, r, D$ are fixed}
\label{sec:casesmrDconst}
This paragraph is devoted to the study of the subfamilies of Generalized
MinRank problems when the parameters $m$, $r$ and $D$ are constant
values and $n$ grows.  Let $\ell$ denote the constant value
$\ell=m-r$. First, we assume that $D=1$. When $n$ grows, by Corollary
\ref{coro:degree} we have
$$\begin{array}{rcl}\log(\DEG)&=&
\displaystyle\log\left( \prod_{i=0}^{\ell-1}\frac{\binom{n+\ell-1}{r+i}}{\binom{n+\ell-1}{i}}\right)\\
&\underset{n\rightarrow\infty}{\sim}&r\ell\log(n)
\end{array}$$

On the other hand,
$$\begin{array}{rcl}
\log(\Compl)&=&\displaystyle\omega\log\binom{(n-r)\ell+ r\ell+1}{(n-r)\ell}+\log\binom{n}{r+1}+\log\binom{m}{r+1}\\
&=&\displaystyle\omega\log\binom{n\ell+1}{r\ell+1}+\log\binom{n}{r+1}+\log\binom{m}{r+1}\\
&\underset{n\rightarrow\infty}{\sim}&\left(\omega (r\ell+1)+r+1\right)\log(n).
\end{array}$$
Therefore, $\displaystyle\log(\Compl)/\log(\DEG)\underset{n\rightarrow\infty}{\sim}\frac{\omega (r\ell+1)+r+1}{r\ell}$ and hence the number of arithmetic operations is polynomial in the degree of the ideal.
 
Also, if $D\geq 2$ is constant, 
a similar analysis yields
$$\begin{array}{rcl}\log(\DEG)&=&
\displaystyle(n-r)\ell\log(D)+\log\left( \prod_{i=0}^{\ell-1}\frac{\binom{n+\ell-1}{r+i}}{\binom{n+\ell-1}{i}}\right)\\
&\underset{n\rightarrow\infty}{\sim}&\log(D)\ell n. \\
\log(\Compl)&=&\displaystyle\omega\log\binom{k+ D r\ell+(D-1)k+1}{k}+\log\binom{n}{r+1}+\log\binom{m}{r+1}\\
&=&\displaystyle\omega\log\binom{D n \ell+1}{(n-r)\ell}+\log\binom{n}{r+1}+\log\binom{m}{r+1}\\
&\underset{n\rightarrow\infty}{\sim}&\displaystyle\omega\log\binom{D n \ell}{n\ell}.
\end{array}$$
Then, using the fact that $\displaystyle\binom{\alpha n}{\beta n}\underset{n\rightarrow\infty}{\sim} n\left(\alpha\log(\alpha)-\beta\log(\beta)-(\alpha-\beta)\log(\alpha-\beta)\right)$, we obtain that
$$\log(\Compl)\underset{n\rightarrow\infty}{\sim} n\omega\ell(D\log(D)-(D-1)\log(D-1)).$$

Therefore, $\log(\Compl)/\log(\DEG)$ is upper bounded by a constant value and hence the arithmetic complexity of the Gr\"obner basis computation is also polynomial in the degree of the ideal for this subclass of Generalized MinRank problems under genericity assumptions.

\subsection{The case $r=m-1$}
\label{sec:reqm1}
The case $r=m-1$ is a special case of the setting studied in Section \ref{sec:casesmrDconst} which arises in several applications, since it is
the problem of finding at which points the evaluation of a polynomial
matrix is rank defective. In this setting, the formulas in Theorem
\ref{theo:compl_affine} are much simpler:

\begin{itemize}
\item the $0$-dimensional condition yields $k=n-m+1$;
\item $\dreg\leq Dn-(n-m)$;
\item $\DEG\leq \displaystyle D^{n-m+1}\binom{n}{m-1}$.
\end{itemize}

Therefore, the arithmetic complexity of the Gr\"obner basis computation is 
$$\Compl=O(\binom{n}{m}\binom{Dn+1}{n-m+1}^\omega).$$

If $D>1$ and $m$ are fixed,
$\log\left(\binom{n}{m}\binom{Dn+1}{n-m+1}^\omega\right)\underset{n\rightarrow\infty}{\sim}
m\log(n)+\omega\log\binom{Dn}{n}$ and a direct application of
Stirling's formula shows that
$$\omega\log\binom{Dn}{n}\underset{n\rightarrow\infty}{\sim} \omega (D\log D- (D-1)\log(D-1)) n.$$

On the other hand, $\log(\DEG)\underset{n\rightarrow\infty}{\sim}
n\log D$. Therefore, $\log(\Compl)/\log(\DEG)$ has a finite limit when
$n$ grows and $m$ is fixed, showing that, in this setting, the
arithmetic complexity is polynomial in the degree of the ideal.

\subsection{Experimental results}
\label{sec:expe}
In this section, we present some experimental results obtained by
using the Gr\"obner bases package FGb (using the $F_5$ algorithm) and
the implementation of the $F_4$ algorithm in the \textsc{Magma}
computer algebra system \cite{BosCanPla97}. All instances were
constructed as random (with uniform distribution) 0-dimensional
MinRank problems (i.e. $n m -(n+m-r)r=k$) over the finite field
$\mathbb F_{65521}$. All experiments were conducted on a 2.93 GHz Intel Xeon 
with 132 GB RAM.

\begin{table}
\begin{tabular}{|c||@{}c@{}|@{}c@{}|@{}c@{}|@{}c@{}|@{}c@{}|@{}c@{}|@{}c@{}|}
\hline
(n,m,D,r,k)&~$\mathsf{DEG}$~~&~$\dreg$~
&$F_4$ time(Magma)& FGLM time(Magma)& $F_5$ time/nb.ops(FGb)& FGLM time(FGb)\\
\hline
\hline
(6,5,2,4,2)&60&11&0.001s&0.001s&0.00s/$2^{13.32}$&0.00s\\
(6,5,3,4,2)&135&17&0.002s&0.019s&0.00s/$2^{15.29}$&0.00s\\
(6,5,4,4,2)&240&23&0.004s&0.09s&0.01s/$2^{16.79}$&0.01s\\
(5,5,2,3,4)&800&17&0.25s&6.3s&0.24s/$2^{25.56}$&0.19s\\
(8,5,2,4,4)&1120&13&0.7s&20s&0.43s/$2^{26.71}$&0.58s\\
(5,5,3,3,4)&4050&27&6.7s&567s&5.43s/$2^{30.68}$&3s\\
(6,5,2,3,6)&11200&19&479s&17703s&94.85s/$2^{35.7}$&203s\\
\hline
\end{tabular}
\caption{Experimental results \label{table:experiments}}
\end{table}

Useful information can be read from Table~\ref{table:experiments}.
First, the experimental values of the degree of regularity and of the
degree match exactly the theoretical values given in Lemma~\ref{lem:dregCompl} and in Corollary~\ref{coro:degree}.  Also, it can
be noted that the most relevant indicator of the complexity of the
Gr\"obner basis computation seems to be the degree of the ideal.

The comparison between the complexity bound and the degree of the
ideal is illustrated in Figures \ref{fig:D} and
\ref{fig:alphabeta}. First, Figure \ref{fig:D} shows that the bound on
the complexity of the Gr\"obner computation is polynomial in the
degree of the ideal when $D$ grows ($n=m=20$, $r=10$ fixed), since
$\log(\Compl_{\sf F_5})/\log(\DEG)$ is upper bounded by $5$. This is
in accordance with the analysis performed in Section \ref{sec:Dgrows}.

Then Figure \ref{fig:alphabeta} shows empirically that if
$m=\lfloor\beta n\rfloor$ and $r=\lfloor \alpha n\rfloor -1$ (with
$\alpha\leq\beta\leq 1$) and $n$ grows, then the complexity bound is
also polynomial in the degree of the ideal.

\begin{figure}
\includegraphics[width=0.6\linewidth]{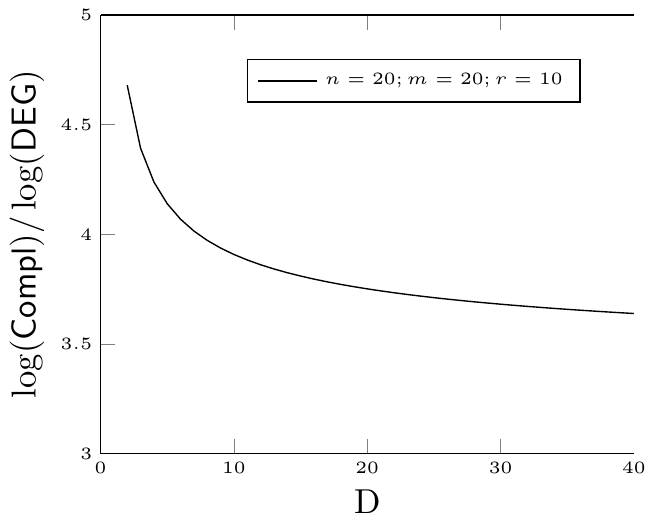}
\caption{Numerical values of $\log(\Compl_{\sf F_5})/\log(\DEG)$, for $n=m=20, r=10, k=(n-r)(m-r)$.}
\label{fig:D}
\end{figure}

\begin{figure}
\includegraphics[width=0.6\linewidth]{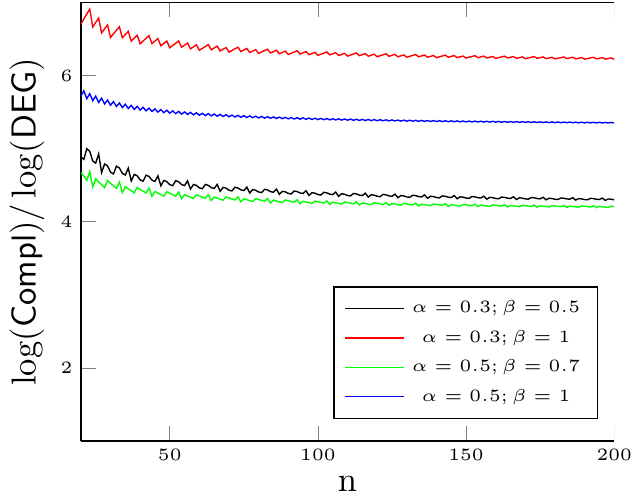}
\caption{Numerical values of $\log(\Compl_{\sf F_5})/\log(\DEG)$, for $m=\lfloor\beta n\rfloor, r=\lfloor\alpha n\rfloor-1, D=1, k=(n-r)(m-r)$.}
\label{fig:alphabeta}
\end{figure}

\medskip

However, there also exist families of generalized
MinRank problem where the complexity bound for the Gr\"obner basis
computation is \emph{not} polynomial in the degree of ideal. For
instance, taking $n=m$ and fixing the values of $r$ and $D$ yields such a family.

The experimental behavior of $\log(\Compl_{\sf F_5})/\log(\DEG)$ is
plotted in Figure \ref{fig:badcase}.  We would like to point out that
this does not necessarily mean that the complexity of the Gr\"obner
basis computation is not polynomial in the degree of the
ideal. Indeed, the complexity bound
$O\left(\binom{n}{r+1}\binom{m}{r+1}\binom{k+\dreg}{k}^\omega\right)$ is not sharp and the figure
only shows that the bound is not polynomial.

The problem of showing whether the actual arithmetic complexity of the $F_5$
algorithm is polynomial or not in the degree of the ideal for any
families of parameters of the generalized MinRank problem remains an
open problem.

\begin{figure}
\includegraphics[width=0.6\linewidth]{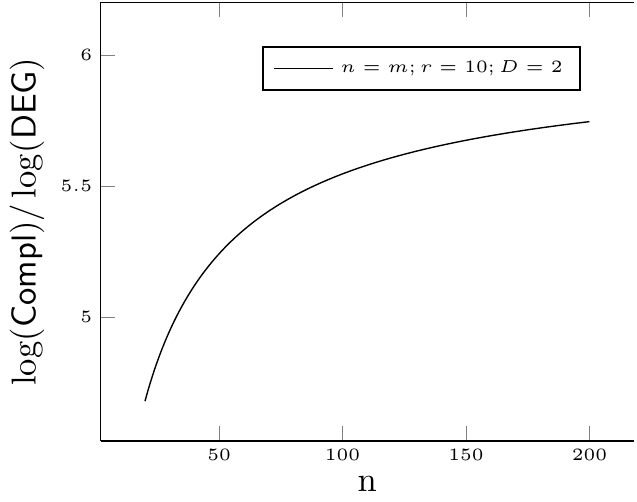}
\caption{Numerical values of $\log(\Compl_{\sf F_5})/\log(\DEG)$, for $m=\lfloor\beta n\rfloor, r=\lfloor\alpha n\rfloor-1, D=1, k=(n-r)(m-r)$.}
\label{fig:badcase}
\end{figure}

\section{Application to bi-homogeneous systems of bi-degree $(D,1)$}
\label{sec:bihomogeneous}

In this section, we show that the previous complexity analysis can be
used to obtain bounds on the complexity of solving bi-homogeneous
systems of bi-degree $(D,1)$ by using Gr\"obner bases algorithms.
These structured systems can appear naturally in some applications,
for instance in geometry and in optimization. Indeed the classical
technique of \emph{Lagrange multipliers} -- when used to optimize a
polynomial function under polynomial constraints -- gives rise to a
bi-homogeneous system of bi-degree $(D,1)$.

Bi-homogeneous polynomials are defined as follows: given two finite sets of variables $X=\{x_0,\ldots, x_{n_x}\}$ and $Y=\{y_0,\ldots, y_{n_y}\}$, a polynomial $f\in \mathbb K[X,Y]$ is called \emph{bi-homogeneous} if for any $\lambda,\mu\in \mathbb K$, there exist $d_x,d_y\in\mathbb N$ such that
$$f(\lambda X, \mu Y)=\lambda^{d_x}\mu^{d_y} f(X,Y).$$
The couple $(d_x, d_y)$ is called the \emph{bi-degree} of $f$.

In this section, we focus on generic systems of $n_x+n_y$
bi-homogeneous equations of bi-degree $(D,1)$. Such systems
have a finite number of solutions on the biprojective space $\mathbb
P^{n_x}\times \mathbb P^{n_y}$. One way to compute them is to start by computing their projection on $\mathbb P^{n_x}$, and then lift them to $\mathbb
P^{n_x}\times \mathbb P^{n_y}$ by solving linear systems (this can be done since the equations are linear with respect the variables $y_0,\ldots, y_{n_y}$).

The following proposition shows that computing the projection on $\mathbb P^{n_y}$ can be computed by solving a homogeneous MinRank problem.

\begin{prop}\label{prop:jacbihom}
Let $f_1,\ldots, f_{m}\in \mathbb K[X,Y]$ be a bi-homogeneous system of bi-degree $(D,1)$.
If $m>n_y$, then $(x_0:\ldots:x_{n_x}, y_0:\ldots:y_{n_y})\in \mathbb P^{n_x}\times\mathbb P^{n_y}$ is a zero of this system if and only if the matrix 
$$\jac_Y(x_0,\ldots, x_{n_x})=\begin{pmatrix}
\frac{\partial f_1}{\partial y_0}&\dots&\frac{\partial f_1}{\partial y_{n_y}}\\
\vdots&\vdots&\vdots\\
\frac{\partial f_m}{\partial y_0}&\dots&\frac{\partial f_m}{\partial y_{n_y}}
\end{pmatrix}$$
is rank defective.
\end{prop}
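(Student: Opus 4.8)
The plan is to derive the statement from Euler's identity applied to the block of variables $Y$. Since each $f_i$ is bi-homogeneous of bi-degree $(D,1)$, differentiating with respect to a variable $y_j$ leaves the $X$-degree unchanged and drops the $Y$-degree to $0$; hence every entry $\partial f_i/\partial y_j$ of $\jac_Y$ lies in $\mathbb K[X]$, which is exactly why $\jac_Y$ can be evaluated at a point $(x_0,\ldots,x_{n_x})$ alone. Euler's relation in the $Y$-variables gives $\sum_{j=0}^{n_y} y_j\,\partial f_i/\partial y_j = f_i$ for every $i$, which in matrix form is the identity
$$\jac_Y\cdot(y_0,\ldots,y_{n_y})^{\mathsf T}=\bigl(f_1,\ldots,f_m\bigr)^{\mathsf T}\quad\text{in }\mathbb K[X,Y]^m,$$
where $\jac_Y$ has entries in $\mathbb K[X]$.

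First I would prove the ``only if'' part by evaluating this identity at the point $(x_0:\cdots:x_{n_x},\,y_0:\cdots:y_{n_y})$. If it is a common zero, the right-hand side becomes the zero vector, so $(y_0,\ldots,y_{n_y})^{\mathsf T}$ belongs to the right kernel of $\jac_Y(x_0,\ldots,x_{n_x})$. A representative of a point of $\mathbb P^{n_y}$ is nonzero, so this kernel is nontrivial; since the matrix has $n_y+1$ columns and $m>n_y$ rows, a nontrivial right kernel means $\operatorname{rank}\jac_Y\le n_y$, i.e.\ $\jac_Y$ is rank defective.

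For the converse I would read the same identity backwards: if $\jac_Y(x_0,\ldots,x_{n_x})$ is rank defective, its right kernel contains a nonzero vector, whose coordinates $(y_0:\cdots:y_{n_y})$ define a point of $\mathbb P^{n_y}$, and substituting it into the identity forces $f_i(x,y)=0$ for all $i$, so $(x,y)$ is a common zero. The only points requiring care are bookkeeping ones: that ``rank defective'' for this non-square $m\times(n_y+1)$ matrix means $\operatorname{rank}\le n_y$, equivalently that all its $(n_y+1)$-minors vanish — which is precisely the homogeneous generalized MinRank problem with $r=n_y$ for a matrix of degree-$D$ forms in $n_x+1$ variables — and that passing between kernel membership and vanishing of the $f_i$ is licit because projective representatives are nonzero. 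I do not expect a genuine obstacle: the content is simply the observation that, by Euler, the right kernel of $\jac_Y(x_0,\ldots,x_{n_x})$ coincides with the space of admissible $Y$-coordinates of the zeros lying above a given $X$-point (generically one-dimensional, so that the lift $y$ is unique up to scalar).
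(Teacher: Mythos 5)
Your proof is correct and takes essentially the same route as the paper's: both rest on the identity $(f_1,\ldots,f_m)^{\mathsf T}=\jac_Y(x_0,\ldots,x_{n_x})\cdot(y_0,\ldots,y_{n_y})^{\mathsf T}$ (Euler's relation in the $Y$-block, i.e.\ linearity in $Y$) together with the equivalence, for an $m\times(n_y+1)$ matrix with $m>n_y$, between having a nontrivial right kernel and being rank defective. Your added bookkeeping about projective representatives being nonzero is a slight refinement of the paper's terser argument, not a different approach.
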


\begin{proof}
First, notice that
$$
\begin{pmatrix}
  f_1\\\vdots\\f_m
\end{pmatrix}=\jac_Y(x_0,\ldots, x_{n_x})\cdot
  \begin{pmatrix}y_0\\\vdots\\y_{n_y}
  \end{pmatrix}.$$ Therefore, $(x_0:\ldots:x_{n_x},
  y_0:\ldots:y_{n_y})\in \mathbb P^{n_x}\times\mathbb P^{n_y}$ is a zero of the system if and only if $(y_0,\ldots, y_{n_y})$ belongs to the
  kernel of $\jac_Y$. Since $m>n_y$, the number of rows is greater
  than or equal to the number of columns of $\jac_Y$, and hence
  $\jac_Y$ is rank defective.
\end{proof}

In applications, most of bi-homogeneous systems occurring are
\emph{affine}: A polynomial $f\in \mathbb K[x_1,\ldots,
x_{n_x},y_1,\ldots, y_{n_y}]$ is called affine of bi-degree $(D,1)$ if
there exists a bi-homoge\-neous polynomial $f^h\in \mathbb K[x_0,\ldots,
x_{n_x},y_0,\ldots, y_{n_y}]$ of bi-degree $(D,1)$ such that
$$f(x_1,\ldots, x_{n_x}, y_1,\ldots, y_{n_y})=f^h(1,x_1,\ldots, x_{n_x},1,y_1,\ldots, y_{n_y}).$$
This means that each monomial occurring in $f$ has bi-degree $(i,j)$
with $i\leq D$ and $j\leq 1$. Notice that the polynomial $f^h$ is
uniquely defined and that Proposition \ref{prop:jacbihom} also holds
in the affine context:

\begin{prop}\label{prop:jacbiaff}
  Let $f_1,\ldots, f_{m}\in \mathbb K[x_1,\ldots, x_{n_x},y_1,\ldots,
  y_{n_y}]$ be an affine system of bi-degree $(D,1)$.  If $m>n_y$ and
  $(x_1,\ldots,x_{n_x}, y_1,\ldots,y_{n_y})\in \mathbb
  K^{n_x}\times\mathbb K^{n_y}$ is a zero of the system, then the
  $m\times (n_y+1)$ matrix
$$\jac^a_Y(x_1,\ldots, x_{n_x})=\begin{pmatrix}
  f_1(x_1,\ldots, x_{n_x},0,\ldots, 0)&\frac{\partial f_1}{\partial y_1}&\dots&\frac{\partial f_1}{\partial y_{n_y}}\\
  \vdots&\vdots&\vdots\\
  f_m(x_1,\ldots, x_{n_x},0,\ldots, 0)&\frac{\partial f_m}{\partial
    y_0}&\dots&\frac{\partial f_m}{\partial y_{n_y}}
\end{pmatrix}$$
is rank defective.
\end{prop}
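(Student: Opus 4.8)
The plan is to reduce the statement to a single matrix identity, exactly as in the proof of Proposition~\ref{prop:jacbihom}, the only new ingredient being a dehomogenization step. First I would pass to the bi-homogeneous model: for each $i$, let $f_i^h\in\mathbb K[x_0,\ldots,x_{n_x},y_0,\ldots,y_{n_y}]$ be the unique bi-homogeneous polynomial of bi-degree $(D,1)$ with $f_i(x_1,\ldots,x_{n_x},y_1,\ldots,y_{n_y})=f_i^h(1,x_1,\ldots,x_{n_x},1,y_1,\ldots,y_{n_y})$. Since $f_i^h$ has degree $1$ in the $y$-block, it is \emph{linear} in $y_0,\ldots,y_{n_y}$, so $f_i^h=\sum_{j=0}^{n_y}y_j\,g_{i,j}$ where $g_{i,j}=\partial f_i^h/\partial y_j\in\mathbb K[x_0,\ldots,x_{n_x}]$ is homogeneous of degree $D$.

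Next I would dehomogenize this relation by setting $x_0=1$ and $y_0=1$, obtaining $f_i=g_{i,0}(1,x_1,\ldots,x_{n_x})+\sum_{j=1}^{n_y}y_j\,g_{i,j}(1,x_1,\ldots,x_{n_x})$. The bookkeeping step is to recognize the coefficients as the entries of $\jac^a_Y$: substituting $y_1=\cdots=y_{n_y}=0$ gives $g_{i,0}(1,x_1,\ldots,x_{n_x})=f_i(x_1,\ldots,x_{n_x},0,\ldots,0)$, while for $j\geq 1$, differentiating with respect to $y_j$ gives $g_{i,j}(1,x_1,\ldots,x_{n_x})=\partial f_i/\partial y_j$ (this is where linearity in the $y$-block is used: dehomogenization in $x_0$ commutes with $\partial/\partial y_j$ on a polynomial linear in $Y$). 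Stacking over $i$ then yields
$$\begin{pmatrix}f_1\\\vdots\\f_m\end{pmatrix}=\jac^a_Y(x_1,\ldots,x_{n_x})\cdot\begin{pmatrix}1\\y_1\\\vdots\\y_{n_y}\end{pmatrix}.$$

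Finally, if $(x_1,\ldots,x_{n_x},y_1,\ldots,y_{n_y})$ is a zero of the system, the left-hand side vanishes, so the vector $(1,y_1,\ldots,y_{n_y})^{\mathsf T}$ lies in the kernel of $\jac^a_Y(x_1,\ldots,x_{n_x})$; it is nonzero since its first coordinate is $1$. As $m>n_y$, the matrix $\jac^a_Y$ is $m\times(n_y+1)$ with at least as many rows as columns, so a nontrivial kernel forces its rank to be at most $n_y<\min(m,n_y+1)$, i.e. $\jac^a_Y$ is rank defective.

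I do not anticipate a genuine difficulty here; the only point to handle carefully is the identification $g_{i,j}(1,x_1,\ldots,x_{n_x})=\partial f_i/\partial y_j$, which hinges on the bi-degree in $Y$ being exactly $1$. Note also that, in contrast with Proposition~\ref{prop:jacbihom}, only this single implication can be proved: a point at which $\jac^a_Y$ drops rank need not admit a kernel vector whose first coordinate equals $1$, so there is no converse to establish.
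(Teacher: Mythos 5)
Your proof is correct and follows the same route as the paper: establish the identity $(f_1,\ldots,f_m)^{\mathsf T}=\jac^a_Y\cdot(1,y_1,\ldots,y_{n_y})^{\mathsf T}$, observe that a zero of the system puts the nonzero vector $(1,y_1,\ldots,y_{n_y})^{\mathsf T}$ in the kernel, and conclude from $m>n_y$. You merely spell out the verification of the identity (via bi-homogenization, which is a harmless detour since linearity in $Y$ gives it directly), and your closing remark about the failure of the converse matches the paper's own parenthetical.
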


\begin{proof}
The proof is similar to that of \ref{prop:jacbihom} since
$$
\begin{pmatrix}
  f_1\\\vdots\\f_m
\end{pmatrix}=\jac_Y^a(x_1,\ldots, x_{n_x})\cdot
  \begin{pmatrix}1\\y_1\\\vdots\\y_{n_y}
  \end{pmatrix}.$$ Therefore, if $(x_1,\ldots,x_{n_x},
  y_1,\ldots,y_{n_y})$ is a zero of the system then there is a non-zero vector in the
  kernel of $\jac_Y^a$ (however in the affine case, the converse is not true).
\end{proof}

An algebraic description of the variety $V$ of a $0$-dimensional polynomial system can be obtained by computing a rational parametrization, i.e. a polynomial $g(u)\in \mathbb K[u]$ and a set of rational functions $g_1,\ldots, g_{n_x},h_1,\ldots, h_{n_y}\in \mathbb K(u)$ such that
$$\begin{array}{c}(x_1,\ldots, x_{n_x},y_1,\ldots, y_{n_y})\in V\\ \Updownarrow\\ \exists u\in \mathbb K, s.t. g(u)=0, \forall i\in \{1,\ldots, n_x\}, x_i=g_i(u), \forall j\in \{1,\ldots, n_y\}, y_j=h_j(u).\end{array}$$

To obtain a rational parametrization, we need a separating element: a linear form which takes different values on all points of $V$. Therefore, a rational parametrization exists only if the cardinality of the field $\mathbb K$ is infinite or large enough. 

\begin{algorithm}
\caption{Rational parametrization of systems of bi-degree $(D,1)$}
\begin{algorithmic}[1]
  \Require $f_1,\ldots, f_{n_x+n_y}\in \mathbb K[X,Y]$ a system of
  affine polynomials of bi-degree $(D,1)$ such that the ideal they
  generate is radical and $0$-dimensional; \newline $(\alpha_1,\ldots, \alpha_{n_x-1})\in\mathbb K^{n_x-1}$; \newline a full rank matrix $M=(m_{i,j})\in \mathbb K^{ny\times(n_x+n_y)}$.  \Ensure Returns a rational
  parametrization of the variety of the system or ``fail''.  \State Compute for
  each $i\in \{1,\ldots, n_x+n_y\}$, $$\widetilde{f_i}(x_1,\ldots,
  x_{n_x-1},u,y_1,\ldots, y_{n_y})=f_i(x_1,\ldots,
  x_{n_x-1},u-\displaystyle\sum_{\ell=1}^{{n_x}-1}\alpha_\ell
  x_\ell,y_1,\ldots, y_{n_y}).$$ \State Compute the matrix
  $\jac^a_Y(\widetilde{f_1},\ldots, \widetilde{f_{n_x+n_y}})$.  \State
  Compute a lex Gr\"obner basis $G$ of the ideal $I\subset\mathbb
  K[x_1,\ldots, x_{n_x-1},u]$ generated by the maximal minors of the
  matrix $\jac^a_Y(\widetilde{f_1},\ldots,
  \widetilde{f_{n_x+n_y}})$. If the Gr\"obner basis has the following
  shape (the \emph{shape position}):
$$\begin{array}{r}
x_1-g_1(u)\\
x_2-g_2(u)\\
\vdots\\
x_{n_x-1}-g_{n_x-1}(u)\\
g(u),
\end{array}$$
then continue to Step 4, else return ``fail''.

\State Using $M$, compute a linear combination of the polynomials of the system evaluated at $(g_1(u),\ldots, g_{n_x-1}(u))$:
$$\begin{pmatrix}
\widehat{f_1}(y_1,\ldots, y_{n_y},u)\\
\vdots\\
\widehat{f_{n_y}}(y_1,\ldots, y_{n_y},u)
\end{pmatrix} = M\cdot \begin{pmatrix}
\widetilde{f_1}(g_1(u),\ldots, g_{n_x-1}(u), u,y_1,\ldots, y_{n_y})\mod g(u)\\
\vdots\\
\widetilde{f_{n_x+n_y}}(g_1(u),\ldots, g_{n_x-1}(u),u,y_1,\ldots, y_{n_y})\mod g(u)
\end{pmatrix}$$

\State If the linear system $\widehat{f_1}=\ldots=\widehat{f_{n_y}}=0$ has rank $n_y$ (as a linear system in
$\mathbb K(u)[Y]$ where the variables are $y_1,\ldots,y_{n_y}$), continue to Step 6, else return ``fail''.

\State Using Cramer's rule, solve the system
$\widehat{f_1}=\ldots=\widehat{f_{n_y}}=0$ as a linear system in
$\mathbb K(u)[Y]$. This yields rational functions $h_i(u)\in \mathbb
K(u)$ such that, for $i\in\{1,\ldots, n_y\}$, $y_i-h_i(u)=0$.  
\State
Return the rational parametrization
$$\begin{array}{cc}
g(u)=0&\\
x_1=g_1(u)&y_1=h_1(u)\\
\vdots&\vdots\\
x_{n_x-1}=g_{n_x-1}(u)&y_{n_y-1}=h_{n_y-1}(u)\\
x_{n_x}=u-\displaystyle\sum_{\ell=1}^{{n_x}-1}\alpha_\ell g_\ell(u)&y_{n_y}=h_{n_y}(u)
\end{array}$$
\end{algorithmic}\label{algo:bihom}
\end{algorithm}

Under the assumption that the field $\mathbb K$ is sufficiently large,
Algorithm \ref{algo:bihom} uses the property described in Proposition
\ref{prop:jacbiaff} to find a rational parametri\-zation of the zeroes
of a radical and $0$-dimensional system of $n_x+n_y$ affine
polynomials of bi-degree $(D,1)$. The algorithm proceeds by computing first a rational parametrization of the projection of the zero set on $\mathbb K^{n_x}$. This is done by computing a lexicographical Gr\"obner basis of a Generalized MinRank Problem. Then this parametrization is lifted to the whole space by solving a linear system (this can be done since the equations are linear with respect to the variables $y_1,\ldots, y_{n_y}$).

The success of Algorithm \ref{algo:bihom} depends on the choice of the
parameters $\alpha$ (a linear change of coordinates such that $x_n$ is
a separating element) and $M$.  However, as we will see in Theorem
\ref{theo:complbihom}, if the cardinality of $\mathbb K$ is infinite
or large enough, then almost all choices of $\alpha$ and $M$ are
good. Therefore, these parameters can be chosen at random.  If
Algorithm \ref{algo:bihom} unluckily fails, then it can be restarted
with the same algebraic system and different values of $\alpha$ and
$M$.

We now prove that the complexity of Algorithm \ref{algo:bihom} is
bounded by the complexity of the underlying generalized MinRank
problem and that most choices of $(\alpha_1,\ldots, \alpha_{n_x-1})$
and $M$ do not fail. 

\begin{thm}\label{theo:complbihom}
  Let $f_1,\ldots, f_{n_x+n_y}\in \mathbb K[X,Y]$ be an affine system
  of bi-degree $(D,1)$ such that the ideal $\langle f_1,\ldots,
  f_{n_x+n_y}\rangle$ is radical and $0$-dimensional. Then there
  exists non-identically null polynomials $h_1\in\mathbb K[z_1,\ldots,
  z_{n_x-1}]$ and $h_2\in\mathbb K[z_{1,1},\ldots, z_{n_y,n_x+n_y}]$ such
  that, for any choice of $(\alpha_1,\ldots, \alpha_{n_x-1})$ and
  $M=(m_{i,j})\in\mathbb K^{n_y\times (n_x+n_y)}$ verifying:
\begin{itemize}
\item the matrix $\jac^a_Y(\widetilde{f_1},\ldots, \widetilde{f_{n_x+n_y}})$ verifies the conditions of Theorem \ref{theo:compl_affine};
\item $h_1(\alpha_1,\ldots, \alpha_{n_x-1})h_2(m_{1,1},\ldots, m_{n_y,n_x+n_y})\neq 0$,
\end{itemize}
Algorithm \ref{algo:bihom} returns a rational parametrization of the variety of the system and its complexity is upper bounded by 
$$O\left(\binom{n_x+n_y}{n_x-1}\binom{D(n_x+n_y)+1}{n_x}^\omega + n_x\left(D^{n_x}\binom{n_x+n_y}{n_x}\right)^3\right).$$
\end{thm}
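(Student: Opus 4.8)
The plan is to analyze Algorithm \ref{algo:bihom} step by step, first establishing correctness under genericity of $(\alpha_1,\ldots,\alpha_{n_x-1})$ and $M$, and then bounding the arithmetic cost of each step using the complexity results for the generalized MinRank problem established in Section \ref{sec:compl}. For correctness, I would first argue that a generic choice of $(\alpha_1,\ldots,\alpha_{n_x-1})$ makes $u=x_{n_x}+\sum_\ell \alpha_\ell x_\ell$ a separating element for the (finite) projection of the zero set onto $\mathbb K^{n_x}$: since the system is radical and $0$-dimensional, there are finitely many points, and avoiding the finitely many hyperplanes on which two projected points collide defines a non-empty Zariski open condition, hence a non-identically-null polynomial $h_1\in\mathbb K[z_1,\ldots,z_{n_x-1}]$. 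Under this condition the lex Gr\"obner basis $G$ of the MinRank ideal $I$ generated by the maximal minors of $\jac^a_Y(\widetilde f_1,\ldots,\widetilde f_{n_x+n_y})$ is in shape position, so Step 3 does not fail. Here I use Proposition \ref{prop:jacbiaff}: every zero of the bi-degree $(D,1)$ system projects to a point where $\jac^a_Y$ is rank defective, so the projection of $V$ is contained in $V(I)$; the radicality and dimension-$0$ hypotheses, together with genericity, give the reverse inclusion up to the separating change of coordinates.

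Next, for Steps 4--6, I would show that for a generic full-rank matrix $M$ the linear system $\widehat f_1=\cdots=\widehat f_{n_y}=0$ in $\mathbb K(u)[Y]$ has rank exactly $n_y$. The key point is that the original $n_x+n_y$ equations, restricted to the curve parametrized by $(g_1(u),\ldots,g_{n_x-1}(u),u)$ modulo $g(u)$, are linear in $y_1,\ldots,y_{n_y}$ (bi-degree $(D,1)$), and over the points of $V$ their common solution in $Y$ is unique; hence the $(n_x+n_y)\times(n_y+1)$ matrix of these linear forms has rank $n_y$ generically over $\mathbb K(u)$, and a generic $M$ extracts $n_y$ independent rows. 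This again yields a non-identically-null polynomial $h_2\in\mathbb K[z_{1,1},\ldots,z_{n_y,n_x+n_y}]$ as claimed. Cramer's rule in Step 6 then produces the rational functions $h_i(u)$, and combining with $g$, $g_1,\ldots,g_{n_x-1}$ and $x_{n_x}=u-\sum_\ell\alpha_\ell g_\ell(u)$ gives a valid rational parametrization of $V$.

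For the complexity bound, the dominant step is Step 3: computing a lex Gr\"obner basis of the MinRank ideal $I$ attached to the $(n_x+n_y)\times(n_y+1)$ matrix $\jac^a_Y$, whose entries are affine of degree $D$ in the $n_x$ variables $x_1,\ldots,x_{n_x-1},u$. Applying Theorem \ref{theo:compl_affine} with $n\leftarrow n_x+n_y$, $m\leftarrow n_y+1$, $r\leftarrow n_y$ (so that $r=m-1$, the case of Section \ref{sec:reqm1}) and $k\leftarrow n_x=(n-r)(m-r)$, the $0$-dimensional condition holds, the degree of regularity is bounded by $Dr(m-r)+(D-1)k+1=D n_y+ (D-1)n_x+1\le D(n_x+n_y)+1$, and the degree is bounded by $D^{n_x}\binom{n_x+n_y}{n_x}$; the number of maximal minors is $\binom{n_x+n_y}{n_y+1}=\binom{n_x+n_y}{n_x-1}$. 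Plugging these into the complexity estimate of Theorem \ref{theo:compl_affine} yields the $F_5$ term $O\!\left(\binom{n_x+n_y}{n_x-1}\binom{D(n_x+n_y)+1}{n_x}^\omega\right)$ and the FGLM term $O\!\left(n_x\bigl(D^{n_x}\binom{n_x+n_y}{n_x}\bigr)^3\right)$. The remaining steps (forming $\widetilde f_i$ by a linear substitution, building the Jacobian, reducing the $f_i$ modulo $g(u)$, the linear-algebra solve in Step 6 by Cramer) all cost polynomially in $n_x,n_y,D$ and in $\deg g\le D^{n_x}\binom{n_x+n_y}{n_x}$, hence are absorbed into the FGLM term. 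Summing gives the stated bound.

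The main obstacle I expect is the correctness argument for Step 5---precisely, proving that the lifting linear system has full rank $n_y$ over $\mathbb K(u)$ for generic $M$, rather than only pointwise over $V$. This requires care because the parametrization $g_i(u)$ is itself produced by the algorithm from the MinRank ideal, so one must check that the generic-rank property of the $Y$-linear part of the system is not destroyed by the substitution and reduction modulo $g(u)$; the cleanest route is to observe that specializing $u$ at a root of $g$ recovers an honest point of $V$ where uniqueness of the $Y$-solution forces rank $n_y$, and then invoke that a polynomial matrix over $\mathbb K[u]/(g)$ whose specialization at every root has rank $n_y$ must have rank $n_y$ over the total ring of fractions, since $g$ is squarefree (radicality of the original ideal).
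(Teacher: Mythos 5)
Your proposal is correct and follows essentially the same route as the paper: reduce Step 3 to the affine generalized MinRank instance with $n=n_x+n_y$, $m=n_y+1$, $r=n_y$, $k=n_x$ (the $r=m-1$, $0$-dimensional case), apply Theorem \ref{theo:compl_affine} to get the $F_5$ and FGLM terms, and absorb the substitution, modular evaluation, and Cramer steps into those bounds. The only notable difference is your justification of $h_2$ (specializing $u$ at the roots of $g$ and using squarefreeness) versus the paper's extraction of a nonzero coefficient of $u^\beta$ from the symbolic determinant in $\mathbb K[z_{1,1},\ldots,z_{n_y,n_x+n_y}][u]$; both are valid, and the rest of your argument (separating element for $h_1$ via radicality and $0$-dimensionality, shape position, and the parameter bookkeeping $\dreg\le D(n_x+n_y)+1$, $\DEG\le D^{n_x}\binom{n_x+n_y}{n_x}$, $\binom{n_x+n_y}{n_y+1}$ minors) matches the paper's proof.
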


\begin{proof}
In this proof, $\widetilde O()$ stands for the \emph{soft-Oh} notation: if $f$ and $g$ are positive functions, $f=\widetilde O(g)$ means that there exists $k\in\mathbb N$ such that $f=O(g\cdot \log^k(g))$.
  Let $I$ denote the ideal generated by $f_1,\ldots, f_{n_x+n_y}$.
  According to \cite{Lak90,BecMorMarTra94}, for any radical
  $0$-dimensional ideal, there exists a polynomial $h_1$ such that if
  $h_1(\alpha_1,\ldots, \alpha_{n_x-1})\neq 0$, then the system is in shape position after
  the change of coordinates 
$$x_{n_x}\mapsto x_{n_x}-\sum_{\ell=1}^{n_x-1} \alpha_\ell x_\ell.$$

The polynomial $h_2$ is
  chosen such that if $h_2(m_{i,j})\neq 0$, then the linear system
  $\widehat{f_1}=\dots=\widehat{f_{n_y}}=0$ in $\mathbb K(u)[Y]$ has
  rank exactly $n_y$. Consider now the following linear system (where the variables are $y_1,\ldots, y_{n_y}$):
  $$\begin{pmatrix}z_{1,1}&\dots&z_{1,n_x+n_y}\\\vdots&\vdots&\vdots\\
z_{n_y,1}&\dots&z_{n_y,n_x+n_y}\end{pmatrix}\cdot \begin{pmatrix}
\widetilde{f_1}(g_1(u),\ldots, g_{n_x-1}(u), u,y_1,\ldots, y_{n_y})\mod g(u)\\
\vdots\\
\widetilde{f_{n_x+n_y}}(g_1(u),\ldots, g_{n_x-1}(u),u,y_1,\ldots, y_{n_y})\mod g(u)
\end{pmatrix}=0.$$ Its determinant (which lies in $\mathbb
K[z_{1,1},\ldots, z_{n_y,n_x+n_y},u]$) is not zero since the ideal
generated by the input system $(f_1,\ldots, f_{n_x+n_y})$ is
$0$-dimensional and proper. By considering this determinant as a polynomial in $\mathbb K[z_{1,1},\ldots, z_{n_y,n_x+n_y}][u]$, the polynomial $h_2\in \mathbb K[z_{1,1},\ldots, z_{n_y,n_x+n_y}]$ is chosen as a non-zero coefficient of a term $u^\beta$. Consequently, the algorithm does not fail if $h_1(\alpha_1,\ldots, \alpha_{n_x-1})\neq 0$ and $h_2(m_{i,j})\neq 0$.

Now we proceed with the complexity analysis:
\begin{itemize}
\item the complexity of the substitution step to compute the polynomials $\widetilde{f_i}$ is upper bounded by $\widetilde O((n_x+n_y)D n_x n_y)$.
\item By Theorem \ref{theo:compl_affine}, the complexity of the Gr\"obner basis computation is upper bounded by $$O\left(\binom{n_x+n_y}{n_x-1}\binom{D(n_x+n_y)+1}{n_x}^\omega + n_x\left(\DEG(I)\right)^3\right).$$
\item Since $\deg(g_{n_x})\leq \DEG(I)$, a monomial $u^{n_x}\prod_{i=1}^{n_x-1} x_i^{\alpha_i}$ of degree $D$ can be
  evaluated in the univariate polynomials $(g_1(u),\dots, g_{n_x-1}(u))$ modulo $g(u)$
  in complexity $\widetilde O(D\DEG(I))$ by using a subproduct tree
  \cite{BosSch05}, quasi-linear multiplication of univariate
  polynomials and quasi-linear modular reduction. Since there are at most
  $(n_x+n_y)(n_y+1)\binom{n_x+D}{n_x}$ such monomials in the system
  $f_1,\ldots, f_{n_x+n_y}$, the Step 4 of the Algorithm needs at most
  $$\widetilde O\left((n_x+n_y)n_y\binom{n_x+D}{n_x}D \DEG(I)\right)$$
  arithmetic operations in $\mathbb K$. \\Notice that
  $n_x+n_y\leq \binom{n_x+n_y}{n_x-1}$ and $\DEG(I)\leq
  \binom{D(n_x+n_y)+1}{n_x}$.
\begin{itemize}
\item If $D\geq 2$:
  for any $a, b, c\in \mathbb N$ such
  that $b<a$, $\binom{a}{b}c\leq \binom{a+c}{b}$. Therefore, $D
  n_y\binom{n_x+D}{n_x}\leq\binom{n_x+n_y+ 2 D}{n_x}$. Also, notice
  that, for $D\geq 2$ and for any $n_x, n_y$ such that $n_x n_y>1$,
  $n_x+n_y+2 D\leq D(n_x+n_y)+1$.
  Therefore,
$$\widetilde O\left((n_x+n_y)n_y\binom{n_x+D}{n_x}D \DEG(I)\right)\leq \widetilde O\left(\binom{n_x+n_y}{n_x-1}\binom{D(n_x+n_y)+1}{n_x}^2\right).$$
\item If $D=1$: $(n_x+n_y)n_y\binom{n_x+1}{n_x}=(n_x+n_y)n_y n_x$ is bounded by $\binom{n_x+n_y}{n_x-1}\binom{(n_x+n_y)+1}{n_x}$.
\end{itemize}
Therefore, the complexity of the Step 4 of Algorithm \ref{algo:bihom} is upper bounded by the complexity of the Gr\"obner basis computation: $O\left(\binom{n_x+n_y}{n_x-1}\binom{D(n_x+n_y)+1}{n_x}^\omega\right).$

\item To solve the linear system by using Cramer's rule, we need to
  compute $n_x+1$ determinants of $(n_x\times n_x)$-matrices whose
  entries are univariate polynomials of degree $D$. This can be
  achieved by using a fast evaluation-interpolation strategy with
  complexity $\widetilde O\left(D n_x^{\omega +1}\right)$ (since multi-set evaluation and interpolation of univariate polynomials can be done in quasi-linear time, see e.g. \cite{BosSch05}).
\end{itemize}
Since $\DEG(I)$ is bounded by $D^{n_x}\binom{n_x+n_y}{n_x}$, the sum of all these complexities is upper bounded by 
$$O\left(\binom{n_x+n_y}{n_x-1}\binom{D(n_x+n_y)+1}{n_x}^\omega + n_x\left(D^{n_x}\binom{n_x+n_y}{n_x}\right)^3\right).$$
\end{proof}

\begin{rem} According to \cite[Lemma 15]{FauSafSpa11} and \cite[Lemma
  16]{FauSafSpa11}, if $D=1$, there exists a non-empty Zariski open
  subset $O_1$ of the set of systems of bi-degree $(1,1)$, such that any system $(f_1,\ldots, f_{n_x+n_y})\in O_1$ is $0$-dimensional and
  radical. This statement also holds for systems of bi-degree $(D,1)$ with $D\in \mathbb N$, and the proof is similar.
\end{rem}

\subsection*{Acknowledgments}
This work was supported in part by the HPAC grant and the GeoLMI grant
(ANR 2011 BS03 011 06) of the French National Research Agency.  The
second author is member of the Institut Universitaire de France. We
wish to thank anonymous referees for their comments and suggestions.

\bibliographystyle{abbrv}
\bibliography{biblio}
\end{document}